\documentclass[times,sort&compress,3p]{elsarticle}

\usepackage[labelfont=bf]{caption}

\usepackage{amsmath,amsfonts,amssymb,amsthm,booktabs,color,epsfig,graphicx,hyperref,url}

\theoremstyle{plain}
\newtheorem{theorem}{Theorem}

\newtheorem{lemma}{Lemma}

\usepackage{subcaption}

\theoremstyle{definition}

\newtheorem{remark}{Remark}

\usepackage{multirow}
\usepackage{xcolor}
\usepackage{lscape}
\usepackage{enumitem}
\usepackage{algorithm,algpseudocode}

\usepackage{xr}
\makeatletter
\newcommand*{\addFileDependency}[1]{
  \typeout{(#1)}
  \@addtofilelist{#1}
  \IfFileExists{#1}{}{\typeout{No file #1.}}
}
\makeatother

\newcommand{\bA}{\mathbf{A}}
\newcommand{\bB}{\mathbf{B}}
\newcommand{\bC}{\mathbf{C}}
\newcommand{\bD}{\mathbf{D}}

\newcommand{\bI}{\mathbf{I}}
\newcommand{\bJ}{\mathbf{J}}

\newcommand{\bM}{\mathbf{M}}

\newcommand{\bP}{\mathbf{P}}
\newcommand{\bQ}{\mathbf{Q}}
\newcommand{\bR}{\mathbf{R}}

\newcommand{\bV}{\mathbf{V}}
\newcommand{\bW}{\mathbf{W}}
\newcommand{\bX}{\mathbf{X}}


\newcommand{\bd}{\mathbf{d}}
\newcommand{\be}{\mathbf{e}}

\newcommand{\bg}{\mathbf{g}}

\newcommand{\bx}{\mathbf{x}}
\newcommand{\by}{\mathbf{y}}


\newcommand{\sC}{\mathcal{C}}

\newcommand{\sF}{\mathcal{F}}
\newcommand{\sG}{\mathcal{G}}
\newcommand{\sH}{\mathcal{H}}

\newcommand{\sL}{\mathcal{L}}
\newcommand{\sM}{\mathcal{M}}

\newcommand{\sP}{\mathcal{P}}
\newcommand{\sQ}{\mathcal{Q}}

\newcommand{\sS}{\mathcal{S}}
\newcommand{\sT}{\mathcal{T}}












\newcommand{\E}{\mathbb{E}}
\newcommand{\Var}{\text{Var}}
\newcommand{\cov}{\text{cov}}
\newcommand{\tp}{\text{T}}

\newcommand{\AMSE}{\text{AMSE}}
\newcommand{\AB}{\text{AB}}

\newcommand{\MSE}{\text{MSE}}

\newcommand{\PM}{\text{PM}}
\newcommand{\SO}{\text{SO}}
\newcommand{\NO}{\text{NO}}
\newcommand{\CO}{\text{CO}}


\newcommand{\bbeta}{\boldsymbol{\beta}}

\newcommand{\bxi}{\boldsymbol{\xi}}
\newcommand{\bmu}{\boldsymbol{\mu}}

\newcommand{\bphi}{\boldsymbol{\phi}}

\newcommand{\bSigma}{\boldsymbol{\Sigma}}

\newcommand{\bPhi}{\boldsymbol{\Phi}}

\newcommand{\bLambda}{\boldsymbol{\Lambda}}

\newcommand*{\rttensor}[1]{\overline{\overline{#1}}}

\newcommand\numberthis{\addtocounter{equation}{1}\tag{\theequation}}

\begin{document}

\begin{frontmatter}

\title{Quadratic inference with dense functional responses}

\author[1]{Pratim Guha Niyogi\corref{mycorrespondingauthor}}
\author[2]{Ping-Shou Zhong}

\address[1]{Department of Biostatistics, Johns Hopkins Bloomberg School of Public Health, Maryland, USA}
\address[2]{Department of Mathematics, Statistics and Computer Science, University of Illinois at Chicago, Illinois, USA}

\cortext[mycorrespondingauthor]{Corresponding author. Email address: \url{pnyogi1@jhmi.edu}}

\begin{abstract}
We address the challenge of estimation in the context of constant linear effect models with
dense functional responses. In this framework, the conditional expectation of the response curve is represented by a linear combination of functional covariates with constant regression parameters. In this paper, we present an alternative solution by employing the quadratic inference approach, a well-established method for analyzing correlated data, to estimate the regression coefficients. Our approach leverages non-parametrically estimated basis functions, eliminating the need for choosing working correlation structures. Furthermore, we demonstrate that our method achieves a parametric $\sqrt{n}$-convergence rate, contingent on an appropriate choice of bandwidth. This convergence is observed when the number of repeated measurements per trajectory exceeds a certain threshold, specifically, when it surpasses $n^{a_{0}}$, with $n$ representing the number of trajectories. Additionally, we establish the asymptotic normality of the resulting estimator. The performance of the proposed method is compared with that of existing methods through extensive simulation studies, where our proposed method outperforms. Real data analysis is also conducted to demonstrate the proposed method.
\end{abstract}

\begin{keyword} 
Constant Linear-Effect Models\sep Functional Principal Component Analysis\sep Quadratic Inference\sep Semi-parametric Functional Regression.
 \MSC[2020] Primary 62G20 \sep
 Secondary 62H99
\end{keyword}

\end{frontmatter}

\section{Introduction}
\label{Chapter2-qif-Sec:introduction}
Longitudinal data analysis (LDA) involves tracking repeated measurements on the same individuals over time, allowing us to study changes over time and identify influencing factors. Unlike cross-sectional studies, which capture only ``between-individual'' responses at a single time point, LDA can capture ``within-individual" changes through repeated measurements. Longitudinal data is often observed in clusters, with each cluster representing measurements from one individual. As data complexity increases with advancing technology, functional data analysis (FDA) has become a vital tool, extending our understanding from finite to infinite dimensions.
In LDA, data are generally observed with noise for measurements at each time-point \citep{diggle2002analysis, hand2017practical}.
Moreover, a few repeated measurements are required in LDA, and the data are observed sparsely with noise. On the other hand, in FDA, data are densely observed as a continuous-time stochastic process without or with noise \citep{zhang2016sparse, li2022multivariate}. 
Often, the sampling plan can have an effect on the performance of the estimation procedures and inference \citep{hall2006properties}. In some situations, data are typically functions by nature and are observed densely over time. 
\citet{chiou2003functional} proposed a class of semi-parametric functional regression models to describe the influence of vector-valued covariates on a sample of the response curve. 
When data collection leads to experimental error, smoothing is performed at closely spaced time-points in order to reduce the effect of noise. 
The current developments of functional regression techniques have been rigorously studied in \citet{fan1999statistical, hall2007methodology,  chen2019functional}.
See a more recent and complete review in \citet{li2022multivariate}.
\par
The generalized estimating equation (GEE) technique proposed by \citet{liang1986longitudinal} has been extensively used in LDA
for estimation of parameters.
Although an efficient technique, the GEE may not be efficient when the correlation matrix is not correctly specified.
Hence, without requiring the estimation of the correlation parameters,
the quadratic inference function (QIF) approach proposed by \citet{qu2000improving} is useful for parameter estimation in longitudinal studies \citep{diggle2002analysis} and cluster randomized trials \citep{turner2017review}.
By representing the inverse of the working correlation matrix in terms of linear combinations of the basis matrices and involving multiple sets of score functions, the QIF approach has improved efficiency over GEE when the working correlation matrix is not specified correctly. 
Although it maintains the same efficiency as in the situation where the working correlation matrix is specified correctly, the QIF method is not independent of the choice of the working correlation matrix. 
A QIF method-based approach to varying-coefficient models for longitudinal data was proposed by \citet{qu2006quadratic}.
The related work of \citet{bai2008partial} is an extension of QIF for the partial linear model. 
An alternative method was presented in \citet{yu2020note}  where each set of score equations was solved separately and their solutions were combined afterward;
thereby providing results on inference for an optimally weighted estimator and extending those insights to the general setting with over-identified estimating equations.
\par
The fundamental limitations that all the above-mentioned powerful techniques suffer from are: (1) all the above methodologies require prior information on the working correlation structure or choose appropriate basis matrices, and (2) the performance of the classical QIF approach is unknown for dense functional data.
Our study is motivated by problems from multiple real data applications that involve dense functional data when information on the working correlation structure is lacking. 
Let us discuss two motivating examples that we will use to illustrate the proposed method in this paper (see Section \ref{Chapter2-qif-Sec:real-data} for details). 
In the \texttt{Beijing2017-data} example, particulate matter (PM) with a diameter of 
less than 2.5 micrometers is collected over different time-points in different 
locations in China. 
Scientists are interested in knowing the linear dependence of the pollution factor $\PM_{2.5}$ with other atmospheric chemicals \citep{liang2015assessing}. 
Figure \ref{Chapter2-qif-Fig:China} (left) in Section \ref{Chapter2-qif-Sec:real-data} pictorially demonstrates the readings of $\PM_{2.5}$ for the given locations over several hourly time-points; therefore, dense functional data analysis can be implemented. 
In another example from a neuroimaging study \texttt{Apnea-data}, scientists are interested in modeling the change of white matter structure among voxels in each region of interest (ROI) of the human brain. \citet{xiong2017brain} investigated white matter structural alterations using diffusion tensor imaging (DTI) in obstructive sleep apnea (OSA) patients. Here, the change of DTI parameters such as fractional anisotropy (FA) is investigated by using {\it constant linear effect models} { (see subsection 2.1 for a formal definition)} with the interaction of the count of lapses obtained from the Psychomotor Vigilance Task and voxel locations as predictors. We applied this model to each ROI and compared the results obtained from all the ROIs. 
\par
We propose a data-driven way to select the working covariance matrix and express the inverse of the covariance function in terms of the empirical eigen-functions of the covariance operator. 
The covariance operator can be estimated as in  \citet{hsing2015theoretical}
and other related methods based on functional principal component analysis (FPCA) as found in \citet{dauxois1982asymptotic,yao2005functional,hall2006properties,hall2007methodology, li2010uniform, zhang2022unified}.
Note that the estimation of the eigen-functions is nonparametric and it introduces some errors in the proposed estimation method. 
In this article, we answer the following question: \textit{while we estimate the eigen-functions nonparametrically from data, 
is the estimation of coefficient vectors in a semi-parametric problem $\sqrt{n}$- consistent in dense functional data, and can we achieve asymptotic normality?} The advantages of our proposed method are the following: First, our method preserves the good properties of the QIF method and is easier to implement since the eigen-functions can be estimated using the existing packages in statistical software such as $\textsf{R}$. Second, under some mild conditions, our proposed estimator can obtain the optimal convergence rate and is asymptotically normally distributed with less variance as compared to the classical QIF methods. Third, asymptotic results show the estimation accuracy of the coefficients in semi-parametric functional model, therefore, making the influence of the dimension reduction step using FPCA redundant. The error in the estimation of the eigen-functions contributes to the error in the estimation of the parameters. Under some mild bandwidth conditions, the above-mentioned error contribution is of the same order of magnitude as an error in parameter estimation if eigen-functions are known in advance.
\par
The rest of the paper is organized as follows. 
In Section \ref{Chapter2-qif-Sec:model}, we introduce the basic concept of QIF along with our proposed method. 
The asymptotic results for the proposed estimator are presented in Section \ref{Chapter2-qif-Sec:asymptotics}.
In Section \ref{Chapter2-qif-Sec:simulation}, we demonstrate the performance for finite samples. We also apply the proposed method to real data-sets in Section \ref{Chapter2-qif-Sec:real-data}. 
We conclude with some remarks in Section \ref{Chapter2-qif-Sec:discussion}. 
All technical proofs and additional tables from simulation results are presented in the Appendix of the article.

\section{Functional response model and estimation procedure}
\label{Chapter2-qif-Sec:model}
\subsection{Basic model}
To analyze longitudinal data, a straightforward application of a generalized linear model \citep{mccullagh1989generalized} for single response variables is not applicable due to the lack of independence between repeated measures. 
To account for the high correlation in the longitudinal data, some special techniques are required. A seminal work by \citet{liang1986longitudinal} proposed the use of generalized linear models for the analysis of longitudinal data. The model we consider in this article is commonly observed in spatial modeling, where associations among variables do not change over the functional domain (see \citet{zhang2021spatial} and references therein); which is termed as {\it constant linear effects model}.  In neuroimaging studies, constant linear effects model is a popular choice for the region of interest analysis (see \citet{ xiong2017brain, penny2011statistical, friston1994statistical, lindquist2008statistical}) due to its easy and practical interpretations of the constant coefficients. In this paper, the variable \textit{time} is used as a functional domain variable. 
\par
Let $y(t)$ be the response variable at time-point $t$ and 
$\bx(t)$ be $p$-dimensional covariates observed at time $t \in \sT$
where without loss of generality we assume $\sT = \left[ 0, 1\right]$, is the spectrum of the time-points. 
The stochastic process $y(t)$ is square-integrable with marginal mean $\E\{y(t)\ | \;\bx(t)\}$ and finite covariance function; 
the regression parameter $\bbeta$ is unknown and is to be efficiently estimated.
Thus, the constant linear effects model with longitudinal data have the following expression.
\begin{equation}
    y(t) = \bx(t)^{\tp}\bbeta + e(t),
    \label{secton21:linearmodel}
\end{equation}
where the stochastic process $y(t)$ is decomposed into two parts: one is the mean function $\mu(t) = \bx(t)^{\tp}\bbeta$ that depends on time-varying covariates and vector-valued coefficient vector $\bbeta$, and the other is the random error part $e(t)$ where $\E\{e(t)\} = 0$ and has finite second-order covariance $R(s,t)$. 
Let $y_{i}(t)$ be i.i.d. copies of the stochastic process and for each individual, the measurements are taken on $m_{i}$ discrete time-points $T_{ij}$
for $j = 1, \cdots, m_{i}; i = 1, \cdots, n$. 
Therefore, at time $T_{ij}$, we observe a $m_{i} \times 1$ response vector $y_{i}(T_{ij})$ and corresponding covariates $\bx_{i}(T_{ij})$ for the $i$-th subject. 
We assume that $m_{i}$'s are all of the same order as $m = n^{a}$ for some $a \geq 0$, thus $m_{i}/m$ is bounded below and above by some constants. 
Functional data are considered to be sparse depending on the choice of $a$ \citep{hall2006properties}.
Data with bounded $m$ or $a = 0$ are called sparse functional data and if $a \geq a_{0}$ where $a_{0}$ is a transition point, are called dense functional data. 
Moreover, the regions $(0, a_{0})$ are sometimes referred to as moderately dense.
Furthermore, we denote $y_{ij}$ and $\bx_{ij}$ as $y_{i}(T_{ij})$ and $\bx_{i}(T_{ij})$ respectively.
Therefore, $(y_{i1}, \cdots, y_{im_{i}})^{\tp}$ and $(\mu_{i1}, \cdots, \mu_{im_{i}})^{\tp}$ are $m_{i}$ component vectors, denoted as $\by_{i}$ and $\bmu_{i}$ respectively. 
In addition, define residuals $\be_{i} = \by_{i}-\bmu_{i}$, $m_i\times p$ design matrices
$\bX_i=(\bx_{ij})_{i=1,j=1}^{m_i,p}$,
and the derivative of $\bmu_i$ with respect to $\bbeta$, denoted as $\dot{\bmu}_i$, is a $m_{i} \times p$ matrix.
For instance, in case of the linear model discussed in (\ref{secton21:linearmodel}), $\dot{\bmu}_{i} = \bX_i$. We keep the notation $\dot{\bmu}_i$ because it could be generalized for other types of responses.
\par
In the classical problem of GEE, we estimate $\bbeta$ by solving the quasi-likelihood equations \citep{liang1986longitudinal}:
\begin{equation}
\label{Chapter2-qif-Eq:gee}
\sum_{i = 1}^{n}\dot{\bmu}_{i}^{\tp}\bV_{i}^{-1}(\by_{i} - \bmu_{i}) = 0.
\end{equation} 
We denote $\bV_{i} = \nu \bA_{i}^{1/2}\bR_{i}(\rho)\bA_{i}^{1/2}$ 
where $\bR_{i}(\rho)$ is the working correlation matrix, $\nu$ is an over-dispersion parameter and  $\bA_{i}$ is a diagonal matrix where entries are marginal variances $\Var(y_{i1})$, $\cdots$, $\Var(y_{im_{i}})$.
In this article, we simply set $\nu = 1$ while the extension to a general $\nu$ is straightforward. 
In practice, the prior knowledge of the working correlation matrix is unknown, 
and the estimation of the coefficient is influenced by its choice. Therefore, \citet{qu2000improving} suggested an expansion of the inverse of the working correlation matrix as $\bR(\rho)^{-1} = \sum_{k = 1}^{\kappa_{0}} a_{k}(\rho) \bM_{k}$ where $\bM_{k}$ are some basis matrices. 
\citet{zhou2012informative} modified linear representation by grouping the basis matrices into an identity matrix and some symmetric basis matrices. 
For example, if the working correlation matrix is exchangeable/compound symmetric, $\bR(\rho)^{-1} = c_{1}\bI_{m} + c_{2}\bJ_{m}$
where $\bI_{m}$ is the $m \times m$ identity matrix and $\bJ_{m}$ is the $m\times m$ matrix such that $0$ is in diagonal and $1$ is in off-diagonal positions. 
On the other hand, for first-order auto-regressive correlation matrix, $\bR(\rho)^{-1} = c_{1}\bI_{m} + c_{2}\bJ_{m}^{(1)} + c_{3}\bJ^{(2)}_{m}$ 
where $\bJ^{(1)}_{m}$ is a matrix with $1$ on the two main off-diagonal positions and $0$ otherwise, 
$\bJ^{(2)}_{m}$ is a matrix such that $1$ is in the corner positions, viz. $(1, 1)$ and $(m, m)$ and $0$ elsewhere. 
Here, $c_{k}$s are real constants that depend on the nuisance parameter $\rho$. 
Therefore, Equation (\ref{Chapter2-qif-Eq:gee}) reduces to the linear combination of the score vectors:
\begin{align}
\label{Chapter2-qif-Eq:g}
\bar{\bg}(\beta) &:= n^{-1}\sum_{i = 1}^{n}\bg_{i}(\beta):= 
\begin{bmatrix}
    \overline{\bg}^{(1)}(\beta)\\
    \vdots\\
    \overline{\bg}^{(\kappa_{0})}(\beta)\\
\end{bmatrix}
=
\begin{bmatrix}
n^{-1}\sum_{i = 1}^{n}\dot{\bmu}^{\tp}_{i} \bA_{i}^{-1/2}\bM_{1}\bA_{i}^{-1/2} (\by_{i} - \bmu_{i})\\
\vdots \\
n^{-1}\sum_{i = 1}^{n}\dot{\bmu}^{\tp}_{i} \bA_{i}^{-1/2}\bM_{\kappa_{0}}\bA_{i}^{-1/2} (\by_{i} - \bmu_{i})\\
\end{bmatrix}, 
\end{align}
where $\overline{\bg}^{(k)}(\bbeta):= n^{-1}\sum_{i=1}^{n}\bg^{(k)}(\bbeta) := n^{-1}\sum_{i = 1}^{n}\dot{\bmu}^{\tp}_{i} \bA_{i}^{-1/2}\bM_{k}\bA_{i}^{-1/2} (\by_{i} - \bmu_{i})$ for $k = 1, \cdots, \kappa_{0}$, each $\overline{\bg}^{(k)}(\bbeta)$ is a $p\times 1$ vector and $\overline{\bg}(\bbeta)$ is $p\kappa_{0} \times 1$ vector after stacking all $\overline{\bg}^{(k)}(\bbeta)$s. 
Due to the higher dimension of $\overline{\bg}$, \citet{qu2000improving} used the generalized method of moments (GMM) \citep{hansen1982large} for which the method of estimation boils down to minimization of the quadratic inference function $\sQ(\bbeta) = n\bar{\bg}(\bbeta)^{\tp}\widehat{\bC}(\bbeta)^{-1}\bar{\bg}(\bbeta)$ where $\widehat{\bC}(\bbeta) = n^{-1}\sum_{i = 1}^{n}\bar{\bg}_{i}(\bbeta)\bar{\bg}_{i}(\bbeta)^{\tp}$ is the sample covariance matrix of Equation (\ref{Chapter2-qif-Eq:g}). 
To obtain the solution of $\bbeta$, the Newton-Raphson method is used which iteratively updates the value of $\bbeta$. 

\subsection{Incorporating eigen-functions in QIF}
\label{Chapter2-qif-Sec:proposal}
Now, due to standard Karhunen-Lo\`eve expansions of $e_{i}(t) = y_{i}(t) - \mu_{i}(t)$ \citep{karhunen1946spektraltheorie,loeve1946functions}
\begin{equation}
\label{Chapter2-qif-Eq:KL-y}
 e_{i}(t) =  \sum_{r=1}^{\infty}\xi_{ir}\phi_{r}(t),
\end{equation}
with probability 1, where uncorrelated random variables $\xi_{ir} = <e_{i}, \phi_{r}>$ have zero mean and variance $\lambda_{r}$ for ordered eigen-values  $\lambda_{r}$ such that $\lambda_{1} \geq \lambda_{2} \geq \cdots \geq 0$ and $\phi_{r}$s are orthonormal eigen-functions such that $\int\phi_{r}(t)\phi_{l}(t)dt = \textbf{1}(r = l)$.
Due to Mercer's theorem \citep{j1909xvi}, a symmetric continuous non-negative definite kernel function $R$ has the representation $R(s,t)= \sum_{r=1}^{\infty}\lambda_{r}\phi_{r}(s)\phi_{r}(t)$ where the sum is absolutely and uniformly continuous. 
We extract the main directions of the variation of the response variables using FPCA. In this situation, we take the first $\kappa_{0}$ terms which provide a good approximation of the infinite sum in Equation (\ref{Chapter2-qif-Eq:KL-y}) by 
considering that the majority of the variations in the data are contained in the subspace spanned by few eigen-functions \citep{chen2019functional}. 
For finite $\kappa_{0} \geq 1$, we therefore consider the rank-$\kappa_{0}$ FPCA model, 
\begin{equation}
    \E\{y(t)|\bx(t)\} \approx \mu(t) + \sum_{r = 1}^{\kappa_{0}}\E\{\xi_{r}|\bx(t)\}\phi_{r}(t) .
\end{equation}
Unlike existing models in \citet{chiou2003functional} where the main interest of the work is to study how the vector covariates influence the whole response curve, 
we study point-wise effect of covariates on functional response
the influence of covariates vectors indirectly on the functional response through the regression of scores on covariates, via., $\E\{\xi_{r}|\bx(t)\}$. 
An analogue of the truncated empirical version of Equation (\ref{Chapter2-qif-Eq:KL-y}) and Mercer's representation can be provided easily, and we discuss the proposed method based on this truncated version. 
Moreover, we discuss how to choose $\kappa_{0}$ in our situation in Sections \ref{Chapter2-qif-Sec:asymptotics}
and \ref{Chapter2-qif-Sec:simulation} in detail.
\par
In this paper, we propose a data-driven way to compute the basis matrices to obtain the approximate inverse of $\bV$ as discussed earlier. 
In this approach, it is enough to find the eigen-functions to construct a GEE.
Let us define
\begin{equation}
\label{Chapter2-qif-Eq:g-propose}
\bar{\bg}(\bbeta) :=  
\begin{bmatrix}
    \overline{\bg}^{(1)}(\beta)\\
    \vdots\\
    \overline{\bg}^{(\kappa_{0})}(\beta)\\
\end{bmatrix}
:=
\begin{bmatrix}
n^{-1}\sum_{i = 1}^{n}\dot{\bmu}^{\tp}_{i} \widehat{\bPhi}_{i1} (\by_{i} - \bmu_{i})\\
\vdots \\
n^{-1}\sum_{i = 1}^{n} \dot{{\bmu}}^{\tp}_{i} \widehat{\bPhi}_{i\kappa_{0}} (\by_{i} - \bmu_{i})\\
\end{bmatrix},
\end{equation}
where, for $k =1, \cdots, \kappa_{0}$,  $\widehat{\bPhi}_{ik} = \left( m_{i}^{-2}\widehat{\phi}_{k}(T_{ij}) \widehat{\phi}_{k}(T_{ij'}) \right)_{j,j' = 1, \cdots, m_{i}}$ is an $m_i\times m_i$ symmetric matrix and $$\overline{\bg}^{(k)}(\bbeta) = n^{-1}\sum_{i=1}^{n}\bg_{i}^{(k)}(\bbeta) = n^{-1}\sum_{i = 1}^{n} \dot{\mu}^{\tp}_{i} \widehat{\bPhi}_{ik} (\by_{i} - \bmu_{i}).$$
Since the dimension of $\overline{\bg}$ in Equation (\ref{Chapter2-qif-Eq:g-propose}) is greater than the number of parameters to estimate, instead of setting $\overline{\bg}$ to zero, we minimize quadratic function by
$\widehat{\bbeta} = \arg\min_{\bbeta} \sQ(\bbeta) ~ \text{where} ~~ \sQ(\bbeta) =n\bar{\bg}(\bbeta)^{\tp}\widehat{\bC}(\bbeta)^{-1}\bar{\bg}(\bbeta)$
where, $\widehat{\bC}(\bbeta) = n^{-1} \sum_{i=1}^{n}\bg_{i}(\bbeta)\bg_{i}(\bbeta)^{\tp}$, where $\bg_{i}(\bbeta) = (\bg^{(1)\tp}_i(\bbeta), \cdots, \bg_i^{(\kappa_{0})\tp}(\bbeta))^{\tp}$
and $\bg^{(k)}_i(\bbeta)=\dot{\bmu}^{\tp}_{i} \widehat{\bPhi}_{ik} (\by_{i} - \bmu_{i})$.  
For the existence of $\widehat{\bC}^{-1}$ we need the additional restriction: $n \geq \kappa_0$ where $\kappa_{0}$ is the number of eigen-functions.
Under the given set-up, by Equation (8) in \citet{qu2000improving} the estimating equation for $\bbeta$ will be
$\dot{\sQ}(\bbeta) \approx 2\dot{\bar{\bg}}(\bbeta)^{\tp}\widehat{\bC}(\bbeta)^{-1}\bar{\bg}(\bbeta)$.
To obtain the solution of the above equation, we use a Newton-like method. 
In practice, the standard Newton method does not lead to a decrease in the objective function, that is, at each step of the iteration, there is no guarantee that $\sQ(\bbeta_{s+1}) < \sQ(\bbeta_{s})$. Therefore, we use the Algorithm \ref{Chapter2-qif-Algo:newton-type}  to estimate $\bbeta$ using the Quasi-Newton method with halving \citep{givens2012computational}.

\begin{algorithm}
\caption{Estimation of $\bbeta$ using the Quasi-Newton method with halving.}
\label{Chapter2-qif-Algo:newton-type}
\begin{algorithmic}
   \State Set $\widehat{\bbeta}_{{1}} {\gets \Tilde{\bbeta}}$ (initial estimates)
   \State Set $\epsilon_{0}$ (threshold, a small number)
   \State Set max.count (maximum number of repetition)
   \State {Set $l \gets 0$}
    \While{$\text{Error} > \epsilon_{0}$ {or $l = 0$}}
   \State Calculate $\dot{\sQ}(\widehat{\bbeta}_{1})$ and $\ddot{\sQ}(\widehat{\bbeta}_{1})$ based on $\widehat{\bbeta}_{1}$ {
    using proposed method}
    
    \State Initialise $r_{0} = 1$
    \State $\widehat{\bbeta}_{2} \gets \widehat{\bbeta}_{1} - r_{0}\ddot{\sQ}(\widehat{\bbeta}_{1})^{-1}\dot{\sQ}(\widehat{\bbeta}_{1})$
    \State Calculate ${\sQ}(\widehat{\bbeta}_{1})$ 
    and ${\sQ}(\widehat{\bbeta}_{2})$ 
    based on $\widehat{\bbeta}_{1}$ and $\widehat{\bbeta}_{2}$ respectively 
    using proposed method
    \While{$\sQ(\widehat{\bbeta}_{2}) > \sQ(\widehat{\bbeta}_{1})$}
    \State $r_{0} \gets r_{0}/2$
    \State $\widehat{\bbeta}_{2} \gets \widehat{\bbeta}_{1} - r_{0}\ddot{\sQ}(\widehat{\bbeta}_{1})^{-1}\dot{\sQ}(\widehat{\bbeta}_{1})$
    \State Calculate $\sQ(\widehat{\bbeta}_{1})$ 
            and $\sQ(\widehat{\bbeta}_{2})$ 
            based on $\widehat{\bbeta}_{1}$ and $\widehat{\bbeta}_{2}$ respectively 
            using proposed method
    \EndWhile
    \State Calculate $\text{Error} = \|\widehat{\bbeta}_{2} - \widehat{\bbeta}_{1}\|^{2}$
    \State $\widehat{\bbeta}_{1} \gets \widehat{\bbeta}_{2}$
    \State {$l \gets l + 1$}
    \EndWhile
    \State Output $\bbeta$
\end{algorithmic}
\end{algorithm}

\subsection{Estimation of eigen-functions}
\label{subChapter2-qif-Sec:estimation-eigen-comp}
Estimation of eigen-functions is an important step in our proposed quadratic inference technique. 
In general, FPCA plays an important role as a dimension-reduction technique in functional data analysis. 
Some important theories on FPCA have been developed in recent years. 
In particular, \citet{hall2006properties} proved various asymptotic expressions for FPCA for densely observed functional data. 
Later, \citet{hall2009theory} showed more common theoretical arguments, including the effect of the gap between eigen-value (a.k.a., spacing) on the property of eigen-value estimators.
In \citet{li2010uniform}, uniform rates of convergence of the mean and covariance functions are given, which are equipped for all possible choices/scenarios of $m_{i}$s. 
\par
Note that the error process $e(t)$ has mean zero, defined on compact set $\sT = [0,1]$ satisfying $\int_{\sT}\E\{e^{2}(t)\}dt < \infty$. 
The functional principal components can be constructed via the covariance function $R(s, t)$ (induces the kernel operator $\sF$) defined as 
$R(s, t) = \E\{e(s)e(t)\}$
which is assumed to be square-integrable. 
An empirical analogue of the spectral decomposition of $R$ can be obtained $\widehat{R}(s,t) = \sum_{r = 1}^{\infty} \widehat{\lambda}_{r}\widehat{\phi}_{r}(s)\widehat{\phi}_{r}(t)
$
where the random variables $\widehat{\lambda}_{1} \geq \widehat{\lambda}_{2} \geq \cdots \geq 0$ are the eigen-values of the estimated operator $\widehat{\sF}$ 
and the corresponding sequence of eigen-functions are $\widehat{\phi}_{1}, \widehat{\phi}_{2},\cdots$. 
Further, assume that $\int_{\sT}\phi_{r}\widehat{\phi}_{r} \geq 0$ to avoid the issue regarding change of sign \citep{hall2006properties} for practical comparison of eigen-functions,
otherwise there is no impact on the convergence rate of eigen-functions and hence the proposed estimators. 
\par
Suppose that $T_{ij}$ are observational points with a positive density function $f_{T}(\cdot)$. 
Assume $m_{i} \geq 2$ and define $N=\sum_{i=1}^{n}N_{i}$ where 
$N_{i} = m_{i}(m_{i}-1)$. 
This approach is based on local linear smoother which is popular in functional data analysis, including \citet{fan1996local, li2010uniform, yao2005functional} among many others. 
Let $K(\cdot)$ be a symmetric probability density function on $[-1,1]$,
which is used as kernel and $h > 0$ be bandwidth, 
thus re-scaled kernel function is defined as $K_{h}(\cdot) = h^{-1}K(\cdot)$.
Therefore, for given $s, t \in \sT$, 
choose $(\widehat{a}_{0}, \widehat{b}_{1}, \widehat{b}_{2})$ be the minimizer of the following equation.
\begin{align*}
    &n^{-1}\sum_{i=1}^{n}{N_{i}^{-1}}\underset{j_{1}\neq j_{2}}{\sum_{j_{1} =1}^{m_{i}}
    \sum_{j_{2} = 1}^{m_{i}}}
    \{e_{i}(T_{ij_{1}})e_{i}(T_{ij_{2}}) 
    - a_{0} 
    - b_{1}( T_{ij_{1}} - s) 
    - b_{2}( T_{ij_{2}} - t)\}^{2}
     \\
     &\qquad 
     \times 
    K_{h}\left({T_{ij_{1}}-s} \right)K_{h}\left({T_{ij_{2}}-t} \right).
    \numberthis
\end{align*}
Thus we estimate $R(s, t) = \E\{e(s)e(t)\}$
using the quantity $\widehat{a}_{0}$, viz.,
$\widehat{R}(s,t) = \widehat{a}_{0}$. The operator $\widehat{\sF}$ is in general positive semi-definite and the estimated eigen-values $\widehat{\lambda}_{r}$ are non-negative; indeed, $\widehat{R}$ is symmetric. In practice, $e_{i}(t)$'s are not observable and this is replaced by $\Tilde{e}_{i}(t) = y_{i}(t) - \bx_{i}^{\tp}(t)\Tilde{\bbeta}$ where $\Tilde{\bbeta}$ is an initial estimate that is consistent to $\bbeta$ but may not be efficient. For example, one can choose the initial estimator as an ordinary least squares of $\bbeta$.

\section{Asymptotic properties}
\label{Chapter2-qif-Sec:asymptotics}
In this section, we study the asymptotic properties of the proposed estimator. 
Let us introduce some notations. 
Assume that $m_{i}$s are all of the same order, viz, $m \equiv m(n)=n^a$ for some $a > 0$. 
Define, 
$d_{n1}(h) = h^{2} + h\overline{m}/m$ and 
$d_{n2}(h) = h^{4} + h^{3}\overline{m}/m + h^{2}\rttensor{m}/m^{2}$
where $\overline{m} = \lim{\sup}_{n\rightarrow \infty}n^{-1}\sum_{i=1}^{n}m/m_{i}$ 
and $\rttensor{m} = \lim{\sup}_{n\rightarrow \infty}n^{-1}\sum_{i=1}^{n}(m/m_{i})^{2}$. 
Denote 
$\delta_{n1}(h) = \left\{d_{n1}(h)\log n/(nh^{2})\right\}^{1/2}$ and
$\delta_{n2}(h) = \left\{d_{n2}(h)\log n/(nh^{4})\right\}^{1/2}$. 
Further, $\nu_{a, b} = \int t^{a}K^{b}(t)dt$. 
Define, $\bW = (\bphi(t_{1})^{\tp}, \cdots, \bphi(t_{m})^{\tp})^{\tp}$ is a matrix of order $m \times \kappa_{0}$ obtaining after stacking all $\bphi_{k}$s and random components $\xi_{i} = (\xi_{i1}, \cdots, \xi_{i\kappa_{0}})^{\tp}$. 
Further, $\bxi$ has mean zero and variance $\bLambda$ which is a diagonal matrix with components $\lambda_{1}, \cdots, \lambda_{\kappa_{0}}$.
The sign $``\lesssim"$ indicates that the left-hand side of the inequality is bounded by the right-hand up to a multiplicative positive constant, i.e. for two {sequence of} positive {real numbers} ${b}_{{n}1}$ and ${b}_{{n}2}$ we define {for large $n$,} ${b}_{{n}1}\lesssim {b}_{{n2}}$ as ${b}_{{n}1} \leq C {b}_{{n}2}$ where $C$ is a positive constant not involving $n$.
The following conditions are needed for further discussion of the asymptotic properties.

\begin{enumerate}[label=(C\arabic*)]
    \item\label{Chapter2-qif-Cond:kernel} Kernel function $K(\cdot)$ is a symmetric density function defined on bounded support $[-1, 1]$.
    \item\label{Chapter2-qif-Cond:density} Density function $f_{T}$ of $T$ is bounded above and away from infinity. Also, the density function is bounded below away from zero. Moreover, $f$ is differentiable and the derivative is continuous. 
    \item\label{Chapter2-qif-Cond:R} $R(s, t)$ is twice differentiable, and all second-order partial derivatives are bounded on $[0, 1]^{2}$.
    \item\label{Chapter2-qif-Cond:mean} $\E\{\sup_{t\in [0,1]} |e(t)|^{\gamma} \} < \infty$ and
    $\E\{\sup_{t \in [0,1]} |\bx_{i}(t)|^{{\gamma}}\} < \infty$ for some $\gamma \in (4, \infty)$.
    \item\label{Chapter2-qif-Cond:h} $h \rightarrow 0$ as $n \rightarrow \infty$ such that
    $d_{n1}^{-1}(\log n/ n)^{1-2/\gamma} \rightarrow 0$ 
    and 
    $d_{n2}^{-1}(\log n/ n)^{1-4/\gamma} \rightarrow 0$ 
    for $\gamma \in (4, \infty)$. 
    \item Condition for eigen-components. 
        \begin{enumerate}
            \item\label{Chapter2-qif-Cond:space} for each $1\leq k < r < \infty$,  
                    ${{\lambda_{k}}}/{|\lambda_{k} - \lambda_{r}|} 
                    \leq 
                    C_{0}{r}/{|k-r|}$ for non-zero finite generic constant $C_{0}$.
            \item\label{Chapter2-qif-Cond:V} For some $\alpha>0$, with the condition $V_{r}\lambda_{r}^{-2}r^{1+\alpha} \rightarrow 0$ as $r \rightarrow \infty$ where $V_{r} = \E\{\int\dot{\mu}(t)\phi_{r}(t)dt\}^{2}$.
            \item[] The above two conditions hold if 
            $\lambda_{r} = r^{-\tau_{1}}\Lambda(r)$ and $V_{r} = r^{-\tau_{2}}\Gamma(r)$ for slowly varying functions $\Lambda$ and $\Gamma$
            where $\tau_{2} > 1 + 2\tau_{1} > 3$ and $\tau = \alpha + \tau_{1}$.
            \item\label{Chapter2-qif-Cond:phi} $\int\phi_{k}^{4}(t)dt$ and $\int\dot{\mu}^{2}(t)\phi_{k}^{2}(t)dt$ are finite
            for all $k \geq 1$. 
        \end{enumerate}
    \item\label{Chapter2-qif-Cond:C} $\widehat{\bC}(\bbeta_{{0}})$ converges almost surely to an invertible matrix $\bC_{0} = \E\{\bg(\bbeta_{0})\bg(\bbeta_{0})^{\tp}\}$
    where $\bg(\bbeta_{0})=(\bg^{(1)\tp}(\bbeta_{0}),\cdots,\bg^{(\kappa_0)\tp}(\bbeta_{0}))^{\tp}$, { where $\bbeta_{0}$ is the true value of $\bbeta$}. 
    \item\label{Chapter2-qif-Cond:h-kappa} 
    Conditions for $h$ and $\kappa_{0}$.
    \begin{enumerate}
        \item If $a > 1/4$, $\kappa_{0} = O(n^{1/(3-\tau)})$ and $n^{-1/4} \lesssim h \lesssim n^{-(a+1)/5}$.
        \item If $0< a \leq 1/4$, $\kappa_{0} = O(n^{4(1+a)/5(3-\tau)})$ and $h \lesssim n^{-1/4}$.
    \end{enumerate}
\end{enumerate}
\begin{remark}
    Condition \ref{Chapter2-qif-Cond:kernel} is commonly used in non-parametric regression. 
    The bound condition for the density function of time-points has the standard Condition \ref{Chapter2-qif-Cond:density} for random design. 
    Similar results can be obtained for a fixed design where the grid-points are pre-fixed according to the design density $\int_{0}^{T_{j}}f(t)dt = j/m$ for $j = 1, \cdots, m$, for $m \geq 1$.
    Furthermore, it is important to note that this approach does not involve the requirement to obtain sample path differentiation when we invoke the estimation of eigenfunctions from \citet{li2010uniform}.
    Therefore the method could be applicable for Brownian motion which has a continuously non-differentiable sample path.  
    To expand in Taylor series, Condition \ref{Chapter2-qif-Cond:R} is required, and is also common in non-parametric regression. 
    Condition \ref{Chapter2-qif-Cond:mean} is required for a uniform bound for certain higher-order expectations to show uniform convergence. This is a similar condition adopted from \citet{li2010uniform}.
    Smoothness conditions in \ref{Chapter2-qif-Cond:h} and \ref{Chapter2-qif-Cond:h-kappa} is common in kernel smoothing and functional data to control bias and variance. 
    The first condition for tuning the parameters mentioned in \ref{Chapter2-qif-Cond:h} is similar to \citet{li2010uniform}.
    The required spacing assumptions for eigen-values in Conditions \ref{Chapter2-qif-Cond:space}  and \ref{Chapter2-qif-Cond:V} are similar as in \citet{hall2009theory}. Condition \ref{Chapter2-qif-Cond:phi} is a trivial assumption that frequently arises in functional data analysis literature.  
    In most of the situations, this condition automatically holds. Using the weak law of large numbers, Condition \ref{Chapter2-qif-Cond:C} holds for large $n$.
    Similar kind of conditions can be invoked such as convexity assumption, i.e., $\lambda_{r} - \lambda_{r+1} \leq \lambda_{r-1} - \lambda_{r}$ for all $r \geq 2$. 
    Condition \ref{Chapter2-qif-Cond:h-kappa} is determined to control the rate of the number of repeated measurements. 
\end{remark}
We present the following theorem to provide the asymptotic expansion and consistency of the proposed estimator for $\widehat{\bbeta}$. 
\begin{theorem}
\label{thm:main-AMSE}
Let $\bbeta_{0}$ be the true value of $\bbeta$. 
Under the Conditions (C1)-(C6), for $k = 1, \cdots, \kappa_{0}$,
we have the asymptotic mean square error for $\overline{\bg}^{(k)}(\bbeta_{0})$ (see Equation \eqref{Chapter2-qif-Eq:g-propose}) as 
\begin{equation}
    \AMSE\{\overline{\bg}^{(k)}(\bbeta_{0})\} = O\left\{ 
        n^{-1} + n^{-1}\kappa_{0}^{3-\tau}R_{n}(h)
    \right\} \qquad \text{almost surely,}
\end{equation}
where $R_{n}(h) = \left\{
            h^{4} + (1/n) + 
            (1/{nmh}) + 
            (1/{n^{2}m^{2}h^{2}}) + 
            (1/{n^{2}m^{4}h^{4}}) + 
            (1/{n^{2}mh}) + 
            (1/{n^{2}m^{3}h^{3}})
            \right\}$.
Moreover, under Condition \ref{Chapter2-qif-Cond:h-kappa}, $\AMSE\{\overline{\bg}^{(k)}(\bbeta_{0})\} = O(n^{-1})$.
Therefore, if in addition, Condition \ref{Chapter2-qif-Cond:C} holds,
as $n \rightarrow \infty$, $\|\widehat{\bbeta} - \bbeta_{0}\| = O(n^{-1/2})$ in probability. 
\end{theorem}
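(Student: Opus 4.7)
\textbf{Proof plan for Theorem \ref{thm:main-AMSE}.} The strategy is a standard oracle-versus-remainder decomposition, combined with the uniform FPCA rates invoked from \citet{li2010uniform}, followed by a routine GMM linearization. At $\bbeta_0$, each summand in $\overline{\bg}_k(\bbeta_0)$ has the quadratic-form representation
\begin{equation*}
\dot{\bmu}_i^{\tp}\widehat{\bPhi}_{ik}\be_i
= \Bigl(m_i^{-1}\sum_{j=1}^{m_i}\dot{\bmu}_{ij}\widehat{\phi}_k(T_{ij})\Bigr)
  \Bigl(m_i^{-1}\sum_{j'=1}^{m_i}\widehat{\phi}_k(T_{ij'})e_i(T_{ij'})\Bigr),
\end{equation*}
so I split $\widehat{\phi}_k = \phi_k + (\widehat{\phi}_k - \phi_k)$ and write $\overline{\bg}_k(\bbeta_0) = \overline{\bg}_k^{\mathrm{or}}(\bbeta_0) + \Delta_{nk}$, where the oracle piece uses the true $\phi_k$ and $\Delta_{nk}$ collects the three cross/quadratic terms in $\widehat{\phi}_k-\phi_k$.

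\textbf{Oracle piece.} Replacing the inner Riemann sums by their continuous limits produces $\bigl(\int\dot{\mu}_i\phi_k f_T\bigr)\xi_{ik}$ up to a design-grid discretization error of order $1/(m_i h) + 1/(m_i^2h^2)$ by the standard kernel-smoothing argument under Conditions \ref{Chapter2-qif-Cond:kernel}--\ref{Chapter2-qif-Cond:R}. The leading term is a mean-zero i.i.d.\ average, so the central-limit / second-moment bound gives variance $O(n^{-1})$, with the discretization residuals contributing the $1/(nmh)$ and $1/(n^2m^2h^2)$ terms of $R_n(h)$. This yields $\AMSE\{\overline{\bg}_k^{\mathrm{or}}(\bbeta_0)\} = O(n^{-1})$ uniformly in $k\leq\kappa_0$.

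\textbf{Remainder piece.} For $\Delta_{nk}$ I invoke the uniform rate $\sup_t|\widehat{\phi}_k(t)-\phi_k(t)|^2 \lesssim k^2\{h^4 + \log n/(nh) + \log n/(n^2m^2h^4)\cdot\text{lower order}\}$ that follows from the covariance-estimator rates of \citet{li2010uniform} composed with the spacing Condition \ref{Chapter2-qif-Cond:space} (which costs a factor $k$ per eigenfunction through the standard $\|\widehat{\phi}_k-\phi_k\|\lesssim k\|\widehat{R}-R\|$ bound). Writing $\Delta_{nk}$ as an average of cross-products and using Cauchy--Schwarz together with the moment bound in Condition \ref{Chapter2-qif-Cond:mean} and $V_k\lesssim k^{-\tau_2}$, the squared bias combines with the $k^2$ spacing loss and with Condition \ref{Chapter2-qif-Cond:V} (summed over $k\leq\kappa_0$) to yield exactly the $n^{-1}\kappa_0^{3-\tau}R_n(h)$ term, where the exponents $h^4,\ 1/(nmh),\ 1/(n^2m^2h^2),\ 1/(n^2m^4h^4),\ 1/(n^2mh),\ 1/(n^2m^3h^3)$ are precisely those produced by squaring, cross-multiplying and averaging the bias and variance contributions of $\widehat{R}-R$ over the two Riemann sums. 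Plugging in Condition \ref{Chapter2-qif-Cond:h-kappa} (both regimes) then balances $h$ against $m=n^a$ so that $\kappa_0^{3-\tau}R_n(h) = O(1)$, giving the claimed $\AMSE\{\overline{\bg}_k(\bbeta_0)\} = O(n^{-1})$.

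\textbf{From AMSE to $\sqrt{n}$-consistency.} Given $\|\overline{\bg}(\bbeta_0)\|=O_p(n^{-1/2})$ and $\widehat{\bC}(\bbeta_0)\to\bC_0$ (invertible) by Condition \ref{Chapter2-qif-Cond:C}, I Taylor-expand $\dot{\sQ}(\widehat{\bbeta})=0$ around $\bbeta_0$ in the familiar GMM manner: $\dot{\sQ}(\widehat{\bbeta})\approx 2\dot{\bar{\bg}}(\bbeta_0)^{\tp}\widehat{\bC}^{-1}\{\bar{\bg}(\bbeta_0)+\dot{\bar{\bg}}(\bbeta_0)(\widehat{\bbeta}-\bbeta_0)\}$. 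Provided the Jacobian $\dot{\bar{\bg}}(\bbeta_0)$ converges in probability to a matrix of full column rank (this follows from the same oracle-plus-remainder analysis applied with $\be_i$ replaced by $\bx_i$, using Condition \ref{Chapter2-qif-Cond:mean}), solving gives $\widehat{\bbeta}-\bbeta_0 = -(\dot{\bar{\bg}}^{\tp}\widehat{\bC}^{-1}\dot{\bar{\bg}})^{-1}\dot{\bar{\bg}}^{\tp}\widehat{\bC}^{-1}\bar{\bg}(\bbeta_0) + o_p(n^{-1/2})$, hence $\|\widehat{\bbeta}-\bbeta_0\|=O_p(n^{-1/2})$.

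\textbf{Main obstacle.} The delicate step is bookkeeping the remainder: tracking how the $k$-dependent spacing loss from Condition \ref{Chapter2-qif-Cond:space} interacts with the $V_r$ decay of Condition \ref{Chapter2-qif-Cond:V} when summing the squared $\widehat{\phi}_k - \phi_k$ contributions over $k\leq\kappa_0$, and verifying that each of the six terms in $R_n(h)$ arises from a specific bias-variance cross product of the two Riemann sums. Once the $\kappa_0^{3-\tau}$ factor is isolated and Condition \ref{Chapter2-qif-Cond:h-kappa} is used to absorb it into $R_n(h)^{-1}$, the remainder of the proof is a standard GMM linearization.
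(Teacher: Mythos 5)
Your skeleton (oracle term with the true $\phi_k$ plus a remainder driven by $\widehat{\phi}_k-\phi_k$, followed by a standard GMM linearization) matches the paper's, and your final consistency step is essentially the paper's argument in a slightly different guise (the paper uses the local-minimizer-in-a-ball device rather than solving the first-order condition, but the two are interchangeable here). However, there is a genuine gap in your remainder analysis, at exactly the point you flag as ``the delicate step.'' If you bound $\Delta_{nk}$ by Cauchy--Schwarz, decoupling the factor $\widehat{\phi}_k-\phi_k$ from the residual process within each trajectory and then averaging over $i$, you obtain a bound of order $\eta_k^{-1}\|\Delta\phi_k\|\lesssim \lambda_k^{-1}k\,\overline{\delta}_n(h)$ with \emph{no} gain from the average over $i$, hence an AMSE contribution of order $\kappa_0^{3-\tau}R_n(h)$ rather than $n^{-1}\kappa_0^{3-\tau}R_n(h)$. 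Under Condition (C8) this is only $O(1)$, not $O(n^{-1})$, and the theorem does not follow. The paper's proof avoids this loss by keeping the sum over $i$ inside the perturbation expansion: after substituting $\widehat{\phi}_k-\phi_k=\sum_{r\neq k}(\lambda_k-\lambda_r)^{-1}\langle\phi_r,\Delta\phi_k\rangle\phi_r+O(\|\Delta\|^2)$, the remainder becomes $\|\Delta\phi_k\|$ times mean-zero averages of the form $\sum_{r\neq k}(\lambda_k-\lambda_r)^{-1}n^{-1}\sum_i\sM_{ir}\xi_{ik}$, whose variance is $n^{-1}\lambda_k\sum_{r\neq k}V_r(\lambda_k-\lambda_r)^{-2}=O(n^{-1}\lambda_k k^{1-\alpha})$ by the spacing and decay conditions (the paper's Lemmas 5 and 6, via a Bernstein inequality). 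That extra $(\log n/n)^{1/2}$ is what produces the $n^{-1}$ prefactor multiplying $\kappa_0^{3-\tau}R_n(h)$; your plan needs to make this averaging explicit rather than asserting the final exponent.

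A secondary error: you attribute the $1/(nmh)$ and $1/(n^2m^2h^2)$ terms of $R_n(h)$ to a ``kernel-smoothing'' discretization error in the oracle piece. The oracle piece involves no bandwidth at all --- it is a Riemann/Monte Carlo average of known functions over the design points, with error $O((\log m/m)^{1/2})$ (the paper's Lemmas 3 and 4), which is absorbed into the $O(n^{-1})$ term. Every $h$-dependent term in $R_n(h)$ originates from the covariance-estimation error, i.e.\ from $\overline{\delta}_n^2(h)=\{h^2+\delta_{n1}(h)+\delta_{n2}^2(h)\}^2$ entering through $\|\Delta\phi_k\|$ and $\|\Delta\|^2$ in the remainder. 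This misattribution does not by itself break the result (your remainder paragraph re-derives the same terms), but it signals that the two pieces of your decomposition are not being tracked consistently.
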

\noindent The following theorem states the results of asymptotic normality. 
\begin{theorem}
\label{thm:main-normality}
Define 
$\sC_{i} = \sum_{k_{1}=1}^{\kappa_{0}}
                  \sum_{k_{2}=1}^{\kappa_{0}}
                  \bPhi_{k_{1}}
                  \bX_{i}\bC_{i,k_{1}, k_{2}}^{-1}\bX_{i}^{\tp}
                  \bPhi_{k_{2}}$, 
where $\bC_{i,k_{1}, k_{2}}^{-1}$ is the $(k_{1}, k_{2})$ block of $\bC_{i, 0}^{-1}$ with 
$\bC_{i,0} = \E\left\{\bg_i(\bbeta_{0})\bg_{i}^{\tp}(\bbeta_{0})\right\}$. 
Assume that the conditions for Theorem \ref{thm:main-AMSE} holds.
Then $\sqrt{n}(\widehat{\bbeta} - \bbeta_{0}) \xrightarrow{d} N(0, \bSigma)$ 
where $\bSigma = \bB^{-1}\bA\bB^{-1}$, $\bA = \lim_{n\rightarrow\infty} n^{-1}\sum_{i=1}^{n}\E\{\bX_{i}^{\tp}\sC_{i}\be_{i}\be_{i}^{\tp}\sC_{i}\bX_{i}\}$
and $\bB = \lim_{n\rightarrow\infty}n^{-1}\sum_{i=1}^{n}\E\{\bX_{i}^{\tp}\sC_{i}\bX_{i}\}$.

\end{theorem}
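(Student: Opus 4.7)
The plan is to follow the standard GMM sandwich argument, but with two twists: (i) the score vector $\bar{\bg}(\bbeta)$ in (\ref{Chapter2-qif-Eq:g-propose}) depends on the nonparametrically estimated eigen-functions $\widehat{\phi}_{k}$, so one must show that replacing $\widehat{\phi}_{k}$ by $\phi_{k}$ is asymptotically negligible at the $\sqrt{n}$ scale; and (ii) the weighting matrix $\widehat{\bC}(\bbeta)$ itself depends on $\bbeta$ and on $\widehat{\phi}_{k}$, so one has to check that it converges to $\bC_{0}$ uniformly in a neighborhood of $\bbeta_{0}$.

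First I would use the first-order optimality condition $\dot{\sQ}(\widehat{\bbeta})=0$, together with a Taylor expansion of $\bar{\bg}(\widehat{\bbeta})$ around $\bbeta_{0}$, to write
\begin{equation*}
\sqrt{n}\,(\widehat{\bbeta}-\bbeta_{0})
= -\bigl[\dot{\bar{\bg}}(\bbeta_{0})^{\tp}\widehat{\bC}(\bbeta_{0})^{-1}\dot{\bar{\bg}}(\bbeta_{0})\bigr]^{-1}
\dot{\bar{\bg}}(\bbeta_{0})^{\tp}\widehat{\bC}(\bbeta_{0})^{-1}\sqrt{n}\,\bar{\bg}(\bbeta_{0}) + o_{p}(1),
\end{equation*}
using consistency of $\widehat{\bbeta}$ from Theorem \ref{thm:main-AMSE} to control the remainder term. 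The $k$-th block of $\dot{\bar{\bg}}(\bbeta_{0})$ is $-n^{-1}\sum_{i}\bX_{i}^{\tp}\widehat{\bPhi}_{ik}\bX_{i}$, which after unstacking and re-combining with $\widehat{\bC}(\bbeta_{0})^{-1}$ produces precisely the matrix $\widehat{\bB}=n^{-1}\sum_{i}\bX_{i}^{\tp}\widehat{\sC}_{i}\bX_{i}$ appearing in the statement. Assumption \ref{Chapter2-qif-Cond:C} then identifies $\bB$ as the probability limit of $\widehat{\bB}$.

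Next I would decompose $\bar{\bg}(\bbeta_{0})$ into an oracle piece $\bar{\bg}^{\ast}(\bbeta_{0})$, defined by replacing every $\widehat{\phi}_{k}$ by $\phi_{k}$ in (\ref{Chapter2-qif-Eq:g-propose}), plus a remainder $\bar{\bg}(\bbeta_{0})-\bar{\bg}^{\ast}(\bbeta_{0})$ that collects all perturbations coming from eigen-function estimation. Using the uniform rates of convergence for $\widehat{\phi}_{k}$ from \citet{li2010uniform} together with Conditions \ref{Chapter2-qif-Cond:h}--\ref{Chapter2-qif-Cond:phi} and the bandwidth/truncation regime \ref{Chapter2-qif-Cond:h-kappa}, the AMSE bound of Theorem \ref{thm:main-AMSE} implies $\sqrt{n}\{\bar{\bg}(\bbeta_{0})-\bar{\bg}^{\ast}(\bbeta_{0})\}=o_{p}(1)$; this is the content that makes the FPCA step redundant at the $\sqrt{n}$ scale. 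For the oracle piece, each summand $\bg_{i}^{\ast}(\bbeta_{0})$ is an i.i.d.\ mean-zero random vector with finite second moments (by Condition \ref{Chapter2-qif-Cond:mean} and \ref{Chapter2-qif-Cond:phi}), so Lindeberg's CLT gives $\sqrt{n}\,\bar{\bg}^{\ast}(\bbeta_{0})\xrightarrow{d} N(\zero,\bC_{0})$, and by Slutsky $\sqrt{n}\,\bar{\bg}(\bbeta_{0})$ has the same limit.

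Combining these pieces via the continuous mapping theorem and Slutsky's theorem yields
\begin{equation*}
\sqrt{n}\,(\widehat{\bbeta}-\bbeta_{0})\xrightarrow{d} N\!\bigl(\zero,\;\bB^{-1}\bA\bB^{-1}\bigr),
\end{equation*}
with $\bA$ the limit of $\widehat{\bA}$ in the theorem (obtained by sandwiching $\bC_{0}$ between two copies of the Jacobian factor) and $\bB$ the limit of $\widehat{\bB}$. The main obstacle is the second step: carefully tracking how the eigen-function estimation error propagates into $\bar{\bg}(\bbeta_{0})$ and into the weighting matrix $\widehat{\bC}(\bbeta_{0})$, and verifying that the cross terms of order $\|\widehat{\phi}_{k}-\phi_{k}\|^{2}$ multiplied by $\kappa_{0}^{2}$ stay $o(n^{-1/2})$ under \ref{Chapter2-qif-Cond:h-kappa}. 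This is where the transition $a\geq a_{0}$ enters—only for sufficiently dense designs does the eigen-function error stay below the parametric $n^{-1/2}$ threshold, permitting the oracle CLT to carry through unaltered.
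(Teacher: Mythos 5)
Your proposal is correct and follows essentially the same route as the paper: the paper likewise starts from the first-order condition $\dot{\sQ}(\widehat{\bbeta})=0$, replaces $\widehat{\sC}_{i}$ by $\sC_{i}$ (equivalently, the estimated eigen-functions by the true ones) with an error controlled by the perturbation bounds already established for Theorem \ref{thm:main-AMSE}, and then applies the CLT to $n^{-1/2}\sum_{i}\bX_{i}^{\tp}\sC_{i}\be_{i}$, the law of large numbers for $\widehat{\bB}$, and Slutsky. Your block-wise presentation via $\dot{\bar{\bg}}^{\tp}\widehat{\bC}^{-1}\bar{\bg}$ is just a different notation for the paper's aggregated quantity $\sC_{i}$, so the two arguments coincide step for step.
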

\begin{remark}
    Here, the selection of the bandwidth only affects the second-order term of the MSE of $\widehat{\bbeta}$ 
    and has no effect on the asymptotic result of normality as long as $h$ satisfies the Conditions \ref{Chapter2-qif-Cond:h} and \ref{Chapter2-qif-Cond:h-kappa} along with some restrictions on $\kappa_{0}$. Moreover, it is important to observe that $\bPhi_{k}$ is normalized by the number of repeated measurements.
\end{remark}
{
\begin{remark}
The proposed method is applicable to both sparse and dense functional data. However, this article focuses on dense functional data as discussed in Section \ref{Chapter2-qif-Sec:introduction}. 
The second part of Theorem \ref{thm:main-AMSE} and Theorem \ref{thm:main-normality}
are derived based on the dense functional response and we assume $m_{i}$ is growing with sample size. This assumption enables us to simplify the asymptotic leading order term of the AMSE expression in Theorem \ref{thm:main-AMSE}. 
\end{remark}
}
\noindent Outline of the proofs of the proposed theorems is discussed in the Appendix with additional technical details. 

\section{Simulation studies}
\label{Chapter2-qif-Sec:simulation}
We conduct extensive numerical studies to compare the finite sample performance of our proposed method to that of \citet{qu2000improving} under different correlation structures. 
\subsection{Simulation set-up}
Consider the normal response model 
\begin{equation}
y_{i}(T_{ij}) = \bx_{i}(T_{ij})^{\tp}\bbeta + e_{i}(T_{ij}).    
\end{equation}
For $p=2$, we set coefficient vectors, $\bbeta = (\beta_{1}, \beta_{2})^{\tp}$ where $\beta_{1} = 1$ and $\beta_{2} = 0.5$. 
Consider the covariates $x_{ik}(t) = \chi_{i1}^{(k)} + \chi_{i2}^{(k)}\sqrt{2}\sin\left(\pi t \right) + \chi_{i3}^{(k)}\sqrt{2}\cos\left(\pi t\right)$ where
the coefficients $\chi_{i1}^{(k)} \sim N(0, (2^{-0.5(k-1)})^{2})$, 
$\chi_{i2}^{(k)} \sim N(0, (0.85 \times 2^{-0.5(k-1)})^{2})$,  $\chi_{i3}^{(k)} \sim N(0, (0.7 \times 2^{-0.5(k-1)})^{2})$ and $\chi_{ij}$s are mutually independent for each trajectories $i$ and each $j$.
In fixed design situations, associated observational times are fixed. 
Sample trajectories are observed at $m = 100$ equidistant time-points $\{ t_{1}, \cdots, t_{m}\}$ on $[0, 1]$. 
Set number of trajectories $n \in \{ 100, 300, 500\}$.
The residual process $e_{i}(t)$ is a smoothed function with mean zero and unknown covariance function, where 
each $e_{i}$ is distributed as $e_{i} = \sum_{k \geq 1}\xi_{i}\phi_{i}$ and $\xi_{k}$s are independent normal random variables with mean zero and respective variances $\lambda_{k}$.
For numerical computation, we truncate the finite series at $k = 3$ in Karhunen-Lo\`eve expansions for Situations (a), (b), and (c) as described below. 
In Situations (d) and (e), the error process is generated from given covariance functions. 
    
\begin{enumerate}[label=(\alph*)]
        \item\label{Chapter2-qif-Sim:bm} Brownian motion. The covariance function for the Brownian motion is $\min(s, t)$, $\lambda_{k} = 4/{\pi^{2}(2k - 1)^{2}}$ and $\phi_{k}(t) = \sqrt{2}\sin(t/\sqrt{\lambda_{k}})$.
        \item\label{Chapter2-qif-Sim:decay} Linear process. Consider the eigen-values be $\lambda_{k} = k^{-2l_{0}}$ and $\phi_{k}(t) = \sqrt{2}\cos(k\pi t)$. We fix $l_{0} \in \{1, 2, 3\}$.
        \item\label{Chapter2-qif-Sim:ou} Ornstein Uhlenbeck (OU) process. 
        For positive constants $\mu_{0}$ and $\rho_{0}$, we have a stochastic differential equation 
        for $e(t)$ as 
        $\partial e(t) =  -\mu_{0}e(t)\partial t + \rho_{0}\partial w(t)$ 
        for the Brownian motion $w(t)$.
        It can be shown that $\cov\{e(t), e(s)\} = c\exp\{-\mu_{0}|t-s|\}$ where $c = \rho_{0}^{2}/2\mu_{0}$. 
        Here we assume $c = 1$. 
        Thus, by solving the integral equation 
        we have $\phi_{k}(t) = A_{k}\cos(\omega_{k}t) + B_{k}\sin(\omega_{k}t)$ 
        and $\lambda_{k} = {2\mu_{0}}/{(\omega_{k}^{2} + \mu_{0}^{2})}$ 
        where $\omega$ is solution of $\cot(\omega) = {(\omega^{2} - \mu_{0}^{2})}/{2\mu_{0}\omega}$. 
        The constants $A_{k}$ and $B_{k}$ are defined as 
        $B_{k} = \mu_{0}A_{k}/\omega_{k}$ 
        where $A_{k} = \sqrt{{2\omega_{k}^{2}}/{(2\mu_{0} + \mu_{0}^{2} + \omega_{k}^{2})}}$. 
        Here $\mu_{0}$ is chosen to be $1$ or $3$.
        
        \item\label{Chapter2-qif-Sim:pe} Power exponential. $R(s, t) = \exp\{(-|s-t|/a_{0})^{b_{0}}\}$ where scale parameter
        $a_{0} = 1$ and shape parameter $b_{0} \in \{1, 2, 5\}$.
        \item\label{Chapter2-qif-Sim:quadratic} Rational quadratic. $R(s,t) = \left\{1+ {(s-t)^{2}}/{a_{0}^{2}}\right\}^{-b_{0}}$
        where scale parameter $a_{0} = 1$ and shape parameter $b_{0} \in \{1, 2, 5\}$.
\end{enumerate}

\subsection{Comparison and evaluation}
For each of the situations, we perform 500 simulation replicates. 
To execute \citet{qu2000improving}'s approach, we construct the scores using basis matrices as described in Example 1 (approximation of the compound symmetric correlation structure, denoted as \texttt{ldaCS} in the tables) 
and Example 2 (for the first-order autoregressive correlation structure, denoted as \texttt{ldaAR} in the tables) in their paper. 
Ordinary least squares estimate (\texttt{init}) is taken as the initial estimate of $\bbeta$ for both ours and \citet{qu2000improving}'s approach. We indicate \texttt{fda-k} as our proposed method with the number of basis functions $\texttt{k}$. {Additionally, we denote \texttt{fda-AIC} as our proposed method where the number of basis functions is determined by the AIC, as suggested by \citet{yao2005functional}.} 
The estimation procedure in the iterative algorithm converges when the square difference between the estimated values of two consecutive steps is bounded by a small number $10^{-10}$ or the maximum number of steps crosses 500, whichever happens earlier.
To make theoretical results and numerical examples consistent, 
we use \texttt{FPCA} function in R which is available in \texttt{fdapace} packages \citep{fdapace}. 
Bandwidths are selected using generalized cross-validation 
and the Epanechnikov kernel $K(x) = 0.75(1-x^{2})_{+}$ is used for estimation where $(a)_{+} = \max(a, 0)$. \par
The means and standard deviations (SD) of the regression coefficients based on 500 simulations are given as summary measures. 
We calculate the standard deviation mentioned in the tables based on 500 estimates from 500 replications that can be viewed as the true standard error. 
Moreover, we also compute absolute bias (defined as $\AB = \sum_{b=1}^{500} |\widehat{\bbeta}_{b} - \bbeta|/500$) and  mean square error (defined as $\MSE = \sum_{b = 1}^{500}(\widehat{\bbeta}_{b} -\bbeta)^{2}/500$)
to compare the performance of estimation, where for $b$-th replication $\widehat{\bbeta}_{b}$ be the estimated value for $\bbeta$,
MSEs are reported in the order of $10^{-2}$. {In the last column of all the tables, we report the average fraction of variance explained (FVE) under different values of $\kappa_{0}$.}
Since our objective is to see the performance of the proposed method, we compare the results for different choices of $\kappa_{0}$. However, in practice, the number of selected eigen-functions plays a critical role in the proposed method. We can choose $\kappa_{0}$ based on a scree plot where we plot the number of the component against the FVE. The elbow of the graph is found and the components to the left are considered as significant. Another possibility is to use AIC to select the number of eigen-functions as we have demonstrated in the numerical results.

\subsection{Simulation results}
Simulation results associated with the Brownian motion are shown in Table \ref{Chapter2-qif-Table:bm}. In this situation, we observe that our approach produces better results in terms of the dispersion measures. Tables \ref{Chapter2-qif-Table:decay1}, \ref{Chapter2-qif-Table:decay2} and \ref{Chapter2-qif-Table:decay3} of Appendix show results for linear processes, our proposed method performs better in situations with a working correlation matrix as AR but is comparable for exchangeable structure for $l_{0}=1,2,3$. 
Moreover, in our proposed method, as $l$ increases all dispersion measures such as MSE decrease. Results based on the OU process are documented in Tables \ref{Chapter2-qif-Table:ou1} and \ref{Chapter2-qif-Table:ou3} in Appendix. Our method outperforms the existing methods for both situations and as $\mu_0$ increases, the MSE decreases. For three different parameter choices of the power exponential and rational quadratic covariance structure, the numerical results are presented in Tables \ref{Chapter2-qif-Table:pe1}, \ref{Chapter2-qif-Table:pe2}, \ref{Chapter2-qif-Table:pe5} and \ref{Chapter2-qif-Table:quad1}, \ref{Chapter2-qif-Table:quad2}, \ref{Chapter2-qif-Table:quad5} in the Appendix, respectively. As before, we observe that our proposed method is finer than the existing ones in all sub-cases; but interestingly, when $b_0$ increases, the MSE decreases for the power exponential, whereas it increases for the rational quadratic covariance structure as expected due to the covariance structure. Overall, we observe that for all the above situations, as sample size increases, the dispersion measures, for example, SD and MSE decrease. It establishes that as sample size increases, the parameter estimates get closer and closer to the true parameters. In each of the above situations, the SDs for the proposed methods decrease as we increase $\kappa_{0}$ and stabilize after some value of $\kappa_{0}$ where the fraction of variance (FVE) is approximately 100\%. In most cases, the estimated $\kappa_{0}$ is close to 3 using AIC approach. However, in the specific scenario described in \ref{Chapter2-qif-Sim:pe} with $a_{0} = b_{0} = 1$, the median $\kappa_{0}$ selected by AIC is 1, 19 and 20 for $n=100, 300, 500$ respectively. We observe that the proposed method, which selects the number of eigenvalues using the AIC approach, performs better than the existing methods by \citet{qu2000improving}. Additionally, the results selected by AIC are generally consistent with those chosen by the scree plot criteria of FVE when FVE is sufficiently large.

\begin{table}[t!]
\centering
\caption{Performance of the estimation procedure where the residuals are generated from Brownian motion.  
Mean of the estimated coefficients, standard deviation, absolute bias, mean square error $(\times 100)$, and FVE in percentage are summarized.}
\begin{tabular}{rrrrrrrrrr}
\\
 Method & \multicolumn{4}{c}{$\beta_{1}$} & \multicolumn{4}{c}{$\beta_{2}$} & FVE \%-age\\ 
 & Mean & SD & AB  & MSE & Mean & SD & AB & MSE & \\
 \\
 &\multicolumn{9}{c}{$n = 100$}\\
 \\
  \texttt{init} & 0.9999 & 0.0373 & 0.0297 & 0.1391 & 0.4995 & 0.0486 & 0.0384 & 0.2354 &  \\ 
  \texttt{ldaAR} & 0.9991 & 0.0331 & 0.0265 & 0.1095 & 0.5004 & 0.0445 & 0.0353 & 0.1972 &  \\ 
  \texttt{ldaCS} & 0.9987 & 0.0316 & 0.0253 & 0.1000 & 0.4997 & 0.0411 & 0.0322 & 0.1685 &  \\ 
  $\texttt{fda-1}$& 0.9998 & 0.0564 & 0.0447 & 0.3180 & 0.5006 & 0.0743 & 0.0587 & 0.5516 & 86.2672 \\ 
  $\texttt{fda-2}$ & 1.0001 & 0.0269 & 0.0213 & 0.0723 & 0.4971 & 0.0362 & 0.0290 & 0.1314 & 96.3746 \\ 
  $\texttt{fda-3}$ & 0.9998 & 0.0231 & 0.0181 & 0.0532 & 0.4978 & 0.0317 & 0.0251 & 0.1010 & 99.9220 \\ 
  $\texttt{fda-4}$ & 1.0004 & 0.0052 & 0.0014 & 0.0028 & 0.4994 & 0.0092 & 0.0022 & 0.0085 & 99.9979 \\ 
  $\texttt{fda-5}$ & 0.9999 & 0.0021 & 0.0008 & 0.0004 & 0.4999 & 0.0051 & 0.0012 & 0.0026 & 100.0000 \\ 
  $\texttt{fda-AIC}$ & 0.9998 & 0.0231 & 0.0181 & 0.0532 & 0.4978 & 0.0317 & 0.0251 & 0.1010 & 99.9220 \\ 
  \\
  &\multicolumn{9}{c}{$n = 300$}\\
  \\
 \texttt{init} & 1.0002 & 0.0200 & 0.0162 & 0.0401 & 0.5003 & 0.0288 & 0.0226 & 0.0825 &  \\ 
  \texttt{ldaAR} & 1.0003 & 0.0184 & 0.0147 & 0.0336 & 0.5000 & 0.0259 & 0.0203 & 0.0670 &  \\ 
  \texttt{ldaCS} & 1.0007 & 0.0170 & 0.0134 & 0.0288 & 0.4995 & 0.0242 & 0.0190 & 0.0583 &  \\ 
  $\texttt{fda-1}$& 1.0002 & 0.0309 & 0.0251 & 0.0955 & 0.5008 & 0.0443 & 0.0350 & 0.1962 & 86.7578 \\ 
  $\texttt{fda-2}$ & 1.0002 & 0.0142 & 0.0114 & 0.0202 & 0.4998 & 0.0213 & 0.0169 & 0.0451 & 96.4747 \\ 
  $\texttt{fda-3}$ & 1.0002 & 0.0122 & 0.0098 & 0.0150 & 0.4992 & 0.0179 & 0.0144 & 0.0321 & 99.9745 \\ 
  $\texttt{fda-4}$ & 1.0002 & 0.0021 & 0.0003 & 0.0004 & 0.4998 & 0.0032 & 0.0005 & 0.0010 & 99.9993 \\ 
  $\texttt{fda-5}$ & 1.0000 & 0.0002 & 0.0001 & 0.0000 & 0.5000 & 0.0003 & 0.0002 & 0.0000 & 100.0000 \\ 
  $\texttt{fda-AIC}$ & 1.0002 & 0.0122 & 0.0098 & 0.0150 & 0.4992 & 0.0179 & 0.0144 & 0.0321 & 99.9745 \\ 
 \\
 &\multicolumn{9}{c}{$n = 500$}\\
 \\
 \texttt{init} & 1.0002 & 0.0148 & 0.0117 & 0.0219 & 0.5006 & 0.0223 & 0.0177 & 0.0497 &  \\ 
  \texttt{ldaAR} & 1.0005 & 0.0138 & 0.0111 & 0.0189 & 0.5000 & 0.0206 & 0.0162 & 0.0422 &  \\ 
  \texttt{ldaCS} & 1.0000 & 0.0128 & 0.0102 & 0.0164 & 0.4992 & 0.0184 & 0.0146 & 0.0340 &  \\ 
  $\texttt{fda-1}$& 1.0007 & 0.0234 & 0.0185 & 0.0545 & 0.5012 & 0.0348 & 0.0277 & 0.1213 & 86.7520 \\ 
  $\texttt{fda-2}$ & 0.9996 & 0.0105 & 0.0083 & 0.0110 & 0.5002 & 0.0157 & 0.0126 & 0.0247 & 96.5174 \\ 
  $\texttt{fda-3}$ & 0.9991 & 0.0091 & 0.0074 & 0.0084 & 0.4999 & 0.0133 & 0.0107 & 0.0177 & 99.9851 \\ 
  $\texttt{fda-4}$ & 1.0000 & 0.0002 & 0.0001 & 0.0000 & 0.5000 & 0.0003 & 0.0002 & 0.0000 & 99.9996 \\ 
  $\texttt{fda-5}$ & 1.0000 & 0.0002 & 0.0001 & 0.0000 & 0.5000 & 0.0003 & 0.0002 & 0.0000 & 100.0000 \\ 
  $\texttt{fda-AIC}$ & 0.9991 & 0.0091 & 0.0074 & 0.0084 & 0.4999 & 0.0133 & 0.0107 & 0.0177 & 99.9851 \\
\end{tabular}
\label{Chapter2-qif-Table:bm}
\end{table}

\section{Real data analysis}
\label{Chapter2-qif-Sec:real-data}
In this section, we apply our proposed method to motivating examples in two different data-sets. 
\subsection{Beijing's $\PM_{2.5}$ pollution study}
In the atmosphere, suspended microscopic particles of solid and liquid matter are commonly known as particulates or particulate matter (PM). Such particulates often have a strong noxious impact toward human health, climate, and visibility. 
One such common and fine type of atmospheric particle is $\PM_{2.5}$ with a diameter less than $2.5$ micrometers. 
Many developed and developing cities across the world are experiencing chronic air pollution, with major pollutants being $\PM_{2.5}$; Beijing and a substantial part of China are among such places. 
Some studies show that there are many non-ignorable sources of variability in the distribution and transmission pattern of $\PM_{2.5}$, which are confounded with secondary chemical generation. 
The atmospheric $\PM_{2.5}$ data used in our analysis were collected from the UCI machine learning repository \url{https://archive.ics.uci.edu/ml/datasets/Beijing+Multi-Site+Air-Quality+Data} \citep{liang2015assessing}. 
The data-set includes daily measurements of $\PM_{2.5}$ and associated covariates at twelve different locations in China, namely Aotizhongxin, 
Changping,
Dingling,
Dongsi,
Guanyuan,
Gucheng,
Huairou,
Nongzhanguan,
Shunyi,
Tiantan,
Wanliu, and
Wanshouxigong  
during January 2017. 
After excluding missing data, there were 608 
hourly data points in \texttt{Beijing2017-data}. 
We assume that the atmospheric measurements are independent since they are located quite apart. The objective of our analysis is to describe the trend of functional response $\PM_{2.5}$ (as shown in Figure \ref{Chapter2-qif-Fig:China} (left)) and evaluate the effect of covariates including the chemical compounds such as sulfur dioxide $(\SO_{2})$, nitrogen dioxide $(\NO_{2})$, carbon monoxide $(\CO)$ and ozone $(\text{O}_{3})$ over time. We smooth the covariates and responses to reduce variability and center them. Subsequently, we consider the following model
\begin{equation}
    Y_{i}(t) = \beta_{0} + \SO_{2}(t)\beta_{1} + \NO_{2}(t)\beta_{2} + \CO(t)\beta_{3} + \text{O}_{3}(t)\beta_{4} + e_{i}(t).
\end{equation}
\par
We use Algorithm \ref{Chapter2-qif-Algo:newton-type} to estimate the coefficients of the regression model mentioned above. Through the simulation results, we observe that if the values of $\kappa_0$ increase, the standard deviation of the coefficients decreases. For small FVE such as $50\%$, the corresponding $\kappa_0 = 1$ and the estimation procedure performs poorly; whereas for large FVE percentages, the estimation procedure has adequately improved in terms of standard error. 
The estimated values for $\beta_0,\beta_1,\beta_2,\beta_3,$ and $\beta_4 $ produce similar results across different choices of $\kappa_0$. From the estimated standard error, using scree-plot of standard error, we conclude that the suitable choice of $\kappa_0$ is approximately $10$. 
The estimated scaled eigen-values are provided in Figure \ref{Chapter2-qif-Fig:China} (right) which clearly shows their decay rate. The estimated coefficients with standard error are 
0.0009 (1.1644), 0.0829 (0.2584), 0.9503 (0.1586), 0.0196 (0.0037) and 1.1523 (0.1198) respectively.

\begin{figure}[htbp]
    \centering
    \begin{subfigure}[t]{0.55\textwidth}
        \centering
        \includegraphics[width=\textwidth, height=0.6\textwidth]{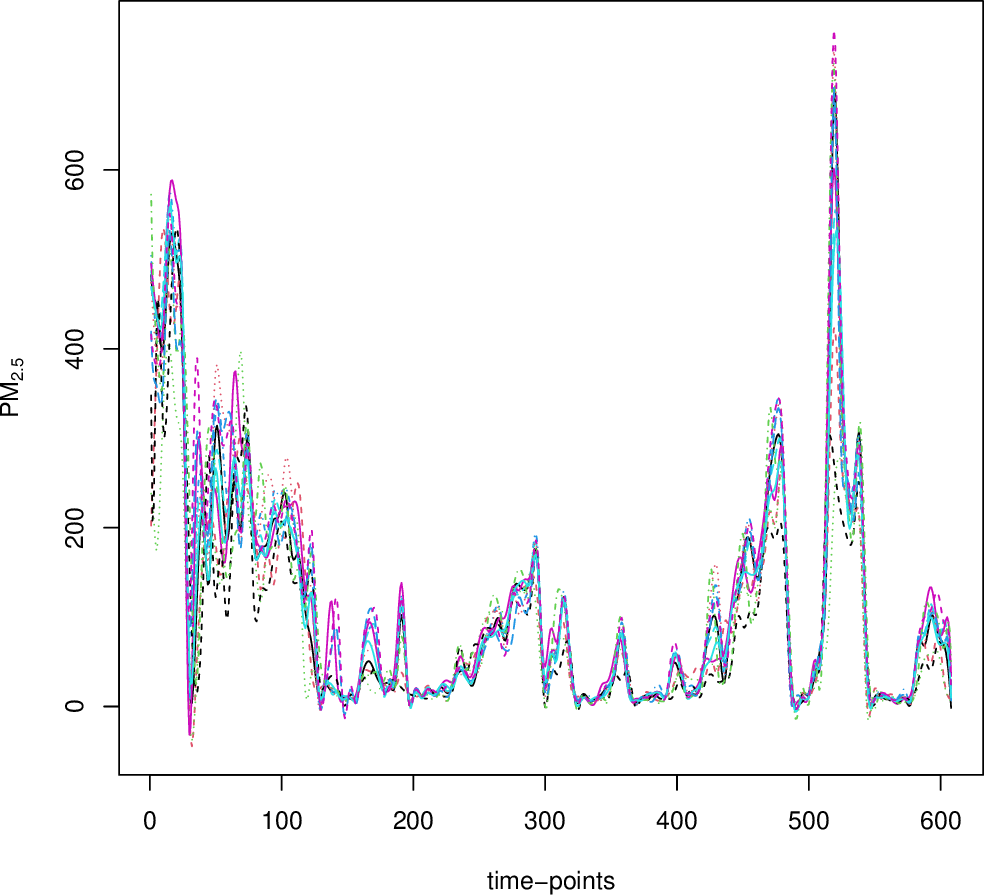}
        \caption{}
    \end{subfigure}%
    \qquad
    \begin{subfigure}[t]{0.33\textwidth}
        \centering
        \includegraphics[width=\textwidth, height=0.8\textwidth]{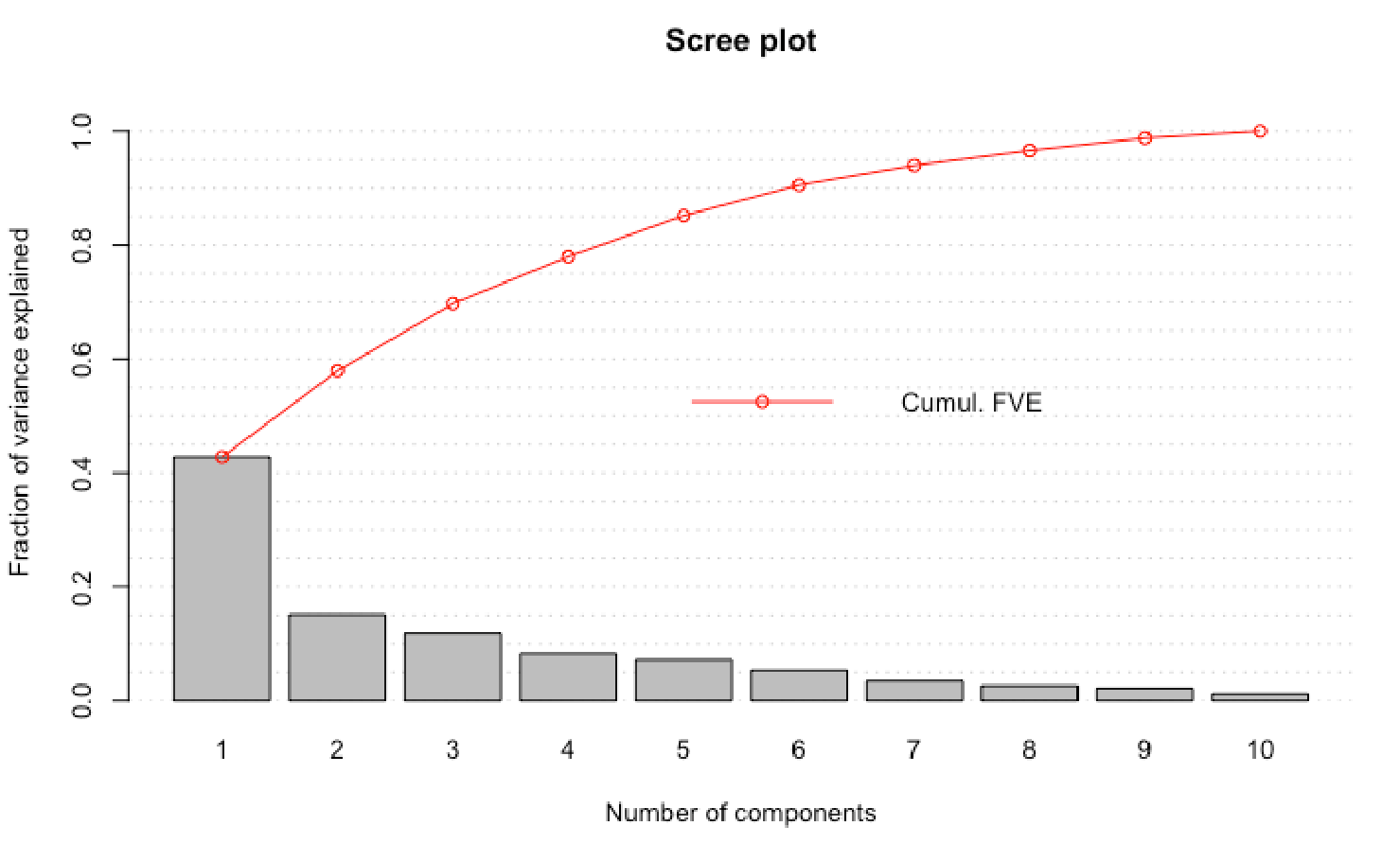}
        \caption{}
    \end{subfigure}
    \caption{\texttt{Beijing2017-data}: (left) Reading of hourly $\PM_{2.5}$ measures for twelve different locations over 608 hourly time-points during January 2017. (right) Scree plots of a fraction of variance explained (FVE).}
    \label{Chapter2-qif-Fig:China}
\end{figure}

\subsection{DTI study for sleep apnea patients}
\label{Chapter2-qif-Sec:real-data-dti}
The diffusion tensor imaging data (DTI) to understand the white matter structural alteration using diffusion tensor imaging (DTI)  are used to illustrate the application of the proposed method and its estimation procedure \citep{xiong2017brain}. MRI is a powerful technique for investigating the structural and functional changes in the brain during pathological and neuro-psychological processes. Due to the advancement in diffusion tensor imaging (DTI), several studies on white matter alterations associated with clinical variables can be found in recent literature.
For our analysis, we use \texttt{Apnea-data} obtained from one such study on obstructive sleep apnea (OSA) patients \citep{xiong2017brain}.
The data consists of 29 male patients between the ages 30-55 years who underwent a study for the diagnosis of continuous positive airway pressure (CPAP) therapy. 
Among those who have sleep disorders other than OSA, night-shift workers, patients with psychiatric disorders, hypertension, diabetes, and other neurological disorders were excluded. 
In this study, the psychomotor vigilance task (PVT) was performed in which a light was randomly switched-on on a screen for several seconds in a certain interval of time, and subjects were asked to press a button as soon as they saw the light appear on screen; such an experiment provides a numerical measure of sleepiness by counting the number of ``lapses'' for each individual.
Psychomotor vigilance task was performed after continuous positive airway pressure treatment which counts the number of lapses in the attention tracking test. In such a test, subjects were asked to press a button as soon as the light appeared on the screen which was turned on randomly for several seconds in a certain interval of time. The interesting measure is  sustained attention and provides a numerical measure of sleepiness by counting the number of ``lapses'' for each individual
\par
DTI was performed at 3T MRI scanner using a commercial 32-channel head coil, followed by the analysis using tract-based spatial statistics to investigate the difference in fractional anisotropy (FA) and other DTI parameters. The image acquisition is as follows. 
An axial T1-weighted image of the brain (3D-BRAVO) is collected with repetition time (TR) = 12ms, echo time (TE) = 5,2ms, flip angle = $13^{\circ}$, inversion time = 450 ms, matrix = $384\times 256$, voxel size = $1.2\times 0.57\times 0.69$mm and scan time = 2 min 54 sec. DTI are obtained in the axial plane using a spin-echo echo planner imaging sequence with TR = 4500ms, TE = 89.4ms, field of view = $20\times 20$cm$^{2}$, matrix size = $160\times 132$, slice thickness = 3mm, slice spacing = 1mm, b-values = 0, 1000 s/mm$^{2}$.
\par
Our objective is to investigate the structural alteration of white matter using DTI in patients with OSA over each voxel at various regions of the brain (called ROIs). 
Thus, our response variable is one of the DTI parameters, viz., fractional anisotropy (FA) and we are interested in studying the effect of the changes of FA over continuous domains such as voxels with the interaction of the lapses and the voxel locations in each ROIs. 
We consider the following model for each ROI.
\begin{equation}
\label{Chapter2-qif-Eq:apnea}
    \text{FA}_{i}(s) = \beta_{0} + \beta_{1}\text{lapses}_{i} \times s + e_{i}(s)
\end{equation}
where $s \in \sS$, a set of voxels in the considered ROIs. Using the Algorithm \ref{Chapter2-qif-Algo:newton-type}, we estimate the coefficients $\beta_{1}$ and $\beta_{2}$ as mentioned in the model \ref{Chapter2-qif-Eq:apnea} and the results are presented in Table \ref{Chapter2-qif-Table:real-data-apnea}. We find that the coefficient estimates are close enough to their initial estimates and the estimated standard error is smaller for the coefficients based on the proposed method. Here $\kappa_{0}$ (i.e., the number of eigen-functions) is determined for simplicity using FVE, which is fixed at 0.99, resulting in $\kappa_{0} = 7$.

\begin{table}[t!]
\caption{
Estimated values and associated standard errors for the regression coefficients are provided upto four decimal places based on the existing and proposed methods. First line corresponding to each ROI shows results based on initial estimates and the second line corresponds to that of proposed estimates.}
\centering
\begin{tabular}{rrrrrr}
  \\
 && \multicolumn{2}{c}{$\beta_{0}$} & \multicolumn{2}{c}{$\beta_{1}$}\\
 region & \# functional points & Estimate & Std. Error $(\times 100)$ & Estimate $(\times 100)$ & Std. Error $(\times 100)$ \\ 
  \\
  ROI.6 & 659  & 0.4512 & 0.1343 & -0.0606 & 0.0130 \\ 
         &  & 0.4512 & 0.0983 & -0.0605 & 0.0028 \\ \\
  ROI.7 & 1362  & 0.5048 & 0.0628 & 0.0309 & 0.0061 \\ 
         &  & 0.5050 & 0.0681 & 0.0342 & 0.0007 \\ \\
  ROI.8 & 1370  & 0.5256 & 0.0586 & -0.0667 & 0.0057 \\ 
         &  & 0.5271 & 0.0346 & -0.0733 & 0.0006 \\ \\
  ROI.9 & 690  & 0.4951 & 0.0910 & 0.2904 & 0.0088 \\ 
         &  & 0.5443 & 0.0874 & 0.1660 & 0.0014 \\ \\
  ROI.10 & 699  & 0.4951 & 0.0892 & 0.3314 & 0.0086 \\ 
         &  & 0.5262 & 0.1398 & 0.4231 & 0.0014 \\ \\
  ROI.11 & 968  & 0.4372 & 0.0979 & 0.1323 & 0.0095 \\ 
         &  & 0.4380 & 0.0637 & 0.1311 & 0.0009 \\ \\
  ROI.12 & 968  & 0.4529 & 0.0948 & 0.0965 & 0.0092 \\ 
         &  & 0.4664 & 0.0750 & 0.0504 & 0.0013 \\ \\
  ROI.13 & 992  & 0.5448 & 0.1060 & 0.3453 & 0.0103 \\ 
         &  & 0.5449 & 0.0856 & 0.3559 & 0.0011 \\ \\
  ROI.14 & 992  & 0.5435 & 0.1068 & 0.3432 & 0.0104 \\ 
         &  & 0.5436 & 0.0754 & 0.3437 & 0.0003 \\ \\
  ROI.37 & 1236  & 0.3695 & 0.0779 & -0.1126 & 0.0076 \\ 
         &  & 0.3713 & 0.0669 & -0.1175 & 0.0017 \\ \\
  ROI.38 & 1155  & 0.3564 & 0.0819 & -0.1356 & 0.0079 \\ 
         &  & 0.3578 & 0.0420 & -0.1356 & 0.0009 \\ \\
  ROI.39 & 1124  & 0.4618 & 0.0760 & 0.1972 & 0.0074 \\ 
         &  & 0.4621 & 0.0615 & 0.1996 & 0.0007 \\ \\
  ROI.40 & 1125  & 0.4786 & 0.0658 & 0.0953 & 0.0064 \\ 
         &  & 0.4780 & 0.0369 & 0.1016 & 0.0005 \\ \\
  ROI.45 & 380  & 0.4189 & 0.1071 & 0.1647 & 0.0104 \\ 
         &  & 0.4190 & 0.0175 & 0.1648 & 0.0001 \\ \\
  ROI.46 & 376  & 0.4074 & 0.1033 & 0.1988 & 0.0100 \\ 
         &  & 0.4074 & 0.0159 & 0.1994 & 0.0002 \\ \\
  ROI.47 & 596  & 0.4596 & 0.0932 & 0.1304 & 0.0090 \\ 
         &  & 0.4594 & 0.0191 & 0.1349 & 0.0001 \\ \\
  ROI.48 & 600  & 0.4045 & 0.0868 & 0.1100 & 0.0084 \\ 
         &  & 0.4036 & 0.0644 & 0.1067 & 0.0006 \\
  \\
\end{tabular}
\label{Chapter2-qif-Table:real-data-apnea}
\end{table}

\section{Discussion}
\label{Chapter2-qif-Sec:discussion}
In this article, we propose an estimation procedure for the constant linear effects model, which is commonly used in statistics \citep{zhang2021spatial} especially in spatial modeling. 
One of the key factors of this estimation procedure is the fact that it is based on the quadratic inference methodology that has served a huge role in the analysis of correlated data since it was discovered by \citet{qu2000improving}. 
In contrast with the existing method, our approach allows the number of repeated measurements to grow with sample size; therefore, the trajectories of individuals can be observed on a dense grid of a continuum domain. 
Instead of assuming a working correlation structure, we propose a data-driven way by estimating the eigen-functions that are obtained by functional principal component analysis. Here, we achieve $\sqrt{n}-$consistency of the parametric estimates in the regression model, even though the eigen-functions are estimated non-parametrically.
\par
Additionally, our method is easy to implement in a wide range of applications. The applicability of the proposed method is illustrated by extensive simulation studies. Moreover, two real-data applications in different scientific domains are provided which confirm the efficacy of the proposed method.

\section*{Acknowledgement}
We would like to thank Dr. Xiaohong Joe Zhou of the University of Illinois at Chicago for providing the \texttt{Apnea-data} used in Section \ref{Chapter2-qif-Sec:real-data-dti}. 

\appendix

\section{Some preliminary definitions and concepts of operators}
\label{Chapter2-qif-Subsec:preliminary}
Consider the standard $\sL^{2}[0, 1]$ space 
that defines the set of square-integrable functions defined on the closed set $[0,1]$ that takes values on the real line. 
The space $\sL^{2}[0,1]$ is equipped with an inner product and is defined as $\left< f, g \right> = \int_{0}^{1}f(t)g(t)dt$
for $f$ and $g$ in that space,
and forms a Hilbert space. 
Moreover, we denote the norm $\|\cdot\|_{2}$ in $\sL^{2}$ 
which is defined as $\|f\|_{2} 
= \left\{
    \int f^{2}(u)du
\right\}^{1/2}$. 
Define $\sF$ be an operator that assigns an element $f$ in $\sL^{2}[0,1]$ to a new element $\sF f$ in $\sL^{2}[0,1]$ moreover, $\sF$ is linear and bounded. 
A linear mapping $\sF f(\cdot) = \int R(\cdot, u)f(u)du$  for any function $f \in \sL^{2}[0,1]$ 
and for some integrable function $R(\cdot, \cdot)$ on 
$[0, 1]\times[0, 1]$. 
This function is preferably known as 
integral operator 
and the bivariate function $R$ is known as a kernel in statistics and functional analysis literature. 
Furthermore, 
under the assumption that $\int \int R^{2}(u,v)dudv < \infty$.
It is easy to see that $\sF f(\cdot)$ is uniformly continuous 
and compact for a non-negative definite symmetric kernel $R$. 
For some $\lambda$, in Fredholm integral equation
, $\sF\phi = \lambda\phi$
has non-zero solution $\phi$ then we call $\lambda$ as eigen-value of $\sF$ and the solution of the eigen-equation is called eigen-functions, altogether, the pair of eigen-values and eigen-function, viz., $(\lambda, \phi)$ are called eigen-elements. 
Due to non-negative definiteness of $\sF$, the eigen-values are ordered as $\lambda_{1} \geq \lambda_{2} \geq \cdots \geq 0$ .
\par
Suppose for self-adjoint compact operator on Hilbert space $\sH$
consider two operators $\sF$ and $\sG$, 
define perturbation operator  $\Delta = \sG - \sF$ such that 
$\sG = \sF + \Delta$ where $\sG$ is an approximation to $\sF$
where $\Delta$ amount of error is occurred. 
Let $\sF$ and $\sG$ have kernels $F$ and $G$ respectively with eigen-elements $(\theta_{r}, \psi_{r})$ and $(\lambda_{r}, \phi_{r})$. 
For simplicity, we assume that the eigen-values are distinct. Then the following Lemma provides perturbation of the eigen-functions. 

\begin{lemma}[Theorem 5.1.8 in \citet{hsing2015theoretical}]
\label{Chapter2-qif-Lemma:perturbation}
Let $(\lambda, \phi)$ be the eigen-components of $\sF$ and $(\theta, \psi)$ be that of $\sG$ with multiplicity of all eigen-values are restricted to be 1. 
Define $\eta_{k} = \min_{r\neq k}|\lambda_{r} - \lambda_{k}|$.
Assume $\left< \phi_{r}, \psi_{r}\right> \geq 0$ and $\eta_{k} > 0$. 
Then 
\begingroup
\allowdisplaybreaks
\begin{equation}
\label{Chapter2-qif-Eq:perturbationGeneral}
\psi_{k} - \phi_{k} 
    = \underset{r \neq k}{\sum_{r=1 }^{\infty}}
    (\theta_{k} - \lambda_{r})^{-1}
    \sP_{r}\Delta\psi_{k} + 
    \sP_{k}(\psi_{k} - \phi_{k}).
\end{equation}
\endgroup
The above equation follows 
\begingroup
\allowdisplaybreaks
\begin{equation}
\label{Chapter2-qif-Eq:perturbationBound}
    \psi_{k} - \phi_{k} = 
        \underset{r \neq k}{\sum_{r=1 }^{\infty}} 
        (\theta_{k} - \lambda_{r})^{-1}
        \sP_{r}\Delta\psi_{k} + O(\|\Delta\|^{2}).
\end{equation}
\endgroup
\end{lemma}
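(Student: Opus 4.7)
The plan is to derive both identities by projecting the perturbed eigen-equation for $\sG$ onto the spectral basis of $\sF$, and then controlling the component of $\psi_k - \phi_k$ along $\phi_k$ using the unit-norm and sign conventions. Let $\sP_r$ denote the rank-one spectral projection $\sP_r f = \langle f, \phi_r\rangle \phi_r$ associated with the eigenfunction $\phi_r$ of $\sF$. Starting from $\sG\psi_k = \theta_k \psi_k$ and $\sG = \sF + \Delta$, one obtains the basic identity
\begin{equation*}
(\theta_k \sI - \sF)\psi_k = \Delta \psi_k.
\end{equation*}

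For each $r \neq k$, applying $\sP_r$ and using the commutation $\sP_r \sF = \lambda_r \sP_r$ (a consequence of $\sF$ being self-adjoint and $\sF\phi_r = \lambda_r\phi_r$) yields $(\theta_k - \lambda_r)\sP_r\psi_k = \sP_r \Delta \psi_k$. A standard eigenvalue perturbation bound gives $|\theta_k - \lambda_k| \leq \|\Delta\|$, so for $\|\Delta\|$ sufficiently small $|\theta_k - \lambda_r| \geq \eta_k - \|\Delta\| \geq \eta_k/2 > 0$ for every $r \neq k$, and we may invert to obtain $\sP_r\psi_k = (\theta_k - \lambda_r)^{-1}\sP_r \Delta \psi_k$. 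Since the eigenfunctions of the self-adjoint compact operator $\sF$ form an orthonormal basis, expanding $\psi_k = \sP_k \psi_k + \sum_{r \neq k} \sP_r \psi_k$ and subtracting $\phi_k = \sP_k \phi_k$ yields the first identity of the lemma.

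To pass to the second identity, it remains to prove that $\sP_k(\psi_k - \phi_k) = O(\|\Delta\|^2)$. Writing $\sP_k(\psi_k - \phi_k) = (\langle\psi_k,\phi_k\rangle - 1)\phi_k$ and combining $\|\psi_k\| = \|\phi_k\| = 1$ with the sign convention $\langle\phi_k, \psi_k\rangle \geq 0$, a direct expansion of $\|\psi_k - \phi_k\|^2 = 2 - 2\langle\psi_k,\phi_k\rangle$ gives
\begin{equation*}
\|\sP_k(\psi_k - \phi_k)\| = 1 - \langle \psi_k, \phi_k\rangle = \tfrac{1}{2}\|\psi_k - \phi_k\|^2.
\end{equation*}
Thus it suffices to establish the first-order bound $\|\psi_k - \phi_k\| = O(\|\Delta\|)$. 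By Parseval's identity, $\|\psi_k - \phi_k\|^2 = \|\sP_k(\psi_k - \phi_k)\|^2 + \sum_{r \neq k}\|\sP_r(\psi_k - \phi_k)\|^2$; the off-diagonal sum equals $\sum_{r \neq k} |\theta_k - \lambda_r|^{-2}\|\sP_r \Delta \psi_k\|^2 \leq 4\eta_k^{-2}\|\Delta\|^2$ by the spectral-gap estimate above, while the diagonal term equals $\|\psi_k - \phi_k\|^4/4$. Setting $x = \|\psi_k - \phi_k\|^2$, this yields the self-consistent inequality $x \leq x^2/4 + 4\eta_k^{-2}\|\Delta\|^2$; for $\|\Delta\|$ small enough that $x \leq 2$, the factor $1 - x/4 \geq 1/2$ forces $x = O(\|\Delta\|^2)$, which then gives $\|\sP_k(\psi_k - \phi_k)\| = O(\|\Delta\|^2)$.

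I expect the main obstacle to be the spectral-gap step that promotes the first identity into the linear bound $\|\psi_k - \phi_k\| = O(\|\Delta\|)$: it requires a careful interplay between the eigenvalue perturbation $|\theta_k - \lambda_k| \leq \|\Delta\|$, the assumption $\eta_k > 0$, and the Parseval decomposition, together with the observation that the self-consistent inequality in $x$ admits only the small-solution branch for small perturbations. Once this linear bound is in place, the sign convention $\langle \phi_k, \psi_k\rangle \geq 0$ automatically upgrades it to the quadratic bound on $\sP_k(\psi_k - \phi_k)$, so the lemma follows.
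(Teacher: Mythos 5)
Your proof is correct. Note, however, that the paper does not actually prove this lemma: it is imported verbatim as Theorem 5.1.8 of \citet{hsing2015theoretical}, and the only argument the paper supplies in its vicinity (the remark that follows) concerns a further geometric-series expansion of $(\theta_k-\lambda_r)^{-1}$ around $(\lambda_k-\lambda_r)^{-1}$, not the lemma itself. What you have written is therefore a genuinely self-contained replacement for the citation, and it is the standard one: project the perturbed eigen-equation $(\theta_k\sI-\sF)\psi_k=\Delta\psi_k$ onto each $\sP_r$, invert on the spectral gap to get the first identity, and then control the diagonal component via the exact relation $\|\sP_k(\psi_k-\phi_k)\|=\tfrac12\|\psi_k-\phi_k\|^2$ forced by the unit norms and the sign convention, closing the loop with the self-consistent inequality $x\le x^2/4+4\eta_k^{-2}\|\Delta\|^2$. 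All the individual steps check out, including the Bessel bound $\sum_{r\neq k}\|\sP_r\Delta\psi_k\|^2\le\|\Delta\|^2$ and the observation that $x=2-2\langle\psi_k,\phi_k\rangle\le 2$ (which, as you could have noted, is automatic from $\langle\phi_k,\psi_k\rangle\ge 0$ rather than needing $\|\Delta\|$ small). The one caveat worth making explicit is that the identity in the first display is only meaningful when $\theta_k\neq\lambda_r$ for all $r\neq k$, which your argument secures by assuming $\|\Delta\|<\eta_k/2$; this smallness condition is implicit in the lemma as stated (and in the source theorem), so adding it is harmless, but you should state it as a hypothesis rather than leave it as an aside. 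Your route buys a proof that does not depend on the reference and makes the $\eta_k^{-2}$ dependence hidden in the $O(\|\Delta\|^2)$ term explicit; the paper's route buys brevity at the cost of opacity about exactly that constant.
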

\begin{remark}
    Equation (\ref{Chapter2-qif-Eq:perturbationBound}) plays an important role in finding the bound of the proposed estimator introduced in Section 2.2. 
    Note that $\sup_{r\geq 1}|\theta_{r} - \lambda_{r}| \leq \|\Delta\| \leq \inf_{r \neq k}|\lambda_{k}-\lambda_{r}|$ (see Theorem 4.2.8 in \citet{hsing2015theoretical} for proof). 
    Thus, it is easy to see, $|\theta_{r} - \lambda_{r}| \leq |\lambda_{k} - \lambda_{r}|$ 
    which implies from Equation (\ref{Chapter2-qif-Eq:perturbationGeneral})
   \begingroup
   \allowdisplaybreaks
    \begin{align*}
    &\psi_{k} - \phi_{k} = \underset{r \neq k}{\sum_{r=1 }^{\infty}}
                (\lambda_{k} -\lambda_{r})^{-1}
                \sum_{s = 0}^{\infty}
                \left\{
                    {(\lambda_{k} - \theta_{r})}/{(\lambda_{k} - \lambda_{r})}
                \right\}^{s}\sP_{r}\Delta\{\phi_{k} + 
                (\psi_{k} - \phi_{k})\} + 
                \sP_{k}(\psi_{k} - \phi_{k})\\
                &= \underset{r \neq k}{\sum_{r=1 }^{\infty}}
                (\lambda_{k} - \lambda_{r})^{-1}
                \sP_{r}\Delta\phi_{k}
                + \underset{r \neq k}{\sum_{r=1 }^{\infty}}
                (\lambda_{k} - \lambda_{r})^{-1}
                    \sP_{r}\Delta(\psi_{k} - \phi_{k})\\
                & \qquad + \underset{r \neq k}{\sum_{r=1 }^{\infty}}\sum_{s = 1}^{\infty}
                    \left\{{(\lambda_{k} - \lambda_{s})^{s}}/{(\lambda_{k} - \lambda_{r})^{s+1}}\right\}
                    \sP_{r}\Delta\psi_{k}
                + \sP_{k}(\psi_{k} - \phi_{k}).
        \numberthis
    \end{align*}
    \endgroup
    Moreover, using Bessel's inequality, we can bound last three terms in the above equation by $\|\Delta\|^{2}$.
\end{remark}

\section{Some useful lemmas}
In this section, we present some useful lemmas. 
For convenience, let us recall the notation. 
Assume that $m_{i}$s are all of the same order, viz, $m \equiv m(n)$. 
Define, 
$d_{n1}(h) = h^{2} + h\overline{m}/m$ and 
$d_{n2}(h) = h^{4} + h^{3}\overline{m}/m + h^{2}\rttensor{m}/m^{2}$
where $\overline{m} = \lim{\sup}_{n\rightarrow \infty}n^{-1}\sum_{i=1}^{n}m/m_{i}$ 
and $\rttensor{m} = \lim{\sup}_{n\rightarrow \infty}n^{-1}\sum_{i=1}^{n}(m/m_{i})^{2}$. 
Denote 
$\delta_{n1}(h) = \left\{d_{n1}(h)\log n/(nh^{2})\right\}^{1/2}$, 
$\delta_{n2}(h) = \left\{d_{n2}(h)\log n/(nh^{4})\right\}^{1/2}$ and $\overline{\delta}_{n}(h) = h^{2} + \delta_{n1}(h) + \delta_{n2}^{2}(h)$. 
Further, $\nu_{a, b} = \int t^{a}K^{b}(t)dt$. 
Define, 
$\bW = (\bphi(t_{1})^{\tp}, \cdots, \bphi(t_{m})^{\tp})^{\tp}$ be matrix of order $m \times \kappa_{0}$ obtained after stacking all $\bphi_{k}$s and random components $\bxi_{i} = (\xi_{i1}, \cdots, \xi_{i\kappa_{0}})^{\tp}$. 
Further, $\bxi$ has mean zero and variance $\bLambda$ which is a diagonal matrix with components $\lambda_{1}, \cdots, \lambda_{k_{0}}$. 
The sign $'\lesssim'$ indicates that for two sequence of positive real numbers ${b}_{{n}1}$ and ${b}_{{n}2}$ we define for large n, ${b}_{{n}1}\lesssim {b}_{{n2}}$ as ${b}_{{n}1} \leq C {b}_{{n}2}$ where $C$ is a positive constant not involving $n$.

\begin{lemma}
\label{Chapter2-qif-Lemma:uniform-rate}
Consider $Z_{1}, \cdots, Z_{n}$ be independent and identically distributed random variables with mean zero and finite variance. 
Suppose that there exists an $M$ such that $P(|Z_{i}| \leq M) = 1$ for all $i = 1, \cdots, n$.
Let
$T_{n} = n^{-1}\sum_{i=1}^{n}Z_{i}$. 
then, 
$n^{-1}\sum_{i=1}^{n}Z_{i} = O((\log n/n)^{1/2}$ almost surely.
If $\sqrt{Var(T_{n})} = O\{(\log n/n)^{1/2}\}$ then $T_{n} = O(\log n/n)$ almost surely.
\end{lemma}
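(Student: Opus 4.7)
The plan is to view both parts as consequences of an exponential concentration inequality (Bernstein's inequality) combined with the first Borel--Cantelli lemma. The general strategy is to choose a deterministic threshold $t_n$ matched to the claimed rate, show that $\sum_n P(|T_n|>t_n)<\infty$, and conclude $|T_n|\leq t_n$ eventually, almost surely.

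\textbf{Part 1.} Since the $Z_i$ are i.i.d.\ with mean zero, finite variance $\sigma^2=\mathrm{Var}(Z_1)$, and $|Z_i|\leq M$ a.s., Bernstein's inequality applied to $\sum_{i=1}^n Z_i = nT_n$ yields, for every $t>0$,
\begin{equation*}
P(|T_n|>t)\;\leq\;2\exp\!\left(-\frac{nt^2/2}{\sigma^2+Mt/3}\right).
\end{equation*}
Set $t_n=C(\log n/n)^{1/2}$ for a constant $C$ to be chosen. For $n$ large the denominator is bounded above by $\sigma^2+1\leq C_1$, so the exponent is $\leq -C^2\log n/(2C_1)$. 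Taking $C$ large enough that $C^2/(2C_1)>1$ makes the right-hand side $\leq 2n^{-\lambda}$ for some $\lambda>1$, which is summable. Borel--Cantelli then gives $|T_n|\leq C(\log n/n)^{1/2}$ eventually a.s., i.e.\ the claimed rate.

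\textbf{Part 2.} Under the additional condition $\sqrt{\mathrm{Var}(T_n)}=O((\log n/n)^{1/2})$, I would reapply the same Bernstein bound but with the refined variance information inserted in the denominator. The choice of threshold $t_n$ is then dictated by balancing the two terms in that denominator: the variance contribution $\mathrm{Var}(T_n)$ and the boundedness contribution $Mt_n/n$. Taking $t_n$ at the rate asserted in the statement and verifying (with the refined variance bound in hand) that the exponent is of order $-c\log n$ for a constant $c>1$ makes the tails summable, and Borel--Cantelli again delivers the a.s.\ conclusion.

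\textbf{Main obstacle.} The delicate point is Part 2. With the condition $\sqrt{\mathrm{Var}(T_n)}=O((\log n/n)^{1/2})$ one naturally obtains an a.s.\ rate governed by $\sqrt{\mathrm{Var}(T_n)\log n}$, matching the variance-controlled exponent in Bernstein's inequality. Achieving the sharper $\log n/n$ rate stated in the conclusion requires either (i) a stronger reading of the variance hypothesis than the literal one (so that the sub-Gaussian correction $Mt_n/n$ is the dominant term in the Bernstein denominator at $t_n\asymp\log n/n$), or (ii) exploiting that in the applications within the paper $Z_i$ is a smoothing-weighted increment whose effective magnitude also shrinks with $n$, so the constant $M$ can be replaced by a data-driven $M_n$. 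In either case, the technical task is to certify the balance of the two terms in the Bernstein denominator at the stated threshold; once that is done, the Borel--Cantelli step is routine.
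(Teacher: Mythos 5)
Your approach is the same as the paper's: Bernstein's inequality for the bounded, centered summands followed by Borel--Cantelli with a deterministic threshold tuned so that the tail probabilities are summable. Part~1 of your argument matches the paper's proof essentially line for line and is correct.

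For Part~2, the obstacle you flag is genuine, and it is precisely where the paper's own proof is shaky. The paper states Bernstein's inequality as $P(|T_n|\ge u)\le\exp\{-nu^2/(2\sigma_n^2+2Mu/3)\}$ with $\sigma_n^2=\mathrm{Var}(T_n)$, chooses $u=\sqrt{4\sigma_n^2\log n/n}+4M\log n/(3n)$, and then notes that $\sigma_n\le\sqrt{4M^2\log n/(9n)}$ forces $u=O(\log n/n)$. But for the sample mean of $n$ independent bounded centered summands the correct denominator is $2\,\mathrm{Var}(Z_1)+2Mu/3=2n\,\mathrm{Var}(T_n)+2Mu/3$, not $2\,\mathrm{Var}(T_n)+2Mu/3$: the paper's display conflates the per-summand variance with the variance of the average. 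Under the literal hypothesis $\mathrm{Var}(T_n)=O(\log n/n)$, the correctly calibrated Bernstein threshold is of order $\max\bigl\{\sqrt{\mathrm{Var}(T_n)\log n},\,M\log n/n\bigr\}\asymp\log n/\sqrt{n}$, exactly as you computed, so the stated $O(\log n/n)$ rate does not follow from this argument. It does follow under your ``stronger reading'' (i), namely $\mathrm{Var}(Z_1)=O(\log n/n)$ (equivalently $\mathrm{Var}(T_n)=O(\log n/n^{2})$), which is what the paper's computation implicitly assumes. So your Part~1 is complete and agrees with the paper; your Part~2 correctly diagnoses that the second claim, as literally written, cannot be certified by this (or the paper's own) argument without strengthening the variance hypothesis.
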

\begin{proof}
Bernstein’s inequality states that 
if $Z_{1}, \cdots, Z_{n}$ be centered independent bounded random variables with probability 1. 
Let $T_{n} = n^{-1}\sum_{i=1}^{n}Z_{i}$, then
let $\Var\{T_{n}\} = \sigma_{n}^{2}$. 
Then for any positive real number $u$, 
we have $P(|T_{n}| \geq u) \leq \exp\{-{nu^{2}}/(2\sigma^{2}_{n} + 2Mu/3)\}$ 
where $M$ is such that $P(|Z_{i}| \leq M) = 1$.
Moreover, if $T_{n}$ converges to its limit in probability fast enough, 
then it converge almost surely in the limit, i.e., if for any $u > 0$, $\sum_{n=1}^{\infty}P(|T_{n}| \geq u) < \infty$ them $T_{n}$ converges to zero almost surely.
Now, choose $u = \sqrt{{4\sigma^{2}_{n}\log n}/{n}} + {4M\log n}/{3n}$.
Thus, $\sum_{n=1}^{\infty}P(|T_{n}| \geq u) < \sum_{n = 1}^{\infty}1/n^{2}$ which is finite. 
Therefore, $T_{n} = O(u)$ almost surely. Now let $\sigma_{n} \leq \sqrt{4M^{2}\log n/9n}$, we have, $T_{n} = O(\log n/n)$ and if $\sigma_{n} = O(1)$ then $T_{n} = O((\log n/n)^{1/2})$ almost surely.
\end{proof}

\begin{lemma}
\label{Chapter2-qif-Lemma:phi-rate}
Suppose $T_{ij}$ are i.i.d. with density $f_{T}$. Then for 
fixed $i=1, \cdots, n$, any $k$ and $l \geq 1$, under assumptions (C2) and (C6)c, 
$m_{i}^{-1}\sum_{j=1}^{m_{i}}
    \phi_{k}(T_{ij})\phi_{l}(T_{ij}) 
    = \textbf{1}(k = l) + O(\left({\log m_{i}}/{m_{i}}\right)^{1/2})$
    {almost surely}
where $\textbf{1}$ is the indicator function.    

\end{lemma}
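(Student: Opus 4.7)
The identity to establish is an almost-sure Bernstein-type law of large numbers for the empirical second moment of $\phi_k(T_{ij})\phi_l(T_{ij})$ across $j=1,\dots,m_i$ at fixed $i$. My plan is to view $Z_{ij}:=\phi_k(T_{ij})\phi_l(T_{ij})$ as i.i.d.\ bounded random variables (across $j$, for fixed $i,k,l$), identify the population mean as $\mathbf{1}(k=l)$ via orthonormality, and then invoke Lemma~\ref{Chapter2-qif-Lemma:uniform-rate} on the centered sum.

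\textbf{Step 1 (bounded integrand).} Fix $k,l\ge 1$. Under (C3) the kernel $R$ is twice differentiable on $[0,1]^2$ and hence continuous, so by Mercer's theorem each eigen-function $\phi_r$ is continuous on the compact set $[0,1]$, and in particular bounded. The product $\phi_k(t)\phi_l(t)$ therefore satisfies $|\phi_k(t)\phi_l(t)|\le M_{k,l}$ for some constant $M_{k,l}<\infty$. This is what allows us to use the bounded-variable Bernstein bound built into Lemma~\ref{Chapter2-qif-Lemma:uniform-rate}.

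\textbf{Step 2 (mean and variance).} Since $T_{i1},\dots,T_{im_i}$ are i.i.d.\ with density $f_T$ on $[0,1]$, the variables $Z_{ij}$ are i.i.d.\ with
\begin{equation*}
\E[Z_{ij}] \;=\; \int_{0}^{1}\phi_k(t)\phi_l(t)\,f_T(t)\,dt,
\end{equation*}
which, in the uniform-design convention $f_T\equiv 1$ used throughout the eigen-expansion, equals the $\sL^2[0,1]$ inner product $\langle\phi_k,\phi_l\rangle=\mathbf{1}(k=l)$ by orthonormality. For the variance, $\Var(Z_{ij})\le \E[Z_{ij}^2]=\int \phi_k^2\phi_l^2 f_T \le \|f_T\|_\infty \int \phi_k^2\phi_l^2$, which is finite by (C2) and (C6)(c). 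Thus the centered variables $\widetilde Z_{ij}:=Z_{ij}-\E[Z_{ij}]$ are i.i.d., mean zero, bounded by $2M_{k,l}$, and have $O(1)$ variance.

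\textbf{Step 3 (apply Lemma~\ref{Chapter2-qif-Lemma:uniform-rate}).} Applying Lemma~\ref{Chapter2-qif-Lemma:uniform-rate} with sample size $m_i$ to $\{\widetilde Z_{ij}\}_{j=1}^{m_i}$ yields
\begin{equation*}
m_i^{-1}\sum_{j=1}^{m_i}\widetilde Z_{ij} \;=\; O\!\left((\log m_i/m_i)^{1/2}\right)\quad\text{a.s.},
\end{equation*}
which combined with Step 2 gives the stated expansion. \textbf{Main obstacle.} The only subtlety is that Bernstein's inequality requires uniform boundedness of the summands, whereas an abstract $L^2$-orthonormal basis need not be bounded. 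Handling this rigorously is the point at which (C3) is essential: it upgrades the Mercer eigen-functions from $L^2$ to continuous, hence bounded on $[0,1]$. Everything else—centering, variance bookkeeping via (C6)(c), and the direct invocation of Lemma~\ref{Chapter2-qif-Lemma:uniform-rate}—is routine.
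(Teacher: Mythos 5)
Your proof is correct and takes essentially the same route as the paper's: identify the mean of $m_i^{-1}\sum_j \phi_k(T_{ij})\phi_l(T_{ij})$ as $\mathbf{1}(k=l)$ via orthonormality, verify that the variance is controlled using (C2) and (C6)(c), and then invoke Lemma~\ref{Chapter2-qif-Lemma:uniform-rate}. Your Step~1 --- justifying uniform boundedness of $\phi_k\phi_l$ through the continuity of Mercer eigen-functions under (C3), which the Bernstein-based Lemma~\ref{Chapter2-qif-Lemma:uniform-rate} actually requires --- and your explicit flagging of the $f_T\equiv 1$ convention in the mean computation are details the paper's own proof leaves implicit, so they are welcome additions rather than deviations.
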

\begin{proof}
Observe,
$\E\left\{ m_{i}^{-1} \sum_{j=1}^{m_{i}}\phi_{k}(T_{ij})\phi_{l}(T_{ij}) \right\}
= \int\phi_{k}(t)\phi_{l}(t)dt = \textbf{1}(k = l)$
and, 
\begingroup
\allowdisplaybreaks
\begin{align*}
    &\Var\left\{m_{i}^{-1}\sum_{j = 1}^{m_{i}}\phi_{k}(T_{ij})\phi_{l}(T_{ij})\right\}
    = \E\left\{ 
            m_{i}^{-1}\sum_{j=1}^{m_{i}}\phi_{k}(T_{ij})\phi_{l}(T_{ij})
        \right\}^{2} - \textbf{1}(k=1)\\
    &= m_{i}^{-2}\sum_{j=1}^{m_{i}}
    \E\{
        \phi_{k}^{2}(T_{ij})\phi_{l}^{2}(T_{ij})
    \} 
    + m_{i}^{-2}
    \underset{j_{1}\neq j_{2}}{\sum_{j_{1} =1}^{m_{i}}\sum_{j_{2}=1}^{m_{i}}}
    \E\{
        \phi_{k}(T_{ij_{1}})\phi_{k}(T_{ij_{2}})\phi_{l}(T_{ij_{1}})\phi_{l}(T_{ij_{2}})
    \}\\
    & \qquad - \textbf{1}(k = l)\\
    &= m_{i}^{-2}\sum_{j=1}^{m_{i}}
    \E\{
        \phi_{k}^{2}(T_{ij})\phi_{l}^{2}(T_{ij})
    \} 
    + m_{i}^{-2}
    \underset{j_{1}\neq j_{2}}{\sum_{j_{1} =1}^{m_{i}}\sum_{j_{2}=1}^{m_{i}}}
    \E\{
        \phi_{k}(T_{ij_{1}})\phi_{l}(T_{ij_{1}})
    \}
    \E\{
        \phi_{k}(T_{ij_{2}})\phi_{l}(T_{ij_{2}})
    \}\\
    & \qquad - \textbf{1}(k = l)\\
    & = \begin{cases}
            m_{i}^{-1}\int\phi_{k}^{4}(t)dt + (m_{i}-1)/m_{i}(\int\phi_{k}^{2}(t)dt)^{2} -1 
            & \text{if } k = l\\
            m_{i}^{-1}\int\phi_{k}^{2}(t)\phi_{l}^{2}(t)dt +
            (m_{i}-1)/m_{i}(\int\int\phi_{k}(t)\phi_{l}(t)\phi_{k}(t')\phi_{l}(t')dtdt') 
            & \text{if } k\neq l
    \end{cases}\\ 
    &= O(1/m_{i}).\numberthis
\end{align*}
\endgroup
Therefore, applying the Lemma \ref{Chapter2-qif-Lemma:uniform-rate}, the result is immediate.
\end{proof}
\begin{lemma}
\label{Chapter2-qif-Lemma:mu-phi-rate}
Suppose $T_{ij}$ are i.i.d with density $f_{T}$. Then for fixed $i=1, \cdots, n$, 
for any $k \geq 1$, 
under assumptions (C2), (C6)c,  $m_{i}^{-1}\sum_{j=1}^{m_{i}}\dot{\mu}_{i}(T_{ij})\phi_{k}(T_{ij}) = \int\dot{\mu}_{i}(t)\phi_{k}(t)dt 
    + O\left((\log m_{i}/m_{i})^{1/2}\right)$ almost surely.
\end{lemma}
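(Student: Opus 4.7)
The plan is to follow the structure of the preceding Lemma~\ref{Chapter2-qif-Lemma:phi-rate} almost verbatim, with $\dot{\mu}_i(T_{ij})$ taking the place of one factor of $\phi_l(T_{ij})$. Fix $i$ and set $Z_{ij} = \dot{\mu}_i(T_{ij})\phi_k(T_{ij})$; these are i.i.d.\ across $j=1,\ldots,m_i$, so the statement reduces to showing that $m_i^{-1}\sum_{j=1}^{m_i} Z_{ij}$ converges almost surely to its expectation at the rate $(\log m_i/m_i)^{1/2}$. The machinery is Lemma~\ref{Chapter2-qif-Lemma:uniform-rate}, i.e.\ Bernstein's inequality together with a Borel--Cantelli step.

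The computational core is a mean-plus-variance calculation. Following the same density convention adopted in the proof of Lemma~\ref{Chapter2-qif-Lemma:phi-rate}, Condition~\ref{Chapter2-qif-Cond:density} gives $\E\{Z_{ij}\} = \int \dot{\mu}_i(t)\phi_k(t)\,dt$, which identifies the desired centering. For the variance, since the $Z_{ij}$ are i.i.d.\ for fixed $i$,
\[
\Var\!\left\{m_i^{-1}\sum_{j=1}^{m_i} Z_{ij}\right\} = m_i^{-1}\left[\int \dot{\mu}_i^2(t)\phi_k^2(t)\,dt - \left(\int \dot{\mu}_i(t)\phi_k(t)\,dt\right)^2\right] = O(m_i^{-1}),
\]
where Condition~\ref{Chapter2-qif-Cond:phi}, which explicitly postulates the finiteness of $\int \dot{\mu}^2(t)\phi_k^2(t)\,dt$, is the crucial ingredient. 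In particular the standard deviation of the sample mean is $O(m_i^{-1/2})$, which is certainly $O(1)$, so the second clause of Lemma~\ref{Chapter2-qif-Lemma:uniform-rate} delivers the stated almost-sure rate $O((\log m_i/m_i)^{1/2})$.

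The only genuine obstacle is the uniform boundedness of the summands $Z_{ij}$ required by the Bernstein step inside Lemma~\ref{Chapter2-qif-Lemma:uniform-rate}. This is not one of the enumerated hypotheses, but is forced by the continuity of $\dot{\mu}_i$ (inherited from the smoothness of $\mu$) together with the standard assumption that the continuous eigenfunction $\phi_k$ is bounded on the compact domain $[0,1]$. Since this same implicit boundedness already underlies the proof of Lemma~\ref{Chapter2-qif-Lemma:phi-rate}, no further assumption is needed, and the proof mirrors its predecessor with only the variance bookkeeping changed.
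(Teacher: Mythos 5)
Your proposal is correct and follows essentially the same route as the paper: identify the expectation as $\int\dot{\mu}_{i}(t)\phi_{k}(t)\,dt$, bound the variance of the sample mean by $O(1/m_{i})$ using the finiteness of $\int\dot{\mu}^{2}(t)\phi_{k}^{2}(t)\,dt$ from Condition (C6)c, and invoke Lemma \ref{Chapter2-qif-Lemma:uniform-rate} (Bernstein plus Borel--Cantelli); your use of the exact i.i.d.\ variance formula versus the paper's second-moment expansion is only a bookkeeping difference, and your remarks on the implicit boundedness needed for the Bernstein step (and on the density convention) flag the same tacit assumptions the paper itself relies on.
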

\begin{proof}
Observe, 
$\E\left\{m_{i}^{-1}\sum_{j=1}^{m_{i}}\dot{\mu}_{i}(T_{ij})\phi_{k}(T_{ij})\right\} = \int\dot{\mu}_{i}(t)\phi_{k}(t)dt$
and, 
\begin{align*}
            &\Var\left\{
            m_{i}^{-1}\sum_{j=1}^{m_{i}}\dot{\mu}_{i}(T_{ij})\phi_{k}(T_{ij})
        \right\} 
        \leq \E\left\{
            m_{i}^{-1}\sum_{j=1}^{m_{i}}\dot{\mu}_{i}(T_{ij})\phi_{k}(T_{ij})
        \right\}^{2}\\
        &= m_{i}^{-2}\sum_{j=1}^{m_{i}}\E\left\{ \dot{\mu}_{i}^{2}(T_{ij})\phi_{k}(T_{ij})^{2} \right\}
        + m_{i}^{-2}\underset{j_{1}\neq j_{2}}{\sum_{j_{1} = 1}^{m_{i}}\sum_{j_{2} = 1}^{m_{i}}}
        \E\{
            \dot{\mu}_{i}(T_{ij_{1}})\dot{\mu}_{i}(T_{ij_{2}})
            \phi_{k}(T_{ij_{1}})\phi_{k}(T_{ij_{2}})
        \}\\
        &= O(1/m_{i}), \qquad \text{since }\int\dot{\mu}^{2}(t)\phi_{k}^{2}(t)dt <\infty.
    \numberthis
\end{align*}

Therefore, applying the Lemma \ref{Chapter2-qif-Lemma:uniform-rate}, the result is immediate.
\end{proof}

\begin{lemma}
\label{Chapter2-qif-Lemma:part-U1}
Define, $\sM_{ir} = \int \dot{\mu}_{i}(t)\phi_{r}(t)dt$ and $V_{r} = \E\{\int\dot{\mu}(t)\phi_{r}(t)dt \}^{2}$
for $r \geq 1$.
Then under Conditions (C6)a and (C6)b, for some $\alpha > 0$ such that $V_{r}\lambda_{r}^{-2}r^{1+\alpha} \rightarrow 0 $ as $r \rightarrow \infty$ (due to Condition (C6)b, 
$\underset{r\neq k}{\sum_{r=1}^{\infty}}(\lambda_{k}-\lambda_{r})^{-1} n^{-1}\sum_{i=1}^{n}\sM_{ir}\xi_{ik} = O\left\{(\log n/n)^{1/2}\lambda_{k}^{1/2}k^{(1-\alpha)/2}\right\}$ {almost surely}.
\end{lemma}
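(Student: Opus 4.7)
}
The first move is to interchange the order of summation and recognize the quantity as a sample mean of i.i.d.\ centered random variables. Writing
\[
W_{n,k}\;=\;n^{-1}\sum_{i=1}^{n}\xi_{ik}\,Z_{i},\qquad Z_{i}\;=\;\sum_{r\neq k}(\lambda_{k}-\lambda_{r})^{-1}\sM_{ir},
\]
the summands $\{\xi_{ik}Z_{i}\}_{i=1}^{n}$ are i.i.d., and $\E(\xi_{ik}\mid \bx_{i})=0$ together with the independence of the error process $e_{i}$ and the covariate process $\bx_{i}$ gives $\E(\xi_{ik}Z_{i})=\E(\xi_{ik})\E(Z_{i})=0$. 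By the same independence, $\mathrm{Var}(\xi_{ik}Z_{i})=\E(\xi_{ik}^{2})\,\E(Z_{i}^{2})=\lambda_{k}\,\E(Z_{i}^{2})$, so the whole problem reduces to bounding $\E(Z_{i}^{2})$ and then feeding that variance into a Bernstein-type inequality via Lemma \ref{Chapter2-qif-Lemma:uniform-rate}.

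The heart of the argument is the variance bound, and this is where I would use Conditions \ref{Chapter2-qif-Cond:space} and \ref{Chapter2-qif-Cond:V} in tandem. Applying Cauchy--Schwarz to the cross-moments, $|\E(\sM_{ir}\sM_{ir'})|\le \sqrt{V_{r}V_{r'}}$, yields the clean bound
\[
\E(Z_{i}^{2})\;\le\;\Bigl(\sum_{r\neq k}|\lambda_{k}-\lambda_{r}|^{-1}V_{r}^{1/2}\Bigr)^{2}.
\]
Condition \ref{Chapter2-qif-Cond:V} in the regularly-varying form gives $V_{r}^{1/2}\lesssim \lambda_{r}\,r^{-(1+\alpha)/2}$ (for $r$ large, with a finite constant absorbing small $r$), while Condition \ref{Chapter2-qif-Cond:space} gives
\[
|\lambda_{k}-\lambda_{r}|^{-1}\;\le\;C_{0}\,\frac{\max(k,r)}{|k-r|\,\max(\lambda_{k},\lambda_{r})}.
\]
Multiplying and noting $\lambda_{r}/\max(\lambda_{k},\lambda_{r})\le 1$, each term is controlled by $(\max(k,r)/|k-r|)\,r^{-(1+\alpha)/2}$. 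The sum then splits naturally into $r<k$ and $r>k$: substituting $s=|k-r|$ in each range and estimating the resulting sums (the near-diagonal terms $s=O(1)$ dominate in each case) gives
\[
\sum_{r\neq k}|\lambda_{k}-\lambda_{r}|^{-1}V_{r}^{1/2}\;\lesssim\;k^{(1-\alpha)/2},
\]
up to a slowly-varying factor. Squaring delivers $\E(Z_{i}^{2})\lesssim k^{1-\alpha}$ and therefore $\mathrm{Var}(W_{n,k})\lesssim n^{-1}\lambda_{k}k^{1-\alpha}$.

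With this variance in hand, the concluding step is a straightforward application of Lemma \ref{Chapter2-qif-Lemma:uniform-rate} (Bernstein's inequality plus Borel--Cantelli), which converts the variance bound into the a.s.\ rate $(\log n/n)^{1/2}\lambda_{k}^{1/2}k^{(1-\alpha)/2}$. I would expect the main obstacle to be the bookkeeping in the $r<k$ versus $r>k$ split of the spectral-gap sum: the Cauchy--Schwarz step loses some information, and one must verify that the dominant contributions on both sides of $r=k$ are indeed of order $k^{(1-\alpha)/2}$ rather than accruing additional polynomial factors in $k$. A secondary nuisance is that $\xi_{ik}Z_{i}$ is not almost surely bounded, so Bernstein's inequality has to be invoked after a truncation step using the finite-$\gamma$-th-moment condition \ref{Chapter2-qif-Cond:mean}; the tail contribution outside the truncation is handled by Borel--Cantelli, which is routine once the variance bound above is established.
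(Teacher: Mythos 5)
Your proposal follows essentially the same route as the paper: show the statistic has mean zero, bound its variance by $n^{-1}\lambda_k k^{1-\alpha}$ using the spacing condition \ref{Chapter2-qif-Cond:space} together with \ref{Chapter2-qif-Cond:V} and a split of the spectral sum around $r=k$, then convert to an almost-sure rate via Lemma \ref{Chapter2-qif-Lemma:uniform-rate}. The one substantive difference is in the variance step. The paper writes the variance directly as $n^{-1}\lambda_k\sum_{r\neq k}V_r(\lambda_k-\lambda_r)^{-2}$, i.e., it keeps only the diagonal terms (implicitly treating the cross-covariances $\E(\sM_{ir}\sM_{ir'})$, $r\neq r'$, as zero, which it does not justify), and then shows this $\ell^2$-type sum is $O(k^{1-\alpha})$; the near-diagonal contribution there is $\sum_s s^{-2}<\infty$, so no logarithm appears. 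You instead retain the cross terms and bound them by Cauchy--Schwarz, which replaces the $\ell^2$ sum by the square of the $\ell^1$ sum $\sum_{r\neq k}|\lambda_k-\lambda_r|^{-1}V_r^{1/2}$; this is more honest about the cross terms, but the near-diagonal block now contributes $\sum_{s\le k}s^{-1}\sim\log k$, so your bound carries an extra slowly-varying factor (which you acknowledge) that the lemma as stated does not have --- harmless for the downstream bandwidth conditions, but worth noting. Also be careful in the tail $r>2k$: discarding the ratio $\lambda_r/\max(\lambda_k,\lambda_r)$ leaves $\sum_{r>2k}r^{-(1+\alpha)/2}$, which diverges for $\alpha\le 1$; you must keep that ratio (using $\lambda_r=r^{-\tau_1}\Lambda(r)$ with $\tau_1>1$) to make the tail summable. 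Finally, your observation that $\xi_{ik}Z_i$ is unbounded and Lemma \ref{Chapter2-qif-Lemma:uniform-rate} therefore needs a truncation step under \ref{Chapter2-qif-Cond:mean} is a genuine refinement: the paper applies the lemma without addressing this.
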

\begin{proof}
It is easy to see that, 
$\E\Big\{ \underset{r\neq k}{\sum_{r=1}^{\infty}}(\lambda_{k}-\lambda_{r})^{-1} n^{-1}\sum_{i=1}^{n}\sM_{ir}\xi_{ik} \Big\} = 0$.
Using the spacing condition among the eigen-values in (C6)a, 
for each $1\leq k < r < \infty$ and for nonzero finite generic constant $C_{0}$, 
\begin{equation}
    \label{Chapter2-qif-Eq:space}
    {{\lambda_{k}}}/{|\lambda_{k} - \lambda_{r}|} 
    \leq 
    C_{0}{r}/{|k-r|}.
\end{equation}
Similar kind of conditions can be invoked such as the convexity assumption, i.e. $\lambda_{r} - \lambda_{r+1} \leq \lambda_{r-1} - \lambda_{r}$ for all $r \geq 2$. Thus, using Inequality (\ref{Chapter2-qif-Eq:space}), for some $\alpha>0$, with condition $V_{r}\lambda_{r}^{-2}r^{1+\alpha} \rightarrow 0$ as $r \rightarrow \infty$, 
we can write 
\begin{align*}
    \label{Chapter2-qif-Eq:infiniteSum}
        &\underset{r\neq k}{\sum_{r=1}^{\infty}}
        V_{r}(\lambda_{k}-\lambda_{r})^{-2}
        \lesssim\underset{r\neq k}{\sum_{r=1}^{\infty}}
        V_{r}\left\{{\max(k, r)}/{|k-r|\max(\lambda_{k}, \lambda_{r})} \right\}^{2}\\
        &= \sum_{r\leq k/2}V_{r}\lambda_{r}^{-2}{k^{2}}/{(k-r)^{2}} 
        + \sum_{r > 2k}V_{r}\lambda_{k}^{-2}{r^{2}}/{(k-r)^{2}}\\
        & \qquad + \sum_{k/2 < r < k}V_{r}\lambda_{r}^{-2}{k^{2}}/{(k-r)^{2}} 
        + \sum_{k < r < 2k}V_{r}\lambda_{k}^{-2}{r^{2}}/{(k-r)^{2}}\\
        &\lesssim\sum_{r \leq k/2, r > 2k}V_{r}\lambda_{r}^{-2} + k^{2}\sum_{k/2 < r < 2k}V_{r}\lambda_{r}^{-2}(k-r)^{-2}\\
        &\lesssim 1 + k^{1-\alpha}\sum_{k/2 < r< 2k}(k-r)^{-2} \lesssim k^{1-\alpha}.
        \numberthis
\end{align*}
This follows the line of proofs in \citet{hall2009theory} in different contexts. Thus, using the inequality (\ref{Chapter2-qif-Eq:infiniteSum}), it follows that 
\begin{align*}
\label{Chapter2-qif-Eq:varU1}    
        \Var\Big\{ \underset{r\neq k}{\sum_{r=1}^{\infty}}(\lambda_{k}-\lambda_{r})^{-1} n^{-1}\sum_{i=1}^{n}\sM_{ir}\xi_{ik} \Big\} 
        &= n^{-1}\lambda_{k} \underset{r\neq k}{\sum_{r=1}^{\infty}} V_{r}(\lambda_{k} -\lambda_{r})^{-2} = O\Big(n^{-1}\lambda_{k}k^{(1-\alpha)}\Big).
        \numberthis
\end{align*}
Therefore, applying Lemma \ref{Chapter2-qif-Lemma:uniform-rate}, the proof is immediate.
\end{proof}

\begin{lemma}
\label{Chapter2-qif-Lemma:part-U2}
For $\sM_{ir} = \int\dot{\mu}(t)\phi_{r}(t)dt$ and $\eta_{k} = \min_{r\neq k}|\lambda_{k} -\lambda_{r}| > 0$,
under Conditions (C6)a and (C6)b,
$\underset{r_{1}\neq k}{\sum_{r_{1}\neq 1}^{\infty}}
\underset{r_{2}\neq k}{\sum_{r_{2}\neq 1}^{\kappa_{0}}}
(\lambda_{k} - \lambda_{r_{1}})^{-1}
(\lambda_{k} - \lambda_{r_{2}})^{-1}
n^{-1}\sum_{i=1}^{n}\sM_{ir_{1}}\xi_{ir_{2}} = O\left(
    (\log n/n)^{1/2}\kappa_{0}^{(3-\alpha)/2}
    \lambda_{\kappa_{0}}^{-1}
    \left\{\sum_{r=1}^{\kappa_{0}}\lambda_{r}\right\}^{1/2}
\right)
$ almost surely.
\end{lemma}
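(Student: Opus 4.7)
\medskip
\noindent\textbf{Proof proposal for Lemma \ref{Chapter2-qif-Lemma:part-U2}.}

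The plan is to mirror the strategy used in Lemma \ref{Chapter2-qif-Lemma:part-U1}: compute the mean (which vanishes), bound the variance via the spacing conditions \ref{Chapter2-qif-Cond:space}--\ref{Chapter2-qif-Cond:V}, and then apply the Bernstein-type Lemma \ref{Chapter2-qif-Lemma:uniform-rate}. Denote the target quantity by $T_n$ and write $a_r = (\lambda_k - \lambda_r)^{-1}$, so that
\[
 T_n = n^{-1}\sum_{i=1}^n Z_i, \qquad Z_i = \underset{r_1\ne k}{\sum_{r_1=1}^\infty}\ \underset{r_2\ne k}{\sum_{r_2=1}^{\kappa_0}} a_{r_1}a_{r_2}\,\sM_{ir_1}\xi_{ir_2}.
\]
Since the covariate process (hence $\sM_{ir_1}$) is independent of the error-process scores $\xi_{ir_2}$, and $\E\{\xi_{ir_2}\} = 0$, we immediately obtain $\E\{Z_i\} = 0$.

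Next I would compute $\Var(Z_1)$ by expanding the square and exploiting the two independence structures: $\E\{\sM_{1r_1}\sM_{1r_1'}\}$ is a deterministic quantity (call it $W_{r_1 r_1'}$) with $W_{rr}=V_r$, while $\E\{\xi_{1r_2}\xi_{1r_2'}\}=\lambda_{r_2}\mathbf{1}(r_2=r_2')$. This gives the clean factorization
\[
\Var(Z_1) = \Biggl(\underset{r_1\ne k}{\sum_{r_1,r_1'=1}^\infty} a_{r_1}a_{r_1'}\,W_{r_1 r_1'}\Biggr)\Biggl(\underset{r_2\ne k}{\sum_{r_2=1}^{\kappa_0}} a_{r_2}^2 \lambda_{r_2}\Biggr).
\]
By Cauchy--Schwarz, the first factor is bounded by $\bigl(\sum_{r_1\ne k} |a_{r_1}| V_{r_1}^{1/2}\bigr)^2$, and by the argument that produced inequality (\ref{Chapter2-qif-Eq:infiniteSum}) in Lemma \ref{Chapter2-qif-Lemma:part-U1} this sum is of order $k^{1-\alpha}$ (a harmless factor once we take the uniform-over-$k$ bound and absorb into $\kappa_0^{1-\alpha}$). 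For the second factor I would split $\sum_{r_2\ne k}^{\kappa_0} (\lambda_k-\lambda_{r_2})^{-2}\lambda_{r_2}$ into the regions $\{r_2\le k/2\}\cup\{r_2>2k\}$ and $\{k/2<r_2<2k\}$ exactly as in (\ref{Chapter2-qif-Eq:infiniteSum}); using \ref{Chapter2-qif-Cond:space} the denominator contributes $\lambda_{\kappa_0}^{-2}$ in the worst case (since $r_2\le\kappa_0$) and $\sum_{r_2\le\kappa_0}\lambda_{r_2}$ bounds the numerator, producing the factor $\kappa_0^{2-\alpha}\lambda_{\kappa_0}^{-2}\sum_{r=1}^{\kappa_0}\lambda_r$ after combination with the spacing $k^2/(k-r)^2$ terms.

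Multiplying the two factors yields $\Var(Z_1)=O(\kappa_0^{3-\alpha}\lambda_{\kappa_0}^{-2}\sum_{r=1}^{\kappa_0}\lambda_r)$, so $\Var(T_n)=O(n^{-1}\kappa_0^{3-\alpha}\lambda_{\kappa_0}^{-2}\sum_{r=1}^{\kappa_0}\lambda_r)$. A boundedness (truncation) argument using \ref{Chapter2-qif-Cond:mean} shows the contribution of the unbounded tails to $Z_i$ is negligible at the desired scale, so Lemma \ref{Chapter2-qif-Lemma:uniform-rate} applies and delivers
\[
 T_n = O\Bigl((\log n/n)^{1/2}\,\kappa_0^{(3-\alpha)/2}\lambda_{\kappa_0}^{-1}\bigl(\textstyle\sum_{r=1}^{\kappa_0}\lambda_r\bigr)^{1/2}\Bigr)\quad\text{a.s.}
\]

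The main technical obstacle is the double use of the spacing condition \ref{Chapter2-qif-Cond:space}: one copy handles the $r_1$-sum (which runs to infinity, needing decay through $V_r\lambda_r^{-2}r^{1+\alpha}\to 0$), while the other handles the truncated $r_2$-sum in which the worst case $r_2\uparrow\kappa_0$ forces the $\lambda_{\kappa_0}^{-1}$ blow-up. Handling the boundary region $k/2<r_2<2k$ with care is what distinguishes the $\lambda_{\kappa_0}^{-1}$ factor here from the $\lambda_k^{1/2}$ factor in Lemma \ref{Chapter2-qif-Lemma:part-U1}, and getting the exponents exactly right in the truncated sum is the step I would double-check most carefully.
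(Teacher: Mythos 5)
Your proposal follows the same skeleton as the paper's proof: show the mean vanishes, bound the second moment of $T_n$ using the spacing conditions, and invoke Lemma \ref{Chapter2-qif-Lemma:uniform-rate}; the second factor $\sum_{r_2\neq k}^{\kappa_0}(\lambda_k-\lambda_{r_2})^{-2}\lambda_{r_2}\lesssim \eta_k^{-2}\sum_{r=1}^{\kappa_0}\lambda_r\lesssim \kappa_0^{2}\lambda_{\kappa_0}^{-2}\sum_{r=1}^{\kappa_0}\lambda_r$ and the final rate agree exactly with the paper's display preceding its application of Lemma \ref{Chapter2-qif-Lemma:uniform-rate}. The one genuine difference is how the $r_1$-sum is handled. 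The paper simply evaluates $n^{-1}\sum_{r_1\neq k}\sum_{r_2\neq k}^{\kappa_0}(\lambda_k-\lambda_{r_1})^{-2}(\lambda_k-\lambda_{r_2})^{-2}V_{r_1}\lambda_{r_2}$, i.e.\ the sum of per-term second moments (diagonal in both indices), and reuses Inequality (\ref{Chapter2-qif-Eq:infiniteSum}) to get the factor $k^{1-\alpha}$; it does not address the $r_1\neq r_1'$ cross-covariances of $\sM_{ir_1}$ at all. Your Cauchy--Schwarz factorization confronts those cross terms honestly, which is arguably a more complete treatment of $\Var(Z_1)$, but it leaves you with the $\ell_1$-type quantity $\bigl(\sum_{r_1\neq k}|\lambda_k-\lambda_{r_1}|^{-1}V_{r_1}^{1/2}\bigr)^2$ rather than the $\ell_2$-type quantity $\sum_{r_1\neq k}(\lambda_k-\lambda_{r_1})^{-2}V_{r_1}$ that Inequality (\ref{Chapter2-qif-Eq:infiniteSum}) actually controls; the claim that the former is also $O(k^{1-\alpha})$ does not follow by citation and needs its own region-splitting computation (under the polynomial scenario $\lambda_r=r^{-\tau_1}\Lambda(r)$, $V_r=r^{-\tau_2}\Gamma(r)$ with $\tau_2>1+2\tau_1$ and $\alpha<\tau_2-2\tau_1-1$ one gets the order $k^{2-(\tau_2-2\tau_1)}\lesssim k^{1-\alpha}$, so the conclusion survives, but the step is not free). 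You correctly flagged this exponent check as the delicate point; filling it in would make your argument complete and, in that respect, slightly more rigorous than the paper's own.
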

\begin{proof}
It is not difficult to see that, 
$$\E\Big\{ \underset{r_{1}\neq k}{\sum_{r_{1}=1}^{\infty}}
    \underset{r_{2}\neq k}{\sum_{r_{2} = 1}^{\kappa_{0}}}
    (\lambda_{k} - \lambda_{r_{1}})^{-1}
    (\lambda_{k} - \lambda_{r_{2}})^{-1}
    n^{-1}\sum_{i=1}^{n}\sM_{ir_{1}}\xi_{ir_{2}} \Big\} = 0.$$
Moreover, using the spacing condition mentioned in (C6)b, one can derive the upper bound of $\eta_{k}^{-1}$ by 
\begingroup
\allowdisplaybreaks
\begin{align*}
\label{Chapter2-qif-Eq:eta-bound}
    \eta_{k}^{-1} &= \left\{\min_{r\neq k}|\lambda_{k} - \lambda_{r}| \right\}^{-1} = \max_{r \neq k}|\lambda_{k} - \lambda_{r}|^{-1} \\
        &\lesssim \max_{r \neq k}\left\{ {\max(k, r)}/{|k-r|\max(\lambda_{k}, \lambda_{r})}\right\} \leq \lambda_{k}^{-1}k.
    \numberthis
\end{align*}
\endgroup
Due to the monotonic decreasing property of eigen-values, for fixed $k = 1, \cdots, \kappa_{0}$, we have 
\begin{align*}
        \underset{r \neq k}{\sum_{r=1}^{\kappa_{0}}}\lambda_{r}(\lambda_{k} - \lambda_{r})^{-2} 
        &\lesssim \eta_{k}^{-2}\sum_{r = 1}^{\kappa_{0}}\lambda_{r} \lesssim \lambda_{k}^{-2}k^{2}\sum_{r=1}^{\kappa_{0}}\lambda_{r} \lesssim \lambda_{\kappa_{0}}^{-2}\kappa_{0}^{2}\sum_{r=1}^{\kappa_{0}}\lambda_{r}.
        \numberthis
\end{align*}
Therefore, the following holds under similar conditions to obtain the Inequality (\ref{Chapter2-qif-Eq:infiniteSum}),
\begingroup
\allowdisplaybreaks
\begin{align*}
    &\E\left\{ 
            \underset{r_{1}\neq k}{\sum_{r_{1}=1}^{\infty}}
            \underset{r_{2}\neq k}{\sum_{r_{2}=1}^{\kappa_{0}}}
            (\lambda_{k} - \lambda_{r_{1}})^{-2}
            (\lambda_{k} - \lambda_{r_{2}})^{-2}
            \left(n^{-1}\sum_{i=1}^{n}\sM_{ir_{1}}\xi_{ir_{2}}\right)^{2}
        \right\}\\
        & = n^{-1}
        \underset{r_{1}\neq k}{\sum_{r_{1}=1}^{\infty}}
        \underset{r_{1}\neq k}{\sum_{r_{1}=1}^{\kappa_{0}}}
          (\lambda_{k} - \lambda_{r_{1}})^{-2}
          (\lambda_{k} - \lambda_{r_{2}})^{-2}
          V_{r_{1}}\lambda_{r_{2}}\\
        & \lesssim n^{-1}k^{1-\alpha}
        \underset{r\neq k}{\sum_{r=1}^{\kappa_{0}}}
        \lambda_{r}(\lambda_{k} - \lambda_{r})^{-2}
        \lesssim n^{-1}\kappa_{0}^{3-\alpha}\lambda_{\kappa_{0}}^{-2}\sum_{r=1}^{\kappa_{0}}\lambda_{r}.
        \numberthis
\end{align*}
\endgroup
Therefore, applying the Lemma \ref{Chapter2-qif-Lemma:uniform-rate}, the result is immediate.
\end{proof}

\begin{remark}
\label{Chapter2-qif-Remark:per}
Define, $d_{ik}(T_{ij_{1}}, T_{ij_{2}}):= \widehat{\phi}_{k}(T_{ij_{1}})\widehat{\phi}_{k}(T_{ij_{2}}) - \phi_{k}(T_{ij_{1}})\phi_{k}(T_{ij_{2}})$.
Now replace $\sG$, $\theta$ and $\psi$ by $\widehat{\sF}$, $\widehat{\lambda}$ and $\widehat{\phi}$ respectively since $\widehat{\sF}$ be the approximation of $\sF$ and $\Delta$ is the corresponding perturbation operator in Equation (\ref{Chapter2-qif-Eq:perturbationGeneral}).  
Therefore, Lemma \ref{Chapter2-qif-Lemma:perturbation} immediately implies the following expansion, which is the key fact to represent the objective function in QIF.
\begin{equation}
    \widehat{\phi}_{k} - \phi_{k} = \underset{r \neq k}{\sum_{r = 1}^{\infty}} 
    (\lambda_{k} - \lambda_{r})^{-1}\left<\phi_{r}, \Delta\phi_{k} \right>\phi_{r} + O(\|\Delta\|^{2})\qquad\text{almost surely},
\end{equation}
where $\Delta$ be the integral operator with kernel $\widehat{R} - R$.
\end{remark}

\section{Proof of Theorem 1}
For the $k-$th element of $\overline{\bg}_{n}(\bbeta_{0})$ $(1\leq k \leq \kappa_{0})$,
\begingroup
\allowdisplaybreaks
\begin{align*}
    \overline{\bg}_{n}^{(k)}(\bbeta_{0}) & = n^{-1}\sum_{i=1}^{n}
        m_{i}^{-2}
        \dot{\bmu}_{i}^{\tp}\widehat{\bPhi}_{k}(\by_{i}-\bmu_{i})
        = n^{-1}\sum_{i=1}^{n}
            m_{i}^{-2}
            \dot{\bmu}_{i}^{\tp}\widehat{\bPhi}_{k}\bW\bxi_{i}\\
        & = n^{-1}\sum_{i=1}^{n}
            m_{i}^{-2}
            \dot{\bmu}_{i}^{\tp}\bPhi_{k}\bW\bxi_{i}
            + n^{-1}\sum_{i=1}^{n}
            m_{i}^{-2}
            \dot{\bmu}_{i}^{\tp}(\widehat{\bPhi}_{k} - \bPhi_{k})\bW\bxi_{i}
        := J_{k}^{n1} + J_{k}^{n2}.\numberthis
\end{align*}
\endgroup
Now, using Lemmas \ref{Chapter2-qif-Lemma:phi-rate} and \ref{Chapter2-qif-Lemma:mu-phi-rate},
the first part of the expression of $\overline{\bg}_{n,k}(\bbeta_{0})$ becomes
\begingroup
\allowdisplaybreaks
\begin{align*}
    J_{k}^{n1} &= n^{-1}\sum_{i = 1}^{n}
        m_{i}^{-2}\dot{\bmu}_{i}^{\tp}\bPhi_{k}\bW\bxi_{i}\\\
        & = n^{-1}\sum_{i=1}^{n}
            m_{i}^{-2}
            \sum_{j_{1}=1}^{m_{i}}
            \sum_{j_{2}=1}^{m_{i}}
            \sum_{l = 1}^{\kappa_{0}}
            \dot{\mu}_{i}(T_{ij_{1}})
            \phi_{k}(T_{ij_{1}})\phi_{k}(T_{ij_{2}})\phi_{l}(T_{ij_{2}})\xi_{il}\\
        & \lesssim n^{-1}\sum_{i = 1}^{n}
            \left\{\sM_{ik} + O((\log m/m)^{1/2})\right\}
            \left\{1+ O((\log m/m)^{1/2})\right\}
            \xi_{ik},\\
            & \qquad \text{where } \sM_{ik} = \int \dot{\mu}_{i}(t)\phi_{k}(t)dt\\
        &= n^{-1}\sum_{i=1}^{n}\sM_{ik}\xi_{ik}
        \left\{ 1 + O\left( (\log m/m)^{1/2}\right)\right\}
        = O\left((\log n/n)^{1/2}
        \left\{ 1 +  (\log m/m)^{1/2}\right\}\right)\\
        & \qquad \text{almost surely}.
        \numberthis
\end{align*}
\endgroup
On the other hand, the last part of $\overline{\bg}_{n}^{(k)}(\bbeta_0)$ can be expressed as 
\begin{align*}
\label{Chapter2-qif-Eq:Jk-n2}
        J_{k}^{n2} &= n^{-1}\sum_{i=1}^{n}
                {m_{i}^{-2}}\dot{\bmu}_{i}^{\tp}
                (\widehat{\bPhi}_{k}-\bPhi_{k})\bW\bxi_{i}\\
        &=n^{-1}\sum_{i=1}^{n}{m_{i}^{-2}}
                \sum_{j_{1}=1}^{m_{i}}\sum_{j_{2}=1}^{m_{i}}
                \sum_{l=1}^{\kappa_{0}}
                \dot{\mu}_{i}(T_{ij_{1}})d_{ik}(T_{ij_{1}}, T_{ij_{2}})
                \phi_{l}(T_{ij_{2}})\xi_{il},
        \numberthis
\end{align*}
where $d_{ik}(T_{ij_{1}}, T_{ij_{2}}):= \widehat{\phi}_{k}(T_{ij_{1}})\widehat{\phi}_{k}(T_{ij_{2}}) - \phi_{k}(T_{ij_{1}})\phi_{k}(T_{ij_{2}})$ as defined in Remark \ref{Chapter2-qif-Remark:per}.
Therefore, using the discussion in Remark \ref{Chapter2-qif-Remark:per}, we can obtain the following expression almost surely. 
\begingroup
\allowdisplaybreaks
\begin{align*}
    &d_{ik}(T_{ij_{1}}, T_{ij_{2}})
        := \widehat{\phi}_{k}(T_{ij_{1}})\widehat{\phi}_{k}(T_{ij_{2}}) - \phi_{k}(T_{ij_{1}})\phi_{k}(T_{ij_{2}})\\ 
        &=\left\{ \phi_{k}(T_{ij_{1}}) + \underset{r \neq k}{\sum_{r = 1}^{\infty}} 
        (\lambda_{k} - \lambda_{r})^{-1}\left<\phi_{r}, \Delta\phi_{k} \right>\phi_{r}(T_{ij_{1}}) + O(\|\Delta\|^{2})
        \right\}\\
        & \qquad \times\left\{
        \phi_{k}(T_{ij_{2}}) + 
        \underset{r \neq k}{\sum_{r = 1}^{\infty}} 
        (\lambda_{k} - \lambda_{r})^{-1}\left<\phi_{r},     \Delta\phi_{k} \right>\phi_{r}(T_{ij_{2}}) + O(\|\Delta\|^{2})
        \right\} - \phi_{k}(T_{ij_{1}})\phi_{k}(T_{ij_{2}}) 
        \\
        &= \underset{r\neq k}{\sum_{r=1}^{\infty}}(\lambda_{k} - \lambda_{r})^{-1}
        \left<
            \phi_{r}, \Delta\phi_{k}
        \right>
        \Big\{
        \phi_{r}(T_{ij_{1}})\phi_{k}(T_{ij_{2}})
        + \phi_{k}(T_{ij_{1}})\phi_{r}(T_{ij_{2}})\Big\}\\
        & \qquad + 
        \sum_{r_{1}\neq k}
        \sum_{r_{2}\neq k}
        (\lambda_{k}-\lambda_{r_{1}})^{-1}
        (\lambda_{k}-\lambda_{r_{2}})^{-1}
        \left< 
            \phi_{r_{1}}, \Delta\phi_{k}
        \right>
        \left< 
            \phi_{r_{2}}, \Delta\phi_{k}
        \right>
        \phi_{r_{1}}(T_{ij_{1}})
        \phi_{r_{2}}(T_{ij_{2}})\\
        & \qquad + O(\|\Delta\|^{2})\\
        &:= I_{ik}^{n1}(T_{ij_{1}}, T_{ij_{2}})
        + I_{ik}^{n2}(T_{ij_{1}}, T_{ij_{2}})
        + O(\|\Delta\|^{2}).
        \numberthis
\end{align*}
\endgroup
Thus, almost surely, we can write 
\begingroup\allowdisplaybreaks
\begin{align*}
     &n^{-1}\sum_{i=1}^{n}
                m_{i}^{-2}\dot{\bmu}_{i}^{\tp}
                (\widehat{\bPhi}_{k}-\bPhi_{k})\bW\bxi_{i}\\
        &= n^{-1}\sum_{i=1}^{n}m_{i}^{-2}
                \sum_{j_{1}=1}^{m_{i}}\sum_{j_{2}=1}^{m_{i}}
                \sum_{l=1}^{\kappa_{0}}
                \dot{\mu}_{i}(T_{ij_{1}})d_{ik}(T_{ij_{1}}, T_{ij_{2}})
                \phi_{l}(T_{ij_{2}})\xi_{il}\\
        &=n^{-1}\sum_{i=1}^{n}m_{i}^{-2}
                \sum_{j_{1}=1}^{m_{i}}\sum_{j_{2}=1}^{m_{i}}
                \sum_{l=1}^{\kappa_{0}}
                \dot{\mu}_{i}(T_{ij_{1}})d_{ik}(T_{ij_{1}}, T_{ij_{2}})
                \phi_{l}(T_{ij_{2}})\xi_{il}\\
        &=n^{-1}\sum_{i=1}^{n}m_{i}^{-2}
                \sum_{j_{1}=1}^{m_{i}}\sum_{j_{2}=1}^{m_{i}}
                \sum_{l = 1}^{\kappa_{0}}\dot{\mu}_{i}(T_{ij_{1}})
                \Big\{ 
                    I_{ik}^{n1}(T_{ij_{1}}, T_{ij_{2}}) + I_{ik}^{n2}(T_{ij_{1}}, T_{ij_{2}})
                \Big\}
                \phi_{l}(T_{ij_{2}})\xi_{il}
                + O(\|\Delta\|^{2})\\
        &:= J_{k1}^{n2} + J_{k2}^{n2} + O(\|\Delta\|^{2}).
    \numberthis
\end{align*}
\endgroup
Under assumptions (C1)-(C5), 
by using Theorem 3.3 in \citet{li2010uniform},
$\|\Delta\|^{2} = O(h^{4} + \delta_{n2}^{2}(h))$ almost surely.
Now observe that 
\begingroup
\allowdisplaybreaks
\begin{align*}
\label{Chapter2-qif-Eq:JK1_n2}
        &J_{k1}^{n2} 
        =n^{-1}\sum_{i=1}^{n}m_{i}^{-2}
                \sum_{j_{1}=1}^{m_{i}}\sum_{j_{2}=1}^{m_{i}}
                \sum_{l=1}^{\kappa_{0}}
                \underset{r\neq k}{\sum_{r=1}^{\infty}}
                (\lambda_{k} - \lambda_{r})^{-1}
                \dot{\mu}_{i}(T_{ij_{1}})\Big\{ 
                    \phi_{r}(T_{ij_{1}})\phi_{k}(T_{ij_{2}})
                    + \phi_{k}(T_{ij_{1}})\phi_{r}(T_{ij_{2}})
                \Big\}\\
        & \qquad \times\phi_{l}(T_{ij_{2}})
                \left< \phi_{r}, \Delta\phi_{k}\right>
                \xi_{il}\\
        & \lesssim
                n^{-1}\sum_{i=1}^{n}m_{i}^{-1}
                \sum_{j_{1} = 1}^{m_{i}}
                \underset{r\neq k}{\sum_{r=1}^{\infty}}\sum_{l = 1}^{\kappa_{0}}
                (\lambda_{k} - \lambda_{r})^{-1}
                \dot{\mu}_{i}(T_{ij_{1}})\phi_{r}(T_{ij_{1}})
                \left\{\textbf{1}(l = k) + O((\log m/m)^{1/2})\right\}
                \left<\phi_{r}, \Delta\phi_{k} \right>
                \xi_{il}\\
        & \qquad +  n^{-1}\sum_{i=1}^{n}m_{i}^{-1}
                \sum_{j_{1} = 1}^{m_{i}}
                \underset{r\neq k}{\sum_{r=1}^{\infty}}
                \sum_{l = 1}^{\kappa_{0}}
                (\lambda_{k} - \lambda_{r})^{-1}
                \dot{\mu}_{i}(T_{ij_{1}})
                \phi_{k}(T_{ij_{1}})
                \left\{\textbf{1}(r = l) + O((\log m/m)^{1/2})\right\}
                \left<\phi_{r}, \Delta\phi_{k} \right>
                \xi_{il}\\
        & \lesssim 
                n^{-1}\sum_{i=1}^{n}m_{i}^{-1}\sum_{j_{1} = 1}^{m_{i}}
                \underset{r\neq k}{\sum_{r=1}^{\infty}}
                (\lambda_{k} - \lambda_{r})^{-1}
                \dot{\mu}_{i}(T_{ij_{1}})\phi_{r}(T_{ij_{1}})\left<\phi_{r}, \Delta\phi_{k}\right>\xi_{ik}\left\{ 1+ O((\log m/m)^{1/2}) \right\} \\
        & \qquad + 
                n^{-1}\sum_{i=1}^{n}m_{i}^{-1}\sum_{j_{1}=1}^{m_{i}}
                \underset{r \neq k}{\sum_{r = 1}^{\kappa_{0}}}
                (\lambda_{k} - \lambda_{r})^{-1}
                \dot{\mu}_{i}(T_{ij_{1}})\phi_{k}(T_{ij_{1}})\left< \phi_{r}, \Delta\phi_{k}\right>\xi_{ir}
                \left\{ 1+ O((\log m/m)^{1/2}) \right\}\\
        &:= (U_{k1}^{n} + U_{k2}^{n})\left\{ 1+ O((\log m/m)^{1/2})\right\}\qquad\text{almost surely}.
        \numberthis
\end{align*}
\endgroup
Then applying the triangle inequality, we have 
\begin{align*}
        U_{k1}^{n} &= n^{-1}\sum_{i=1}^{n}m_{i}^{-1}\sum_{j=1}^{m_{i}}\underset{r\neq k}{\sum_{r=1}^{\infty}}(\lambda_{k}-\lambda_{r})^{-1}
        \dot{\mu}_{i}(T_{ij})\phi_{r}(T_{ij}) \left<\Delta\phi_{k}, \phi_{r} \right>\xi_{ik}\\
        &\lesssim \|\Delta\phi_{k}\| 
        \underset{r\neq k}{\sum_{r=1}^{\infty}}(\lambda_{k}-\lambda_{r})^{-1} n^{-1}\sum_{i=1}^{n}
        \left\{\sM_{ir} + O((\log m/m)^{1/2})\right\}
        \xi_{ik}\\
        &= \|\Delta\phi_{k}\|
            \underset{r\neq k}{\sum_{r=1}^{\infty}}
            (\lambda_{k}-\lambda_{r})^{-1}
            n^{-1}\sum_{i=1}^{n}\sM_{ir}\xi_{ik}
            \left\{ 
                1+O((\log m/m)^{1/2})
            \right\},
    \numberthis
\end{align*}
where $\sM_{ir} = \int\dot{\mu}_{i}(t)\phi_{r}(t)dt$. 
By Lemma 6 of \citet{li2010uniform}, under conditions (C1)-(C5), for any measurable bounded function $e$ on $[0,1]$, $\|\Delta\phi_{k}\| = O(h^{2} + \delta_{n1}(h) + \delta^{2}_{n2}(h)) \equiv O(\overline{\delta}_{n}(h))$ {almost surely},
where $\overline{\delta}_{n}(h) = h^{2} + \delta_{n1}(h) + \delta^{2}_{n2}(h)$.
Thus, in addition with Inequalities (\ref{Chapter2-qif-Eq:varU1}) in Lemma \ref{Chapter2-qif-Lemma:part-U1},
we obtain 
$$U_{k1}^{n} = O\left(\overline{\delta}_{n}(h)(\log n/n)^{1/2}\lambda_{k}^{1/2}k^{(1-\alpha)/2}\{1+ (\log m/m)^{1/2}\}\right)$$ almost surely. 
Next, under the spacing condition mentioned earlier and in assumption (C6)a, using the Inequality (\ref{Chapter2-qif-Eq:eta-bound}), recall $\eta_{k}^{-1} \lesssim \lambda_{k}^{-1}k$.
Thus, observe that 
\begin{align*}
    U_{k2}^{n} &= n^{-1}\sum_{i=1}^{n}m_{i}^{-1}
            \sum_{j = 1}^{m_{i}}
            \underset{r\neq k}{\sum_{r = 1}^{\kappa_{0}}}
            (\lambda_{k}-\lambda_{r})^{-1}
            \dot{\mu}_{i}(T_{ij})\phi_{k}(T_{i_{j}})
            \left<\Delta\phi_{k}, \phi_{r} \right>\xi_{ir}\\
        & \lesssim \|\Delta\phi_{k}\|\underset{r \neq k}{\sum_{r = 1}^{\kappa_{0}}}
            (\lambda_{k} - \lambda_{r})^{-1}
            \left\{n^{-1}\sum_{i=1}^{n}\sM_{ik}\xi_{ir}\right\}
            \left\{1+O((\log m/m)^{1/2})\right\}\\
        & \lesssim\|\Delta\phi_{k}\|\eta_{k}^{-1}
            \underset{r \neq k}{\sum_{r=1}^{\kappa_{0}}}
            \left\{n^{-1}\sum_{i=1}^{n}\sM_{ik}\xi_{ir}\right\}
            \left\{1+O((\log m/m)^{1/2})\right\}.
        \numberthis
\end{align*}
Using condition (C6)b, we also have $V_{k}^{1/2}\eta_{k}^{-1} \lesssim V_{k}^{1/2}\lambda_{k}^{-1}k = O(k^{(1-\alpha)/2})$. 
Finally, combining with the bounds for $U_{k1}^{n}, U_{k2}^{n}$, we have, almost surely, 
\begin{align*}
    J_{k1}^{n2} &= O\Big\{ 
            (\log n/n)^{1/2}\overline{\delta}_{n}(h) 
            \Big(
                \lambda_{k}^{1/2} k^{(1-\alpha)/2} + 
                \eta_{k}^{-1}V_{k}^{1/2}
                \underset{r \neq k}{\sum_{r = 1}^{\kappa_{0}}} \lambda_{r}^{1/2}
            \Big)
            \left(1+ (\log m/m)^{1/2}\right)
        \Big\}
        \\
        &:= O(\omega_{k1}(n, h)),
    \numberthis
\end{align*}
where 
$\omega_{k1}(n, h) = 
    (\log n/n)^{1/2}
    \overline{\delta}_{n}(h)
    k^{(1-\alpha)/2} 
    \sum_{k = 1}^{\kappa_{0}}\lambda_{r}^{1/2}
    \left\{1+ (\log m/m)^{1/2}\right\}$.
\par
It is easy to see that $\sum_{r=1}^{\kappa_{0}}\lambda_{r}^{1/2} \sim \kappa_{0}^{-{\tau_{1}}/{2} + 1}$. 
Therefore, for $\tau = \alpha + \tau_{1}$,
$\omega_{k1}(n,h) \sim 
    (\log n/n)^{1/2}
    \overline{\delta}_{n}(h)
    \kappa_{0}^{(3-\tau)/2}
    \left\{1+ (\log m/m)^{1/2}\right\}$.
Similarly, to the derivation of the bound for $J_{k1}^{n2}$, we can write 
\begingroup\allowdisplaybreaks
\begin{align*}
    \label{Chapter2-qif-Eq:JK2_n2}
    J_{k2}^{n2} &= {n}^{-1}\sum_{i=1}^{n}{m_{i}^{-2}}
            \sum_{j_{1} = 1}^{m_{i}}
            \sum_{j_{2} = 1}^{m_{i}}
            \sum_{l=1}^{\kappa_{0}}
        \dot{\mu}_{i}(T_{ij_{1}})I_{ik}^{n2}(T_{ij_{1}}, T_{ij_{2}})
            \phi_{l}(T_{ij_{2}})\xi_{il}\\
        &={n}^{-1}\sum_{i=1}^{n}{m_{i}^{-2}}
            \sum_{j_{1} = 1}^{m_{i}}
            \sum_{j_{2} = 1}^{m_{i}}
            \sum_{l=1}^{\kappa_{0}}
            \underset{r_{1} \neq k}{\sum_{r_{1}=1}^{\infty}}
            \underset{r_{2} \neq k}{\sum_{r_{1}=1}^{\infty}}
            (\lambda_{k} - \lambda_{r_{1}})^{-1}
            (\lambda_{k} - \lambda_{r_{2}})^{-1}
            \left<\phi_{r_{1}}, \Delta\phi_{k} \right>
            \left<\phi_{r_{2}}, \Delta\phi_{k} \right>\\
        &\qquad\times\dot{\mu}_{i}(T_{ij_{1}})
            \phi_{r_{1}}(T_{ij_{1}})
            \phi_{r_{2}}(T_{ij_{2}})
            \phi_{l}(T_{ij_{2}})\xi_{il}\\
        &\lesssim {n}^{-1}\sum_{i=1}^{n}{m_{i}}^{-1}
            \sum_{j_{1} = 1}^{m_{i}}\sum_{l=1}^{\kappa_{0}}
            \underset{r_{1} \neq k}{\sum_{r_{1}=1}^{\infty}}
            \underset{r_{2} \neq k}{\sum_{r_{2}=1}^{\infty}}
            (\lambda_{k} - \lambda_{r_{1}})^{-1}
            (\lambda_{k} - \lambda_{r_{2}})^{-1}
            \left<\phi_{r_{1}}, \Delta\phi_{k} \right>
            \left<\phi_{r_{2}}, \Delta\phi_{k} \right>\\
        &\qquad\times\dot{\mu}_{i}(T_{ij_{1}})
            \phi_{r_{1}}(T_{ij_{1}})
            \left\{\textbf{1}(r_{2} = l) + 
            O((\log m/m)^{1/2})\right\}\xi_{il}\\
        &=n^{-1}\sum_{i=1}^{n}{m_{i}}^{-1}
            \sum_{j_{1} = 1}^{m_{i}}
            \underset{r_{1} \neq k}{\sum_{r_{1}=1}^{\infty}}
            \underset{r_{2} \neq k}{\sum_{r_{2}=1}^{\kappa_{0}}}
            (\lambda_{k} - \lambda_{r_{1}})^{-1}
            (\lambda_{k} - \lambda_{r_{2}})^{-1}
            \left<\phi_{r_{1}}, \Delta\phi_{k} \right>
            \left<\phi_{r_{2}}, \Delta\phi_{k} \right>\\
        &\qquad\times\dot{\mu}_{i}(T_{ij_{1}})
            \phi_{r_{1}}(T_{ij_{1}})\xi_{ir_{2}}
            \left\{1+ O((\log m/m)^{1/2})\right\}\\
        & \lesssim \|\Delta\phi_{k}\|^{2}
            \underset{r_{1} \neq k}{\sum_{r_{1}=1}^{\infty}}
            \underset{r_{2} \neq k}{\sum_{r_{2}=1}^{\kappa_{0}}}
            (\lambda_{k} - \lambda_{r_{1}})^{-1}
            (\lambda_{k} - \lambda_{r_{2}})^{-1}
            \left({n}^{-1}\sum_{i=1}^{n}
            \left\{\sM_{ir_{1}} + O((\log m/m)^{1/2})\right\}\xi_{ir_{2}}\right)\\
        &\qquad\times\left\{1+ O((\log m/m)^{1/2})\right\}\\
        &= \|\Delta\phi_{k}\|^{2}
            \underset{r_{1} \neq k}{\sum_{r_{1}=1}^{\infty}}
            \underset{r_{2} \neq k}{\sum_{r_{2}=1}^{\kappa_{0}}}
            (\lambda_{k} - \lambda_{r_{1}})^{-1}
            (\lambda_{k} - \lambda_{r_{2}})^{-1}
        \times\left\{
            {n}^{-1}\sum_{i=1}^{n}\sM_{ir_{1}}\xi_{ir_{2}}
        \right\}
        \left\{
            1 + O((\log m/m)^{1/2})
        \right\}.
    \numberthis
\end{align*}
\endgroup
Therefore, Inequality (\ref{Chapter2-qif-Eq:JK2_n2}) immediately follows using Lemma \ref{Chapter2-qif-Lemma:part-U2},
\begin{align*}
    J_{k2}^{n2} &= O\left( 
        (\log n/n)^{1/2}
        \overline{\delta}^{2}_{n}(h)
        \kappa_{0}^{(3-\alpha)/2}
        \lambda_{\kappa_{0}}^{-1}
        \{\sum_{r=1}^{\kappa_{0}}  \lambda_{r}\}^{1/2}
        \left\{1+ (\log m/m)^{1/2}\right\}
    \right)\\ 
    &:= O(\omega_{k2}(n, h)), 
    \qquad \text{almost surely}.
\end{align*}
where $$\omega_{k2}(n, h) = 
    (\log n/n)^{1/2}
    \overline{\delta}^{2}_{n}(h)
    \kappa_{0}^{(3-\alpha)/2}
    \lambda_{\kappa_{0}}^{-1}
    \sum_{r=1}^{\kappa_{0}}  
    \lambda_{r}\left\{1+ (\log m/m)^{1/2}\right\}.$$ We observe that $\sum_{r=1}^{\kappa_{0}}\lambda_{r} \sim \kappa_{0}^{-\tau_{1} + 1}$ under assumption (C6)b. 
Thus, 
$\omega_{k2}(n,h) \sim (\log n/n)
\overline{\delta}^{2}(h)\kappa_{0}^{(4-\tau)/2}
\{1+ (\log m/m)^{1/2} \}
$.
Since $\omega_{n2} = O(\omega_{n1})$ and $\overline{\delta}_{n}^{2}(h) = O(\omega_{n1})$, in summary, for each $k = 1, \cdots, \kappa_{0}$, 
$$\overline{\bg}^{(k)}(\bbeta) = 
    O\left( 
        (\log n/n)^{1/2}
        \left\{ 1 + (\log m/m)^{1/2}\right\}
        + \omega_{k1}(n,h)
    \right)$$
almost surely. Since $1/(nm) = O(1/n)$, 
$$\AMSE\{ \overline{\bg}^{(k)}(\bbeta_{0}) \} = O\left( n^{-1} + n^{-1}\kappa_{0}^{3-\tau}R_{n}(h)\right),$$ 
where $R_{n}(h) = \left\{
            h^{4} + \frac{1}{n} + 
            \frac{1}{nmh} + 
            \frac{1}{n^{2}m^{2}h^{2}} + 
            \frac{1}{n^{2}m^{4}h^{4}} + 
            \frac{1}{n^{2}mh} + 
            \frac{1}{n^{2}m^{3}h^{3}}
            \right\}$.
Combining the above conditions, we find that 
if $a > 1/4$, $\kappa_{0} = O(n^{1/(3-\tau)})$ 
and $n^{-1/4} \lesssim h \lesssim n^{-(a+1)/5}$ 
then $\AMSE\{\overline{\bg^{(k)}}(\bbeta_{0})\} = O(1/n)$. 
On the other hand, if $a  \leq 1/4$, $\kappa_{0} = O(n^{4(1+a)/5(3-\tau)})$ and $h \lesssim n^{-1/4}$ $\AMSE\{\overline{\bg^{(k)}}(\bbeta_{0})\} = O(1/n)$. 
\par
Note that for a three-dimensional array 
$(\partial C/\partial\beta_{1}, \cdots, \partial C/\partial\beta_{p})$ 
such that the following is a $p\times 1$ vector.
\begin{equation}
 \overline{\bg}(\bbeta_{0})^{\tp}\bC^{-1}(\bbeta_{0})\dot{\bC}(\bbeta_{0})\bC^{-1}(\bbeta_{0})\overline{\bg}(\bbeta_{0})   
\end{equation}
Therefore, 
\begin{equation}
\label{Chapter2-qif-Eq:Qdot}
    n^{-1}\dot{\sQ}(\bbeta_{0}) = 2\dot{\overline{\bg}}(\bbeta_{0})^{\tp}\bC^{-1}(\bbeta_{0})\overline{\bg}(\bbeta_{0}) - 
    \overline{\bg}(\bbeta_{0})^{\tp}\bC^{-1}(\bbeta_{0})\dot{\bC}(\bbeta_{0})\bC^{-1}(\bbeta_{0})\overline{\bg}(\bbeta_{0})
\end{equation}
and 
\begin{equation}
\label{Chapter2-qif-Eq:Qdotdot}
        n^{-1}\ddot{\sQ}(\bbeta_{0}) = 2\dot{\overline{\bg}}(\bbeta_{0})^{\tp}\bC^{-1}(\bbeta_{0})\dot{\overline{\bg}}(\bbeta_{0}) 
        + r_{n1} + r_{n2} + r_{n3} + r_{n4}
\end{equation}
where 
\begin{equation}
    \begin{split}
        r_{n1} &= 2\ddot{\overline{\bg}}(\bbeta_{0})\bC^{-1}(\bbeta_{0})\overline{\bg}(\bbeta_{0})\\
        r_{n2} &= - 4\dot{\overline{\bg}}(\bbeta_{0})^{\tp}\bC^{-1}(\bbeta_{0})\dot{\bC}(\bbeta_{0})\bC^{-1}(\bbeta_{0})\overline{\bg}(\bbeta_{0})\\
        r_{n3} &= 2\dot{\overline{\bg}}(\bbeta_{0})\bC^{-1}(\bbeta_{0})\bC^{-1}(\bbeta_{0})
        \dot{\bC}(\bbeta_{0})\bC^{-1}(\bbeta_{0})\overline{\bg}(\bbeta_{0})\\
        r_{n4} &= -\overline{\bg}(\bbeta_{0})^{\tp}\bC^{-1}(\bbeta_{0})\ddot{\bC}(\bbeta_{0})\bC^{-1}(\bbeta_{0})\overline{\bg}(\bbeta_{0}).
    \end{split}
\end{equation}
Since $\overline{\bg}(\bbeta_{0}) = O_{P}(n^{-1/2})$ and the weight matrix converges almost surely to an invertible matrix, 
$$\overline{\bg}(\bbeta_{0})^{\tp}\bC^{-1}(\bbeta_{0})\dot{\bC}(\bbeta_{0})\bC^{-1}(\bbeta_{0})\overline{\bg}(\bbeta_{0}) = o(n^{-1})$$ almost surely. 
Furthermore, $r_{n1} = O(n^{-1/2})$, $r_{n2} = o(n^{-1/2})$, $r_{n3} = O(n^{-1/2})$, and $r_{n4} = O(n^{-1})$ almost surely. Combining these bounds, we have $r_{n} = o(1)$ almost surely.
Therefore, $\| n^{-1}\dot{\sQ} - 2\dot{\overline{\bg}}(\bbeta_{0})^{\tp}\bC^{-1}(\bbeta_{0})\overline{\bg}(\bbeta_{0}) \| = o_{P}(n^{-1})$ 
and 
$\|n^{-1}\ddot{\sQ} - 2\dot{\overline{\bg}}(\bbeta_{0})^{\tp}\bC^{-1}(\bbeta_{0})\dot{\overline{\bg}}(\bbeta_{0}) \| = o_{P}(1)$.\par
The following lines are based on common steps in the GEE literature that includes \citet{mccullagh1989generalized, balan2005asymptotic, tian2014penalized} among many others. 
Let $\bbeta_{n} = \bbeta_{0} + \delta\bd$ where set $\delta = n^{-1/2}$. We have to show that for any $\epsilon > 0$ there exists a large constant $c$ such that 
\begin{equation}
\label{Chapter2-qif-Eq:prob-cond}
    P\{\inf_{\|d\| = c} \sQ(\bbeta_{n}) \geq \sQ(\bbeta_{0}) \} > 1-\epsilon.
\end{equation}
Note that the above statement is always true if $\epsilon \geq 1$. Thus, we assume that $\epsilon \in (0, 1)$. Due to Taylor series expansion, 
\begin{equation}
    \sQ(\bbeta_{n}) = \sQ(\bbeta_{0} + \delta\bd) 
    = \sQ(\bbeta_{0}) + \delta\bd^{\tp}\dot{\sQ}(\bbeta_{0}) + 0.5 \delta\bd^{\tp}\ddot{\sQ}(\bbeta_{0})\bd + \|\bd\|^{2}o_{P}(1).
\end{equation}
Now, observe that, using Equation (\ref{Chapter2-qif-Eq:Qdot}),
\begin{equation}
    \delta\bd^{\tp}\dot{\sQ}(\bbeta_{0}) = \|\bd\|O_{P}(\sqrt{n}\delta) + \|\bd\|O_{P}(\delta),
\end{equation}
and
\begin{equation}
    0.5\delta^{2}\bd^{\tp}\ddot{\sQ}(\bbeta^{*})\bd  = n\delta^{2}\bd^{\tp}\dot{\overline{\bg}}(\bbeta_{0})\bC^{-1}(\bbeta_{0})\dot{\bg}(\bbeta_{0})\bd + n\delta^{2}\|\bd\|^{2}o_{P}(1).
\end{equation}
Therefore, for given $\epsilon > 0$, there exists a large enough $c$ such that the above equation (\ref{Chapter2-qif-Eq:prob-cond}) holds. This implies that there exists a $\hat{\bbeta}$ that satisfies $\|\widehat{\bbeta} - \bbeta_{0}\| = O_{P}(\delta)$.
Thus, for large $n$, with probability 1, $\sQ(\bbeta)$ attains the minimal value at $\hat{\bbeta}$ and therefore, $\dot{\sQ} = 0$.

\section{Proof of Theorem 2}
Recall, $\sC_{i} = \sum_{k_{1}=1}^{\kappa_{0}}
                  \sum_{k_{2}=1}^{\kappa_{0}}
                  \bPhi_{k_{1}}
                  \bX_{i}\bC_{k_{1}, k_{2}}^{-1}\bX_{i}^{\tp}
                  \bPhi_{k_{2}}$, 
where $\bC_{k_{1}, k_{2}}^{-1}$ is the $(k_{1}, k_{2})$ block of $\bC_{0}^{-1}$. 
Similarly, we can define $\widehat{\sC}_{i} =             \sum_{k_{1}=1}^{\kappa_{0}}
                  \sum_{k_{2}=1}^{\kappa_{0}}
                  \widehat{\bPhi}_{k_{1}}
                  \bX_{i}\bC_{k_{1}, k_{2}}^{-1}\bX_{i}^{\tp}
                  \widehat{\bPhi}_{k_{2}}$.
It is easy to observe that $\widehat{\sC}_{i} =
        \sC_{i} + \sum_{k_{1}=1}^{\kappa_{0}}\sum_{k_{2}=1}^{\kappa_{0}}
        (\widehat{\bPhi}_{k_{1}} - \bPhi_{k_{1}})
        \bX_{i}\bC_{k_{1}, k_{2}}^{-1}\bX_{i}^{\tp}
        (\widehat{\bPhi}_{k_{2}} - \bPhi_{k_{2}}) 
        + 2\sum_{k_{1}=1}^{\kappa_{0}}\sum_{k_{2}=1}^{\kappa_{0}}
        \bPhi_{k_{1}}
        \bX_{i}\bC_{k_{1}, k_{2}}^{-1}\bX_{i}^{\tp}
        (\widehat{\bPhi}_{k_{2}} - \bPhi_{k_{2}})$.                   
Therefore, $n^{-1}\sum_{i=1}^{n}\dot{\bmu}_{i}^{\tp}(\widehat{\sC}_{i} - \sC)\bX_{i} = n^{-1}\sum_{i=1}^{n}\sum_{k_{1}=1}^{\kappa_{0}}\sum_{k_{2}=1}^{\kappa_{0}}\bP_{ik_{1}}\bC_{k_{1}, k_{2}}^{-1}\bP_{ik_{2}}$ 
and
$n^{-1}\sum_{i=1}^{n}\dot{\bmu}_{i}^{\tp}(\widehat{\sC}_{i} - \sC)\by_{i} = n^{-1}\sum_{i=1}^{n}\sum_{k_{1}=1}^{\kappa_{0}}\sum_{k_{2}=1}^{\kappa_{0}}\bP_{ik_{1}}\bC_{k_{1}, k_{2}}^{-1}\bQ_{ik_{2}}$ 
where 
$\bP_{i,k} = \dot{\bmu}_{i}^{\tp}\bD_{ik}\bX_{i}$ and $\bQ_{ik}= \dot{\bmu}_{i}^{\tp}\bD_{ik}\by_{i}$ 
with $\bD_{ik}$ be the difference matrix with $(j_{1}, j_{2})$-th element is $d_{i}(T_{ij_{1}}, T_{ij_{2}})$. 
Thus, note that, almost surely we have the following relation,  
\begingroup
\allowdisplaybreaks
\begin{align*}
    \bP_{ik} &= m_{i}^{-2}\sum_{j_{1}=1}^{m_{i}}\sum_{j_{2}=1}^{m_{i}}
        \dot{\mu}_{i}(T_{ij_{1}})d_{i}(T_{ij_{1}}, T_{ij_{2}})x_{i}(T_{ij_{2}})\\
        &\lesssim m_{i}^{-2}\sum_{j_{1}=1}^{m_{i}}\sum_{j_{2}=1}^{m_{i}}
        \dot{\mu}_{i}(T_{ij_{1}})
        \underset{r \neq k}{\sum_{r=1}^{\infty}}(\lambda_{k}-\lambda_{r})^{-1}
        \left<\phi_{r}, \Delta\phi_{k} \right>
        \left\{
            \phi_{r}(T_{ij_{1}})\phi_{k}(T_{ij_{2}}) + \phi_{k}(T_{ij_{1}})\phi_{r}(T_{ij_{2}})
        \right\}\\
        & \qquad + O(\|\Delta\|^{2})\\
        &\lesssim \underset{r\neq k}{\sum_{r=1}^{\infty}}(\lambda_{k} - \lambda_{r})^{-1}\|\Delta\phi_{k}\| + O(\|\Delta\|^{2}) =O(\varpi),\\
        & \qquad \text{ since $m_{i}^{-1}\sum_{j=1}^{m_{i}}\dot{\mu}(T_{ij})\phi_{k}(T_{ij})$ and 
        $m_{i}^{-1}\sum_{j=1}^{m_{i}}x_{i}(T_{ij})\phi_{k}(T_{ij})$ are finite,}
    \numberthis
\end{align*}
\endgroup
where $\varpi = \underset{r\neq k}{\sum_{r = 1}^{\infty}}(\lambda_{k}-\lambda_{r})^{-1} \overline{\delta}_{n}(h) + h^{2} + \delta_{n2}^{2}(h)$. 
A similar result can be obtained for $\bQ_{ik}$. Combining such results, in summary, we have 
$-{2}{n^{-1}}\sum_{i=1}^{n}\bX_{i}^{\tp}\widehat{\sC}_{i}(\by_{i} - \bX_{i}^{\tp}\widehat{\bbeta}) = 
-{2}{n^{-1}}\sum_{i=1}^{n}\bX_{i}^{\tp}\sC_{i}(\by_{i} - \bX_{i}^{\tp}\widehat{\bbeta}) + O(\varpi_{n})$.
Since, for $n \rightarrow \infty$, $\sQ(\bbeta)$ attains a minimal value at $\bbeta = \widehat{\bbeta}$, we therefore have $\dot{\sQ}(\widehat{\bbeta}) = 0$. Thus, 

\begin{align*}
&\dot{\sQ}(\widehat{\bbeta}) = 
        -{2}{n^{-1}}\sum_{i = 1}^{n}
        \bX_{i}^{\tp}\widehat{\sC}_{i}(\by_{i} - \bX_{i}^{\tp}\widehat{\bbeta})\\ 
        &= -{2}{n^{-1}}\sum_{i = 1}^{n}
        \bX_{i}^{\tp}(\widehat{\sC}_{i} - \sC_{i})(\by_{i} - \bX_{i}^{\tp}\widehat{\bbeta}) 
        -{2}{n^{-1}}\sum_{i = 1}^{n}
        \bX_{i}^{\tp}{\sC}_{i}(\by_{i} - \bX_{i}^{\tp}\widehat{\bbeta}) 
        = 0.
    \numberthis
\end{align*}
Therefore, almost surely, we have, 
\begin{align*}
    -{2}{n^{-1}}\sum_{i=1}^{n}\bX_{i}^{\tp}\sC_{i}(\by_{i} - \bX_{i}^{\tp}\widehat{\bbeta}) + O(\varpi_{n}) &= 0\\
    -{2}{n^{-1}}\sum_{i=1}^{n}\bX_{i}^{\tp}\sC_{i}(\bX_{i}^{\tp}\bbeta_{0} + \be_{i} - \bX_{i}^{\tp}\widehat{\bbeta}) + O(\varpi_{n}) &= 0\\
    \sqrt{n}(\widehat{\bbeta} - \bbeta_{0})
    \Big\{
         n^{-1}\sum_{i = 1}^{n}\bX_{i}^{\tp}\sC_{i}\bX_{i}
    \Big\} 
    &= {n^{-1}}
        \sum_{i=1}^{n}\bX_{i}^{\tp}\sC_{i}\be_{i}.
    \numberthis
\end{align*}
Now, using the central limit theorem, we can obtain the following. 
\begin{equation}
   \sum_{i=1}^{n}\bX_{i}^{\tp}\sC_{i}\be_{i}/{\sqrt{n}} \xrightarrow[]{d} N(0, \bA).
\end{equation}
In addition, by the law of large numbers $n^{-1}\sum_{i=1}^{n}\bX_{i}^{\tp}\sC_{i}\bX_{i} \rightarrow\bB$ in probability. Therefore, using the Slutsky theorem, we complete the proof of Theorem 2.

\begin{table}[t!]
\caption{Performance of the estimation procedure where the residuals are generated from linear process 
with $l_{0} = 1$. Mean of the estimated coefficients, standard deviation, absolute bias, mean square error $(\times 100)$ and FVE in percentage are summarized.}
\centering
\begin{tabular}{rrrrrrrrrr}
  \\
 Method & \multicolumn{4}{c}{$\beta_{1}$} & \multicolumn{4}{c}{$\beta_{2}$} & FVE \%-age\\ 
 \\
 & Mean & SD & AB  & MSE & Mean & SD & AB & MSE & \\
 \\
 &\multicolumn{9}{c}{$n = 100$}\\
 \\
\texttt{init} & 1.0019 & 0.0340 & 0.0267 & 0.1155 & 0.5001 & 0.0456 & 0.0356 & 0.2078 &  \\ 
  \texttt{ldaAR} & 1.0007 & 0.0230 & 0.0187 & 0.0528 & 0.5010 & 0.0361 & 0.0285 & 0.1303 &  \\ 
  \texttt{ldaCS} & 1.0000 & 0.0006 & 0.0005 & 0.0000 & 0.4999 & 0.0008 & 0.0006 & 0.0001 &  \\ 
  $\texttt{fda-1}$& 1.0093 & 0.1436 & 0.1150 & 2.0675 & 0.5020 & 0.2010 & 0.1570 & 4.0311 & 73.0607 \\ 
  $\texttt{fda-2}$ & 1.0036 & 0.1055 & 0.0828 & 1.1123 & 0.5052 & 0.1337 & 0.1070 & 1.7869 & 91.6726 \\ 
  $\texttt{fda-3}$ & 1.0038 & 0.1024 & 0.0804 & 1.0473 & 0.5056 & 0.1303 & 0.1020 & 1.6969 & 99.7657 \\ 
  $\texttt{fda-4}$ & 1.0000 & 0.0092 & 0.0021 & 0.0084 & 0.5006 & 0.0096 & 0.0024 & 0.0093 & 99.9993 \\ 
  $\texttt{fda-5}$ & 1.0000 & 0.0011 & 0.0007 & 0.0001 & 0.5000 & 0.0017 & 0.0010 & 0.0003 & 100.0000 \\ 
  $\texttt{fda-AIC}$ & 1.0026 & 0.0997 & 0.0781 & 0.9936 & 0.5052 & 0.1284 & 0.0998 & 1.6477 & 99.7824 \\ 
 \\
 &\multicolumn{9}{c}{$n = 300$}\\
 \\
 \texttt{init} & 0.9991 & 0.0181 & 0.0144 & 0.0326 & 0.5006 & 0.0268 & 0.0212 & 0.0715 &  \\ 
  \texttt{ldaAR} & 0.9995 & 0.0133 & 0.0105 & 0.0178 & 0.5000 & 0.0212 & 0.0169 & 0.0447 &  \\ 
  \texttt{ldaCS} & 1.0000 & 0.0003 & 0.0002 & 0.0000 & 0.5000 & 0.0005 & 0.0004 & 0.0000 &  \\ 
  $\texttt{fda-1}$& 0.9958 & 0.0767 & 0.0616 & 0.5888 & 0.5037 & 0.1163 & 0.0918 & 1.3523 & 73.4907 \\ 
  $\texttt{fda-2}$ & 0.9974 & 0.0567 & 0.0458 & 0.3220 & 0.5011 & 0.0804 & 0.0648 & 0.6460 & 91.7757 \\ 
  $\texttt{fda-3}$ & 0.9974 & 0.0564 & 0.0455 & 0.3182 & 0.5007 & 0.0800 & 0.0643 & 0.6391 & 99.9225 \\ 
  $\texttt{fda-4}$ & 1.0000 & 0.0005 & 0.0001 & 0.0000 & 0.5000 & 0.0009 & 0.0002 & 0.0001 & 99.9998 \\ 
  $\texttt{fda-5}$ & 1.0000 & 0.0002 & 0.0001 & 0.0000 & 0.5000 & 0.0003 & 0.0002 & 0.0000 & 100.0000 \\ 
  $\texttt{fda-AIC}$ & 0.9974 & 0.0564 & 0.0455 & 0.3182 & 0.5007 & 0.0800 & 0.0643 & 0.6391 & 99.9225 \\ 
 \\
 &\multicolumn{9}{c}{$n = 500$}\\
 \\
  \texttt{init} & 0.9999 & 0.0152 & 0.0121 & 0.0230 & 0.5027 & 0.0207 & 0.0163 & 0.0436 &  \\ 
  \texttt{ldaAR} & 1.0008 & 0.0100 & 0.0079 & 0.0100 & 0.5019 & 0.0161 & 0.0129 & 0.0263 &  \\ 
  \texttt{ldaCS} & 1.0000 & 0.0003 & 0.0002 & 0.0000 & 0.5000 & 0.0004 & 0.0003 & 0.0000 &  \\ 
  $\texttt{fda-1}$& 0.9990 & 0.0657 & 0.0523 & 0.4303 & 0.5113 & 0.0883 & 0.0698 & 0.7913 & 73.4098 \\ 
  $\texttt{fda-2}$ & 1.0013 & 0.0468 & 0.0371 & 0.2185 & 0.5078 & 0.0651 & 0.0520 & 0.4292 & 91.8501 \\ 
  $\texttt{fda-3}$ & 1.0014 & 0.0459 & 0.0364 & 0.2107 & 0.5074 & 0.0650 & 0.0516 & 0.4274 & 99.9490 \\ 
  $\texttt{fda-4}$ & 1.0000 & 0.0001 & 0.0001 & 0.0000 & 0.5000 & 0.0001 & 0.0001 & 0.0000 & 99.9999 \\ 
  $\texttt{fda-5}$ & 1.0000 & 0.0001 & 0.0001 & 0.0000 & 0.5000 & 0.0001 & 0.0001 & 0.0000 & 100.0000 \\ 
  $\texttt{fda-AIC}$ & 1.0014 & 0.0459 & 0.0364 & 0.2107 & 0.5074 & 0.0650 & 0.0516 & 0.4274 & 99.9490 \\
\end{tabular}
\label{Chapter2-qif-Table:decay1}
\end{table}

\begin{table}[t!]
\centering
\caption{Performance of the estimation procedure where the residuals are generated from linear process 
with $l_{0} = 2$. Mean of the estimated coefficients, standard deviation, absolute bias, mean square error $(\times 100)$ and FVE in percentage are summarized.}{
\begin{tabular}{rrrrrrrrrr}
  \\
 Method & \multicolumn{4}{c}{$\beta_{1}$} & \multicolumn{4}{c}{$\beta_{2}$} & FVE \%-age\\ 
 \\
 & Mean & SD & AB  & MSE & Mean & SD & AB & MSE & \\
 \\
 &\multicolumn{9}{c}{$n = 100$}\\
 \\
  \texttt{init} & 1.0020 & 0.0331 & 0.0261 & 0.1094 & 0.4999 & 0.0451 & 0.0349 & 0.2028 &  \\ 
  \texttt{ldaAR} & 1.0002 & 0.0167 & 0.0133 & 0.0278 & 0.5014 & 0.0232 & 0.0183 & 0.0538 &  \\ 
  \texttt{ldaCS} & 1.0000 & 0.0003 & 0.0002 & 0.0000 & 0.5000 & 0.0004 & 0.0003 & 0.0000 &  \\ 
  $\texttt{fda-1}$& 1.0096 & 0.1431 & 0.1144 & 2.0520 & 0.5022 & 0.2003 & 0.1561 & 4.0029 & 92.5933 \\ 
  $\texttt{fda-2}$ & 1.0014 & 0.0648 & 0.0508 & 0.4188 & 0.5037 & 0.0842 & 0.0667 & 0.7094 & 98.5532 \\ 
  $\texttt{fda-3}$ & 1.0009 & 0.0535 & 0.0406 & 0.2860 & 0.5018 & 0.0744 & 0.0570 & 0.5524 & 99.7251 \\ 
  $\texttt{fda-4}$ & 1.0000 & 0.0074 & 0.0017 & 0.0055 & 0.4999 & 0.0103 & 0.0024 & 0.0106 & 99.9991 \\ 
  $\texttt{fda-5}$ & 1.0000 & 0.0045 & 0.0009 & 0.0021 & 0.5000 & 0.0021 & 0.0009 & 0.0004 & 100.0000 \\ 
  $\texttt{fda-AIC}$ & 1.0006 & 0.0538 & 0.0409 & 0.2891 & 0.5021 & 0.0748 & 0.0573 & 0.5581 & 99.6771 \\ 
 \\
 &\multicolumn{9}{c}{$n = 300$}\\
 \\
  \texttt{init} & 0.9991 & 0.0175 & 0.0140 & 0.0308 & 0.5006 & 0.0263 & 0.0208 & 0.0689 &  \\ 
  \texttt{ldaAR} & 0.9999 & 0.0091 & 0.0071 & 0.0083 & 0.4997 & 0.0127 & 0.0102 & 0.0161 &  \\ 
  \texttt{ldaCS} & 1.0000 & 0.0002 & 0.0001 & 0.0000 & 0.5000 & 0.0002 & 0.0002 & 0.0000 &  \\ 
  $\texttt{fda-1}$& 0.9957 & 0.0767 & 0.0616 & 0.5893 & 0.5038 & 0.1164 & 0.0919 & 1.3541 & 92.9525 \\ 
  $\texttt{fda-2}$ & 0.9989 & 0.0365 & 0.0295 & 0.1331 & 0.4998 & 0.0511 & 0.0410 & 0.2608 & 98.7539 \\ 
  $\texttt{fda-3}$ & 0.9992 & 0.0349 & 0.0282 & 0.1218 & 0.4991 & 0.0483 & 0.0385 & 0.2332 & 99.9061 \\ 
  $\texttt{fda-4}$ & 0.9999 & 0.0017 & 0.0002 & 0.0003 & 0.4999 & 0.0033 & 0.0004 & 0.0011 & 99.9997 \\ 
  $\texttt{fda-5}$ & 1.0000 & 0.0002 & 0.0001 & 0.0000 & 0.5000 & 0.0002 & 0.0001 & 0.0000 & 100.0000 \\ 
  $\texttt{fda-AIC}$ & 0.9991 & 0.0349 & 0.0281 & 0.1218 & 0.4991 & 0.0485 & 0.0386 & 0.2352 & 99.8515 \\
  \\
 &\multicolumn{9}{c}{$n = 500$}\\
 \\
  \texttt{init} & 0.9998 & 0.0148 & 0.0118 & 0.0219 & 0.5026 & 0.0201 & 0.0158 & 0.0409 &  \\ 
  \texttt{ldaAR} & 1.0006 & 0.0073 & 0.0059 & 0.0054 & 0.5008 & 0.0098 & 0.0079 & 0.0097 &  \\ 
  \texttt{ldaCS} & 1.0000 & 0.0001 & 0.0001 & 0.0000 & 0.5000 & 0.0002 & 0.0001 & 0.0000 &  \\ 
  $\texttt{fda-1}$& 0.9990 & 0.0656 & 0.0523 & 0.4295 & 0.5114 & 0.0882 & 0.0698 & 0.7897 & 92.9459 \\ 
  $\texttt{fda-2}$ & 1.0013 & 0.0296 & 0.0233 & 0.0874 & 0.5039 & 0.0415 & 0.0329 & 0.1735 & 98.7957 \\ 
  $\texttt{fda-3}$ & 1.0012 & 0.0285 & 0.0224 & 0.0812 & 0.5036 & 0.0407 & 0.0322 & 0.1670 & 99.9389 \\ 
  $\texttt{fda-4}$ & 1.0000 & 0.0001 & 0.0001 & 0.0000 & 0.5000 & 0.0001 & 0.0001 & 0.0000 & 99.9998 \\ 
  $\texttt{fda-5}$ & 1.0000 & 0.0001 & 0.0001 & 0.0000 & 0.5000 & 0.0001 & 0.0001 & 0.0000 & 100.0000 \\ 
  $\texttt{fda-AIC}$ & 1.0012 & 0.0285 & 0.0224 & 0.0814 & 0.5035 & 0.0408 & 0.0323 & 0.1671 & 99.9101 \\
\end{tabular}}
\label{Chapter2-qif-Table:decay2}
\end{table}

\begin{table}[t!]
\centering
\caption{Performance of the estimation procedure where the residuals are generated from linear process 
with $l_{0} = 3$. Mean of the estimated coefficients, standard deviation, absolute bias, mean square error $(\times 100)$ and FVE in percentage are summarized.}{
\begin{tabular}{rrrrrrrrrr}
  \\
 Method & \multicolumn{4}{c}{$\beta_{1}$} & \multicolumn{4}{c}{$\beta_{2}$} & FVE \%-age\\ 
 \\
 & Mean & SD & AB  & MSE & Mean & SD & AB & MSE & \\
 \\
 &\multicolumn{9}{c}{$n = 100$}\\
 \\
\texttt{init} & 1.0020 & 0.0328 & 0.0260 & 0.1077 & 0.4998 & 0.0451 & 0.0349 & 0.2027 &  \\ 
  \texttt{ldaAR} & 1.0000 & 0.0087 & 0.0069 & 0.0076 & 0.5009 & 0.0122 & 0.0096 & 0.0149 &  \\ 
  \texttt{ldaCS} & 1.0000 & 0.0001 & 0.0001 & 0.0000 & 0.5000 & 0.0002 & 0.0002 & 0.0000 &  \\ 
  $\texttt{fda-1}$& 1.0096 & 0.1430 & 0.1143 & 2.0496 & 0.5023 & 0.2001 & 0.1559 & 3.9978 & 97.9586 \\ 
  $\texttt{fda-2}$ & 1.0000 & 0.0326 & 0.0250 & 0.1058 & 0.5023 & 0.0440 & 0.0344 & 0.1941 & 99.5699 \\ 
  $\texttt{fda-3}$ & 0.9998 & 0.0144 & 0.0072 & 0.0208 & 0.4986 & 0.0209 & 0.0103 & 0.0436 & 99.8941 \\ 
  $\texttt{fda-4}$ & 1.0002 & 0.0053 & 0.0013 & 0.0028 & 0.5002 & 0.0067 & 0.0018 & 0.0044 & 99.9991 \\ 
  $\texttt{fda-5}$ & 1.0003 & 0.0037 & 0.0008 & 0.0014 & 0.4998 & 0.0042 & 0.0011 & 0.0018 & 100.0000 \\
  $\texttt{fda-AIC}$ & 1.0003 & 0.0323 & 0.0243 & 0.1040 & 0.5024 & 0.0435 & 0.0334 & 0.1898 & 99.6253 \\ 
  \\
  &\multicolumn{9}{c}{$n = 300$}\\
  \\
  \texttt{init} & 0.9991 & 0.0174 & 0.0139 & 0.0303 & 0.5006 & 0.0262 & 0.0207 & 0.0684 &  \\ 
  \texttt{ldaAR} & 1.0000 & 0.0047 & 0.0037 & 0.0022 & 0.4997 & 0.0066 & 0.0052 & 0.0044 &  \\ 
  \texttt{ldaCS} & 1.0000 & 0.0001 & 0.0001 & 0.0000 & 0.5000 & 0.0001 & 0.0001 & 0.0000 &  \\ 
  $\texttt{fda-1}$& 0.9957 & 0.0767 & 0.0616 & 0.5896 & 0.5038 & 0.1164 & 0.0919 & 1.3543 & 98.2262 \\ 
  $\texttt{fda-2}$ & 0.9996 & 0.0197 & 0.0158 & 0.0386 & 0.4996 & 0.0274 & 0.0219 & 0.0750 & 99.7663 \\ 
  $\texttt{fda-3}$ & 1.0002 & 0.0151 & 0.0101 & 0.0227 & 0.4990 & 0.0186 & 0.0125 & 0.0347 & 99.9255 \\ 
  $\texttt{fda-4}$ & 1.0001 & 0.0022 & 0.0003 & 0.0005 & 0.5001 & 0.0017 & 0.0003 & 0.0003 & 99.9997 \\ 
  $\texttt{fda-5}$ & 1.0000 & 0.0002 & 0.0001 & 0.0000 & 0.5000 & 0.0002 & 0.0001 & 0.0000 & 100.0000 \\ 
  $\texttt{fda-AIC}$ & 0.9996 & 0.0197 & 0.0158 & 0.0386 & 0.4996 & 0.0274 & 0.0219 & 0.0750 & 99.7663 \\ 
  \\
  &\multicolumn{9}{c}{$n = 500$}\\
  \\
  \texttt{init} & 0.9998 & 0.0147 & 0.0117 & 0.0216 & 0.5025 & 0.0199 & 0.0157 & 0.0400 &  \\ 
  \texttt{ldaAR} & 1.0003 & 0.0038 & 0.0031 & 0.0015 & 0.5003 & 0.0052 & 0.0041 & 0.0027 &  \\ 
  \texttt{ldaCS} & 1.0000 & 0.0001 & 0.0001 & 0.0000 & 0.5000 & 0.0001 & 0.0001 & 0.0000 &  \\ 
  $\texttt{fda-1}$& 0.9990 & 0.0656 & 0.0523 & 0.4293 & 0.5114 & 0.0882 & 0.0698 & 0.7895 & 98.2518 \\ 
  $\texttt{fda-2}$ & 1.0008 & 0.0159 & 0.0126 & 0.0253 & 0.5015 & 0.0222 & 0.0175 & 0.0495 & 99.8021 \\ 
  $\texttt{fda-3}$ & 1.0001 & 0.0126 & 0.0091 & 0.0158 & 0.5002 & 0.0179 & 0.0130 & 0.0320 & 99.9449 \\ 
  $\texttt{fda-4}$ & 1.0000 & 0.0001 & 0.0001 & 0.0000 & 0.5000 & 0.0001 & 0.0001 & 0.0000 & 99.9998 \\ 
  $\texttt{fda-5}$ & 1.0000 & 0.0001 & 0.0001 & 0.0000 & 0.5000 & 0.0001 & 0.0001 & 0.0000 & 100.0000 \\ 
  $\texttt{fda-AIC}$ & 1.0008 & 0.0159 & 0.0126 & 0.0253 & 0.5015 & 0.0222 & 0.0175 & 0.0495 & 99.8021 \\
\end{tabular}}
\label{Chapter2-qif-Table:decay3}
\end{table}

\begin{table}[t!]
\caption{Performance of the estimation procedure where the residuals are generated from Ornstein-Uhlenbeck process 
with $\mu_{0} = 1$. Mean of the estimated coefficients, standard deviation, absolute bias, mean square error $(\times 100)$ and FVE in percentage are summarized.}
\centering
\begin{tabular}{rrrrrrrrrr}
  \\
 Method & \multicolumn{4}{c}{$\beta_{1}$} & \multicolumn{4}{c}{$\beta_{2}$} & FVE \%-age\\ 
 \\
 & Mean & SD & AB  & MSE & Mean & SD & AB & MSE & \\
 \\
 &\multicolumn{9}{c}{$n = 100$}\\
 \\
  \texttt{init} & 1.0003 & 0.0541 & 0.0434 & 0.2922 & 0.4994 & 0.0711 & 0.0563 & 0.5048 &  \\ 
  \texttt{ldaAR} & 1.0001 & 0.0476 & 0.0383 & 0.2261 & 0.4984 & 0.0650 & 0.0513 & 0.4214 &  \\ 
  \texttt{ldaCS} & 0.9994 & 0.0398 & 0.0316 & 0.1581 & 0.4978 & 0.0534 & 0.0421 & 0.2849 &  \\ 
  $\texttt{fda-1}$& 1.0009 & 0.0705 & 0.0563 & 0.4964 & 0.5006 & 0.0947 & 0.0747 & 0.8954 & 79.4156 \\ 
  $\texttt{fda-2}$ & 1.0001 & 0.0453 & 0.0358 & 0.2044 & 0.4982 & 0.0608 & 0.0482 & 0.3690 & 94.9669 \\ 
  $\texttt{fda-3}$ & 0.9993 & 0.0386 & 0.0307 & 0.1491 & 0.4978 & 0.0511 & 0.0405 & 0.2613 & 99.9949 \\ 
  $\texttt{fda-4}$ & 1.0003 & 0.0127 & 0.0081 & 0.0162 & 0.4991 & 0.0242 & 0.0146 & 0.0583 & 99.9992 \\ 
  $\texttt{fda-5}$ & 0.9997 & 0.0084 & 0.0071 & 0.0071 & 0.4991 & 0.0159 & 0.0124 & 0.0254 & 100.0000 \\ 
  $\texttt{fda-AIC}$ & 0.9993 & 0.0386 & 0.0307 & 0.1491 & 0.4978 & 0.0511 & 0.0405 & 0.2613 & 99.9949 \\ 
  \\
  &\multicolumn{9}{c}{$n = 300$}\\
 \\
 \texttt{init} & 1.0000 & 0.0288 & 0.0233 & 0.0829 & 0.5003 & 0.0416 & 0.0329 & 0.1728 &  \\ 
  \texttt{ldaAR} & 1.0000 & 0.0258 & 0.0206 & 0.0662 & 0.4996 & 0.0368 & 0.0294 & 0.1350 &  \\ 
  \texttt{ldaCS} & 1.0002 & 0.0212 & 0.0170 & 0.0449 & 0.4987 & 0.0314 & 0.0254 & 0.0983 &  \\ 
  $\texttt{fda-1}$& 0.9997 & 0.0388 & 0.0316 & 0.1503 & 0.5014 & 0.0560 & 0.0440 & 0.3130 & 79.9233 \\ 
  $\texttt{fda-2}$ & 1.0005 & 0.0240 & 0.0193 & 0.0576 & 0.4983 & 0.0352 & 0.0284 & 0.1242 & 95.0283 \\ 
  $\texttt{fda-3}$ & 0.9999 & 0.0202 & 0.0161 & 0.0409 & 0.4994 & 0.0298 & 0.0240 & 0.0884 & 99.9984 \\ 
  $\texttt{fda-4}$ & 0.9999 & 0.0072 & 0.0064 & 0.0051 & 0.4997 & 0.0129 & 0.0111 & 0.0166 & 99.9998 \\ 
  $\texttt{fda-5}$ & 1.0000 & 0.0072 & 0.0064 & 0.0051 & 0.4997 & 0.0128 & 0.0111 & 0.0165 & 100.0000 \\ 
  $\texttt{fda-AIC}$ & 0.9999 & 0.0202 & 0.0161 & 0.0409 & 0.4994 & 0.0298 & 0.0240 & 0.0884 & 99.9984 \\ 
  \\
  &\multicolumn{9}{c}{$n = 500$}\\
 \\ 
 \texttt{init} & 1.0000 & 0.0288 & 0.0233 & 0.0829 & 0.5003 & 0.0416 & 0.0329 & 0.1728 &  \\ 
  \texttt{ldaAR} & 1.0000 & 0.0258 & 0.0206 & 0.0662 & 0.4996 & 0.0368 & 0.0294 & 0.1350 &  \\ 
  \texttt{ldaCS} & 1.0002 & 0.0212 & 0.0170 & 0.0449 & 0.4987 & 0.0314 & 0.0254 & 0.0983 &  \\ 
  $\texttt{fda-1}$& 0.9997 & 0.0388 & 0.0316 & 0.1503 & 0.5014 & 0.0560 & 0.0440 & 0.3130 & 79.9233 \\ 
  $\texttt{fda-2}$ & 1.0005 & 0.0240 & 0.0193 & 0.0576 & 0.4983 & 0.0352 & 0.0284 & 0.1242 & 95.0283 \\ 
  $\texttt{fda-3}$ & 0.9999 & 0.0202 & 0.0161 & 0.0409 & 0.4994 & 0.0298 & 0.0240 & 0.0884 & 99.9984 \\ 
  $\texttt{fda-4}$ & 0.9999 & 0.0072 & 0.0064 & 0.0051 & 0.4997 & 0.0129 & 0.0111 & 0.0166 & 99.9998 \\ 
  $\texttt{fda-5}$ & 1.0000 & 0.0072 & 0.0064 & 0.0051 & 0.4997 & 0.0128 & 0.0111 & 0.0165 & 100.0000 \\ 
  $\texttt{fda-AIC}$ & 0.9999 & 0.0161 & 0.0127 & 0.0259 & 0.5001 & 0.0226 & 0.0178 & 0.0509 & 99.9991 \\ 
\end{tabular}
\label{Chapter2-qif-Table:ou1}
\end{table}

\begin{table}[t!]
\caption{Performance of the estimation procedure where the residuals are generated from Ornstein-Uhlenbeck process 
with $\mu_{0} = 3$. Mean of the estimated coefficients, standard deviation, absolute bias, mean square error $(\times 100)$ and FVE in percentage are summarized.}
\centering
\begin{tabular}{rrrrrrrrrr}
  \\
 Method & \multicolumn{4}{c}{$\beta_{1}$} & \multicolumn{4}{c}{$\beta_{2}$} & FVE \%-age\\ 
 \\
 & Mean & SD & AB  & MSE & Mean & SD & AB & MSE & \\
 \\
 &\multicolumn{9}{c}{$n = 100$}\\
 \\
 \texttt{init} & 1.0001 & 0.0454 & 0.0363 & 0.2056 & 0.4990 & 0.0591 & 0.0469 & 0.3487 &  \\ 
  \texttt{ldaAR} & 0.9996 & 0.0390 & 0.0312 & 0.1521 & 0.4979 & 0.0521 & 0.0414 & 0.2717 &  \\ 
  \texttt{ldaCS} & 0.9999 & 0.0429 & 0.0341 & 0.1841 & 0.4981 & 0.0564 & 0.0448 & 0.3176 &  \\ 
  $\texttt{fda-1}$& 1.0005 & 0.0557 & 0.0445 & 0.3100 & 0.5003 & 0.0743 & 0.0588 & 0.5511 & 59.1692 \\ 
  $\texttt{fda-2}$ & 1.0005 & 0.0459 & 0.0362 & 0.2107 & 0.4981 & 0.0604 & 0.0480 & 0.3639 & 87.0120 \\ 
  $\texttt{fda-3}$ & 1.0000 & 0.0436 & 0.0348 & 0.1894 & 0.4975 & 0.0568 & 0.0455 & 0.3221 & 99.9908 \\ 
  $\texttt{fda-4}$ & 1.0004 & 0.0136 & 0.0058 & 0.0185 & 0.4989 & 0.0237 & 0.0106 & 0.0562 & 99.9960 \\ 
  $\texttt{fda-5}$ & 1.0001 & 0.0049 & 0.0033 & 0.0024 & 0.4997 & 0.0099 & 0.0061 & 0.0098 & 100.0000 \\ 
  $\texttt{fda-AIC}$ & 1.0000 & 0.0436 & 0.0348 & 0.1894 & 0.4975 & 0.0568 & 0.0455 & 0.3221 & 99.9908 \\ 
 \\
 &\multicolumn{9}{c}{$n = 300$}\\
 \\
 \texttt{init} & 1.0002 & 0.0239 & 0.0191 & 0.0568 & 0.4998 & 0.0345 & 0.0274 & 0.1190 &  \\ 
  \texttt{ldaAR} & 1.0003 & 0.0209 & 0.0167 & 0.0435 & 0.4990 & 0.0303 & 0.0244 & 0.0916 &  \\ 
  \texttt{ldaCS} & 1.0003 & 0.0224 & 0.0178 & 0.0500 & 0.4992 & 0.0328 & 0.0265 & 0.1075 &  \\ 
  $\texttt{fda-1}$& 0.9998 & 0.0305 & 0.0247 & 0.0928 & 0.5011 & 0.0443 & 0.0349 & 0.1957 & 59.4418 \\ 
  $\texttt{fda-2}$ & 1.0004 & 0.0240 & 0.0192 & 0.0574 & 0.4991 & 0.0347 & 0.0279 & 0.1201 & 86.9290 \\ 
  $\texttt{fda-3}$ & 1.0003 & 0.0222 & 0.0177 & 0.0492 & 0.4992 & 0.0327 & 0.0264 & 0.1070 & 99.9972 \\ 
  $\texttt{fda-4}$ & 1.0002 & 0.0043 & 0.0039 & 0.0018 & 0.4998 & 0.0075 & 0.0065 & 0.0057 & 99.9987 \\ 
  $\texttt{fda-5}$ & 1.0001 & 0.0038 & 0.0035 & 0.0014 & 0.4998 & 0.0068 & 0.0059 & 0.0046 & 100.0000 \\ 
  $\texttt{fda-AIC}$ & 1.0003 & 0.0222 & 0.0177 & 0.0492 & 0.4992 & 0.0327 & 0.0264 & 0.1070 & 99.9972 \\ 
  \\
  &\multicolumn{9}{c}{$n = 500$}\\
 \\
 \texttt{init} & 1.0003 & 0.0180 & 0.0141 & 0.0323 & 0.5017 & 0.0268 & 0.0213 & 0.0720 &  \\ 
  \texttt{ldaAR} & 1.0001 & 0.0155 & 0.0123 & 0.0241 & 0.5016 & 0.0232 & 0.0186 & 0.0541 &  \\ 
  \texttt{ldaCS} & 1.0002 & 0.0171 & 0.0134 & 0.0291 & 0.5013 & 0.0251 & 0.0202 & 0.0629 &  \\ 
  $\texttt{fda-1}$& 1.0005 & 0.0229 & 0.0180 & 0.0523 & 0.5021 & 0.0346 & 0.0276 & 0.1201 & 59.3104 \\ 
  $\texttt{fda-2}$ & 1.0004 & 0.0177 & 0.0138 & 0.0314 & 0.5022 & 0.0268 & 0.0217 & 0.0724 & 87.0183 \\ 
  $\texttt{fda-3}$ & 1.0001 & 0.0173 & 0.0136 & 0.0300 & 0.5010 & 0.0249 & 0.0197 & 0.0620 & 99.9983 \\ 
  $\texttt{fda-4}$ & 1.0000 & 0.0042 & 0.0038 & 0.0017 & 0.5005 & 0.0075 & 0.0065 & 0.0056 & 99.9992 \\ 
  $\texttt{fda-5}$ & 1.0000 & 0.0038 & 0.0035 & 0.0015 & 0.5004 & 0.0066 & 0.0058 & 0.0044 & 100.0000 \\ 
  $\texttt{fda-AIC}$ & 1.0001 & 0.0173 & 0.0136 & 0.0300 & 0.5010 & 0.0249 & 0.0197 & 0.0620 & 99.9983 \\
\end{tabular}
\label{Chapter2-qif-Table:ou3}
\end{table}

\begin{table}[t!]
\caption{Performance of the estimation procedure where the residuals are generated with power exponential covariance function 
where scale parameter $a_{0} = 1$ and shape parameter $b_{0} = 1$. Mean of the estimated coefficients, standard deviation, absolute bias, mean square error $(\times 100)$ and FVE in percentage are summarized.}
\centering
\begin{tabular}{rrrrrrrrrr}
  \\
 Method & \multicolumn{4}{c}{$\beta_{1}$} & \multicolumn{4}{c}{$\beta_{2}$} & FVE \%-age\\ 
 \\
 & Mean & SD & AB  & MSE & Mean & SD & AB & MSE & \\
 \\
 &\multicolumn{9}{c}{$n = 100$}\\
 \\
 \texttt{init} & 0.9968 & 0.0525 & 0.0423 & 0.2758 & 0.4961 & 0.0755 & 0.0603 & 0.5705 &  \\ 
  \texttt{ldaAR} & 0.9985 & 0.0486 & 0.0387 & 0.2361 & 0.4962 & 0.0702 & 0.0562 & 0.4938 &  \\ 
  \texttt{ldaCS} & 0.9978 & 0.0389 & 0.0309 & 0.1514 & 0.4986 & 0.0549 & 0.0438 & 0.3013 &  \\ 
  $\texttt{fda-1}$& 0.9960 & 0.0708 & 0.0564 & 0.5018 & 0.4951 & 0.1010 & 0.0813 & 1.0195 & 73.1399 \\ 
  $\texttt{fda-2}$ & 0.9975 & 0.0439 & 0.0347 & 0.1929 & 0.4980 & 0.0629 & 0.0496 & 0.3948 & 87.5389 \\ 
  $\texttt{fda-3}$ & 0.9975 & 0.0381 & 0.0305 & 0.1453 & 0.4987 & 0.0531 & 0.0425 & 0.2811 & 92.2253 \\ 
  $\texttt{fda-4}$ & 0.9978 & 0.0392 & 0.0311 & 0.1540 & 0.4977 & 0.0540 & 0.0434 & 0.2914 & 94.4211 \\ 
  $\texttt{fda-5}$ & 0.9978 & 0.0388 & 0.0302 & 0.1508 & 0.4975 & 0.0527 & 0.0420 & 0.2773 & 95.6881 \\ 
  $\texttt{fda-6}$ & 0.9979 & 0.0393 & 0.0306 & 0.1546 & 0.4977 & 0.0534 & 0.0424 & 0.2852 & 96.5184 \\ 
  $\texttt{fda-7}$ & 0.9990 & 0.0396 & 0.0308 & 0.1570 & 0.4986 & 0.0535 & 0.0423 & 0.2857 & 97.0882 \\ 
  $\texttt{fda-8}$ & 0.9985 & 0.0405 & 0.0317 & 0.1637 & 0.4981 & 0.0540 & 0.0429 & 0.2915 & 97.5117 \\ 
  $\texttt{fda-AIC}$ & 0.9977 & 0.0454 & 0.0356 & 0.2064 & 0.4969 & 0.0608 & 0.0488 & 0.3697 & 99.0414 \\ 
  \\
 &\multicolumn{9}{c}{$n = 300$}\\
 \\
  \texttt{init} & 0.9975 & 0.0297 & 0.0236 & 0.0885 & 0.4989 & 0.0428 & 0.0352 & 0.1832 &  \\ 
  \texttt{ldaAR} & 0.9972 & 0.0281 & 0.0224 & 0.0793 & 0.4992 & 0.0389 & 0.0316 & 0.1509 &  \\ 
  \texttt{ldaCS} & 0.9988 & 0.0226 & 0.0180 & 0.0513 & 0.4995 & 0.0300 & 0.0238 & 0.0899 &  \\ 
  $\texttt{fda-1}$& 0.9967 & 0.0391 & 0.0310 & 0.1539 & 0.4986 & 0.0588 & 0.0482 & 0.3456 & 73.4971 \\ 
  $\texttt{fda-2}$ & 0.9986 & 0.0254 & 0.0203 & 0.0648 & 0.4990 & 0.0335 & 0.0266 & 0.1122 & 87.5190 \\ 
  $\texttt{fda-3}$ & 0.9987 & 0.0221 & 0.0173 & 0.0490 & 0.4996 & 0.0293 & 0.0233 & 0.0859 & 92.1425 \\ 
  $\texttt{fda-4}$ & 0.9986 & 0.0219 & 0.0171 & 0.0479 & 0.4997 & 0.0294 & 0.0233 & 0.0862 & 94.3152 \\ 
  $\texttt{fda-5}$ & 0.9988 & 0.0213 & 0.0167 & 0.0456 & 0.5000 & 0.0287 & 0.0232 & 0.0823 & 95.5616 \\ 
  $\texttt{fda-6}$ & 0.9989 & 0.0213 & 0.0166 & 0.0454 & 0.5001 & 0.0289 & 0.0232 & 0.0832 & 96.3760 \\ 
  $\texttt{fda-7}$ & 0.9988 & 0.0212 & 0.0166 & 0.0449 & 0.5003 & 0.0291 & 0.0234 & 0.0843 & 96.9441 \\ 
  $\texttt{fda-8}$ & 0.9988 & 0.0212 & 0.0166 & 0.0449 & 0.5001 & 0.0292 & 0.0236 & 0.0852 & 97.3641 \\ 
  $\texttt{fda-AIC}$ & 0.9986 & 0.0218 & 0.0172 & 0.0478 & 0.5005 & 0.0301 & 0.0241 & 0.0904 & 99.0306 \\ 
  \\
  &\multicolumn{9}{c}{$n = 500$}\\
 \\
  \texttt{init} & 0.9996 & 0.0212 & 0.0169 & 0.0450 & 0.4989 & 0.0316 & 0.0252 & 0.0999 &  \\ 
  \texttt{ldaAR} & 0.9993 & 0.0189 & 0.0151 & 0.0356 & 0.4983 & 0.0294 & 0.0237 & 0.0868 &  \\ 
  \texttt{ldaCS} & 0.9996 & 0.0170 & 0.0135 & 0.0288 & 0.4998 & 0.0235 & 0.0193 & 0.0553 &  \\ 
  $\texttt{fda-1}$& 0.9996 & 0.0278 & 0.0222 & 0.0773 & 0.4984 & 0.0422 & 0.0333 & 0.1777 & 73.6172 \\ 
  $\texttt{fda-2}$ & 0.9992 & 0.0186 & 0.0150 & 0.0347 & 0.4995 & 0.0264 & 0.0216 & 0.0694 & 87.5609 \\ 
  $\texttt{fda-3}$ & 1.0002 & 0.0164 & 0.0130 & 0.0268 & 0.5001 & 0.0225 & 0.0183 & 0.0505 & 92.1497 \\ 
  $\texttt{fda-4}$ & 1.0001 & 0.0163 & 0.0129 & 0.0264 & 0.4999 & 0.0226 & 0.0184 & 0.0508 & 94.3084 \\ 
  $\texttt{fda-5}$ & 1.0000 & 0.0160 & 0.0127 & 0.0255 & 0.4996 & 0.0223 & 0.0182 & 0.0496 & 95.5492 \\ 
  $\texttt{fda-6}$ & 1.0000 & 0.0161 & 0.0128 & 0.0260 & 0.4995 & 0.0222 & 0.0182 & 0.0493 & 96.3576 \\ 
  $\texttt{fda-7}$ & 0.9999 & 0.0161 & 0.0128 & 0.0258 & 0.4994 & 0.0219 & 0.0179 & 0.0481 & 96.9249 \\ 
  $\texttt{fda-8}$ & 0.9999 & 0.0161 & 0.0128 & 0.0258 & 0.4994 & 0.0219 & 0.0177 & 0.0478 & 97.3423 \\ 
  $\texttt{fda-AIC}$ & 1.0001 & 0.0164 & 0.0129 & 0.0267 & 0.4991 & 0.0220 & 0.0180 & 0.0485 & 99.0283 \\ 
\end{tabular}
\label{Chapter2-qif-Table:pe1}
\end{table}

\begin{table}[t!]
\caption{Performance of the estimation procedure where the residuals are generated with power exponential covariance function 
where scale parameter $a_{0} = 1$ and shape parameter $b_{0} = 2$. Mean of the estimated coefficients, standard deviation, absolute bias, mean square error $(\times 100)$ and FVE in percentage are summarized.}
\centering
\begin{tabular}{rrrrrrrrrr}
  \\
 Method & \multicolumn{4}{c}{$\beta_{1}$} & \multicolumn{4}{c}{$\beta_{2}$} & FVE \%-age\\ 
 \\
 & Mean & SD & AB  & MSE & Mean & SD & AB & MSE & \\
 \\
 &\multicolumn{9}{c}{$n = 100$}\\
 \\
  \texttt{init} & 0.9967 & 0.0563 & 0.0454 & 0.3170 & 0.4957 & 0.0811 & 0.0649 & 0.6591 &  \\ 
  \texttt{ldaAR} & 0.9980 & 0.0506 & 0.0399 & 0.2562 & 0.4970 & 0.0724 & 0.0578 & 0.5234 &  \\ 
  \texttt{ldaCS} & 0.9982 & 0.0373 & 0.0294 & 0.1389 & 0.4983 & 0.0531 & 0.0425 & 0.2814 &  \\ 
  $\texttt{fda-1}$& 0.9957 & 0.0766 & 0.0611 & 0.5872 & 0.4947 & 0.1094 & 0.0881 & 1.1964 & 85.6976 \\ 
  $\texttt{fda-2}$ & 0.9979 & 0.0440 & 0.0348 & 0.1934 & 0.4975 & 0.0633 & 0.0503 & 0.4010 & 98.9964 \\ 
  $\texttt{fda-3}$ & 0.9993 & 0.0239 & 0.0187 & 0.0569 & 0.4988 & 0.0343 & 0.0273 & 0.1177 & 99.9585 \\ 
  $\texttt{fda-4}$ & 0.9984 & 0.0224 & 0.0176 & 0.0505 & 0.5004 & 0.0305 & 0.0242 & 0.0931 & 99.9987 \\ 
  $\texttt{fda-5}$ & 0.9984 & 0.0224 & 0.0176 & 0.0505 & 0.5004 & 0.0305 & 0.0242 & 0.0931 & 100.0000 \\ 
 $\texttt{fda-AIC}$ & 0.9981 & 0.0347 & 0.0263 & 0.1207 & 0.4999 & 0.0480 & 0.0373 & 0.2303 & 99.5245 \\ 
  \\
  &\multicolumn{9}{c}{$n = 300$}\\
 \\
 \texttt{init} & 0.9973 & 0.0318 & 0.0253 & 0.1014 & 0.4987 & 0.0461 & 0.0378 & 0.2118 &  \\ 
  \texttt{ldaAR} & 0.9972 & 0.0297 & 0.0239 & 0.0888 & 0.4992 & 0.0400 & 0.0323 & 0.1597 &  \\ 
  \texttt{ldaCS} & 0.9990 & 0.0216 & 0.0172 & 0.0468 & 0.4995 & 0.0289 & 0.0229 & 0.0835 &  \\ 
  $\texttt{fda-1}$& 0.9964 & 0.0424 & 0.0336 & 0.1804 & 0.4984 & 0.0637 & 0.0521 & 0.4047 & 86.0991 \\ 
  $\texttt{fda-2}$ & 0.9987 & 0.0253 & 0.0201 & 0.0642 & 0.4990 & 0.0336 & 0.0266 & 0.1129 & 99.0443 \\ 
  $\texttt{fda-3}$ & 0.9992 & 0.0146 & 0.0114 & 0.0213 & 0.5003 & 0.0197 & 0.0159 & 0.0388 & 99.9593 \\ 
  $\texttt{fda-4}$ & 0.9993 & 0.0128 & 0.0100 & 0.0165 & 0.5000 & 0.0176 & 0.0139 & 0.0309 & 99.9987 \\ 
  $\texttt{fda-5}$ & 0.9993 & 0.0128 & 0.0100 & 0.0165 & 0.5000 & 0.0176 & 0.0139 & 0.0309 & 100.0000 \\ 
  $\texttt{fda-AIC}$ & 0.9992 & 0.0219 & 0.0168 & 0.0481 & 0.4995 & 0.0289 & 0.0226 & 0.0836 & 99.3822 \\ 

  \\
  &\multicolumn{9}{c}{$n = 500$}\\
  \\
  \texttt{init} & 0.9994 & 0.0226 & 0.0180 & 0.0512 & 0.4987 & 0.0340 & 0.0270 & 0.1153 &  \\ 
  \texttt{ldaAR} & 0.9993 & 0.0201 & 0.0161 & 0.0403 & 0.4983 & 0.0309 & 0.0250 & 0.0954 &  \\ 
  \texttt{ldaCS} & 0.9993 & 0.0163 & 0.0130 & 0.0266 & 0.4995 & 0.0227 & 0.0186 & 0.0517 &  \\ 
  $\texttt{fda-1}$& 0.9995 & 0.0301 & 0.0240 & 0.0906 & 0.4983 & 0.0456 & 0.0361 & 0.2081 & 86.1915 \\ 
  $\texttt{fda-2}$ & 0.9990 & 0.0186 & 0.0150 & 0.0346 & 0.4992 & 0.0265 & 0.0217 & 0.0699 & 99.0585 \\ 
  $\texttt{fda-3}$ & 1.0004 & 0.0111 & 0.0087 & 0.0122 & 0.5002 & 0.0148 & 0.0121 & 0.0219 & 99.9596 \\ 
  $\texttt{fda-4}$ & 1.0007 & 0.0100 & 0.0080 & 0.0100 & 0.5007 & 0.0129 & 0.0106 & 0.0168 & 99.9987 \\ 
  $\texttt{fda-5}$ & 1.0007 & 0.0100 & 0.0080 & 0.0100 & 0.5007 & 0.0129 & 0.0106 & 0.0168 & 100.0000 \\ 
  $\texttt{fda-AIC}$ & 0.9995 & 0.0170 & 0.0135 & 0.0287 & 0.5000 & 0.0234 & 0.0189 & 0.0546 & 99.2948 \\ 
\end{tabular}
\label{Chapter2-qif-Table:pe2}
\end{table}

\begin{table}[t!]
\caption{Performance of the estimation procedure where the residuals are generated with power exponential covariance function 
where scale parameter $a_{0} = 1$ and shape parameter $b_{0} = 5$. Mean of the estimated coefficients, standard deviation, absolute bias, mean square error $(\times 100)$ and FVE in percentage are summarized.}
\centering
\begin{tabular}{rrrrrrrrrr}
  \\
 Method & \multicolumn{4}{c}{$\beta_{1}$} & \multicolumn{4}{c}{$\beta_{2}$} & FVE \%-age\\ 
 \\
 & Mean & SD & AB  & MSE & Mean & SD & AB & MSE & \\
 \\
 &\multicolumn{9}{c}{$n = 100$}\\
 \\
 \texttt{init} & 0.9970 & 0.0582 & 0.0468 & 0.3390 & 0.4952 & 0.0841 & 0.0676 & 0.7089 &  \\ 
  \texttt{ldaAR} & 0.9977 & 0.0528 & 0.0415 & 0.2789 & 0.4961 & 0.0772 & 0.0621 & 0.5956 &  \\ 
  \texttt{ldaCS} & 0.9996 & 0.0274 & 0.0218 & 0.0747 & 0.4979 & 0.0396 & 0.0321 & 0.1572 &  \\ 
  $\texttt{fda-1}$& 0.9956 & 0.0810 & 0.0646 & 0.6564 & 0.4943 & 0.1157 & 0.0933 & 1.3399 & 92.2949 \\ 
  $\texttt{fda-2}$ & 0.9995 & 0.0375 & 0.0296 & 0.1400 & 0.4966 & 0.0548 & 0.0441 & 0.3013 & 99.7252 \\ 
  $\texttt{fda-3}$ & 0.9993 & 0.0283 & 0.0203 & 0.0801 & 0.5004 & 0.0383 & 0.0274 & 0.1462 & 99.8787 \\ 
  $\texttt{fda-4}$ & 0.9991 & 0.0121 & 0.0053 & 0.0147 & 0.4995 & 0.0165 & 0.0073 & 0.0273 & 99.9461 \\ 
  $\texttt{fda-5}$ & 1.0001 & 0.0116 & 0.0022 & 0.0134 & 0.5005 & 0.0146 & 0.0030 & 0.0212 & 99.9842 \\ 
  $\texttt{fda-AIC}$ & 0.9997 & 0.0371 & 0.0294 & 0.1375 & 0.4966 & 0.0548 & 0.0441 & 0.3012 & 99.7266 \\ 
  \\
 &\multicolumn{9}{c}{$n = 300$}\\
 \\
 \texttt{init} & 0.9972 & 0.0328 & 0.0260 & 0.1082 & 0.4986 & 0.0482 & 0.0395 & 0.2317 &  \\ 
  \texttt{ldaAR} & 0.9971 & 0.0310 & 0.0251 & 0.0968 & 0.4994 & 0.0426 & 0.0344 & 0.1810 &  \\ 
  \texttt{ldaCS} & 0.9995 & 0.0156 & 0.0123 & 0.0244 & 0.4996 & 0.0216 & 0.0171 & 0.0467 &  \\ 
  $\texttt{fda-1}$& 0.9962 & 0.0449 & 0.0356 & 0.2027 & 0.4982 & 0.0673 & 0.0552 & 0.4524 & 92.6741 \\ 
  $\texttt{fda-2}$ & 0.9990 & 0.0213 & 0.0168 & 0.0454 & 0.4994 & 0.0291 & 0.0229 & 0.0846 & 99.8076 \\ 
  $\texttt{fda-3}$ & 0.9995 & 0.0196 & 0.0153 & 0.0384 & 0.4992 & 0.0271 & 0.0213 & 0.0736 & 99.9391 \\ 
  $\texttt{fda-4}$ & 1.0005 & 0.0075 & 0.0044 & 0.0056 & 0.4994 & 0.0105 & 0.0064 & 0.0111 & 99.9716 \\ 
  $\texttt{fda-5}$ & 0.9999 & 0.0023 & 0.0006 & 0.0005 & 0.5000 & 0.0037 & 0.0010 & 0.0014 & 99.9889 \\ 
  $\texttt{fda-AIC}$ & 0.9990 & 0.0213 & 0.0168 & 0.0454 & 0.4994 & 0.0291 & 0.0229 & 0.0846 & 99.8076 \\ 
  \\
 &\multicolumn{9}{c}{$n = 500$}\\
 \\
  \texttt{init} & 0.9994 & 0.0232 & 0.0185 & 0.0539 & 0.4985 & 0.0352 & 0.0280 & 0.1242 &  \\ 
  \texttt{ldaAR} & 0.9989 & 0.0209 & 0.0169 & 0.0437 & 0.4981 & 0.0327 & 0.0264 & 0.1073 &  \\ 
  \texttt{ldaCS} & 0.9991 & 0.0119 & 0.0095 & 0.0142 & 0.4992 & 0.0168 & 0.0134 & 0.0281 &  \\ 
  $\texttt{fda-1}$& 0.9995 & 0.0318 & 0.0254 & 0.1011 & 0.4981 & 0.0483 & 0.0382 & 0.2328 & 92.7586 \\ 
  $\texttt{fda-2}$ & 0.9988 & 0.0158 & 0.0127 & 0.0252 & 0.4988 & 0.0227 & 0.0183 & 0.0517 & 99.8272 \\ 
  $\texttt{fda-3}$ & 0.9989 & 0.0153 & 0.0123 & 0.0235 & 0.4989 & 0.0218 & 0.0176 & 0.0478 & 99.9574 \\ 
  $\texttt{fda-4}$ & 0.9997 & 0.0070 & 0.0050 & 0.0049 & 0.5002 & 0.0105 & 0.0072 & 0.0110 & 99.9811 \\ 
  $\texttt{fda-5}$ & 1.0000 & 0.0024 & 0.0007 & 0.0006 & 0.5001 & 0.0035 & 0.0011 & 0.0012 & 99.9923 \\ 
  $\texttt{fda-AIC}$ & 0.9988 & 0.0158 & 0.0127 & 0.0252 & 0.4988 & 0.0227 & 0.0183 & 0.0517 & 99.8272 \\
\end{tabular}
\label{Chapter2-qif-Table:pe5}
\end{table}

\begin{table}[t!]
\caption{Performance of the estimation procedure where the residuals are generated with rational quadratic covariance function 
where scale parameter $a_{0} = 1$ and shape parameter $b_{0} = 1$. Mean of the estimated coefficients, standard deviation, absolute bias, mean square error $(\times 100)$ and FVE in percentage are summarized.}
\centering
\begin{tabular}{rrrrrrrrrr}
  \\
 Method & \multicolumn{4}{c}{$\beta_{1}$} & \multicolumn{4}{c}{$\beta_{2}$} & FVE \%-age\\ 
 \\
 & Mean & SD & AB  & MSE & Mean & SD & AB & MSE & \\
 \\
 &\multicolumn{9}{c}{$n = 100$}\\
 \\
 \texttt{init} & 0.9967 & 0.0565 & 0.0455 & 0.3193 & 0.4957 & 0.0814 & 0.0651 & 0.6624 &  \\ 
  \texttt{ldaAR} & 0.9982 & 0.0507 & 0.0399 & 0.2567 & 0.4968 & 0.0725 & 0.0580 & 0.5257 &  \\ 
  \texttt{ldaCS} & 0.9983 & 0.0349 & 0.0275 & 0.1215 & 0.4986 & 0.0495 & 0.0396 & 0.2444 &  \\ 
  $\texttt{fda-1}$& 0.9957 & 0.0774 & 0.0617 & 0.5994 & 0.4946 & 0.1105 & 0.0890 & 1.2213 & 87.2296 \\ 
  $\texttt{fda-2}$ & 0.9980 & 0.0413 & 0.0326 & 0.1706 & 0.4979 & 0.0594 & 0.0471 & 0.3521 & 98.4366 \\ 
  $\texttt{fda-3}$ & 0.9989 & 0.0265 & 0.0210 & 0.0704 & 0.4988 & 0.0379 & 0.0302 & 0.1434 & 99.8285 \\ 
  $\texttt{fda-4}$ & 0.9985 & 0.0271 & 0.0212 & 0.0733 & 0.4991 & 0.0372 & 0.0296 & 0.1383 & 99.9800 \\ 
  $\texttt{fda-5}$ & 0.9987 & 0.0259 & 0.0203 & 0.0671 & 0.4987 & 0.0350 & 0.0276 & 0.1224 & 99.9977 \\ 
  $\texttt{fda-AIC}$ & 0.9990 & 0.0267 & 0.0212 & 0.0713 & 0.4990 & 0.0381 & 0.0303 & 0.1446 & 99.8211 \\ 
  \\
  &\multicolumn{9}{c}{$n = 300$}\\
  \\
  \texttt{init} & 0.9973 & 0.0319 & 0.0253 & 0.1021 & 0.4987 & 0.0463 & 0.0381 & 0.2144 &  \\ 
  \texttt{ldaAR} & 0.9971 & 0.0297 & 0.0239 & 0.0889 & 0.4992 & 0.0404 & 0.0327 & 0.1630 &  \\ 
  \texttt{ldaCS} & 0.9991 & 0.0203 & 0.0161 & 0.0412 & 0.4995 & 0.0271 & 0.0215 & 0.0736 &  \\ 
  $\texttt{fda-1}$& 0.9964 & 0.0429 & 0.0339 & 0.1846 & 0.4984 & 0.0643 & 0.0527 & 0.4132 & 87.6273 \\ 
  $\texttt{fda-2}$ & 0.9988 & 0.0238 & 0.0190 & 0.0568 & 0.4991 & 0.0317 & 0.0251 & 0.1002 & 98.4851 \\ 
  $\texttt{fda-3}$ & 0.9991 & 0.0159 & 0.0125 & 0.0255 & 0.5002 & 0.0215 & 0.0173 & 0.0462 & 99.8287 \\ 
  $\texttt{fda-4}$ & 0.9991 & 0.0157 & 0.0121 & 0.0248 & 0.5001 & 0.0214 & 0.0171 & 0.0457 & 99.9800 \\ 
  $\texttt{fda-5}$ & 0.9993 & 0.0144 & 0.0113 & 0.0209 & 0.5005 & 0.0202 & 0.0164 & 0.0408 & 99.9977 \\ 
  $\texttt{fda-AIC}$ & 0.9991 & 0.0159 & 0.0125 & 0.0255 & 0.5002 & 0.0215 & 0.0173 & 0.0462 & 99.8287 \\ 
  \\
  &\multicolumn{9}{c}{$n = 500$}\\
  \\
  \texttt{init} & 0.9995 & 0.0227 & 0.0181 & 0.0514 & 0.4987 & 0.0341 & 0.0271 & 0.1160 &  \\ 
  \texttt{ldaAR} & 0.9993 & 0.0200 & 0.0160 & 0.0399 & 0.4982 & 0.0310 & 0.0250 & 0.0962 &  \\ 
  \texttt{ldaCS} & 0.9994 & 0.0153 & 0.0122 & 0.0235 & 0.4996 & 0.0213 & 0.0174 & 0.0454 &  \\ 
  $\texttt{fda-1}$& 0.9995 & 0.0304 & 0.0243 & 0.0924 & 0.4982 & 0.0461 & 0.0365 & 0.2125 & 87.7225 \\ 
  $\texttt{fda-2}$ & 0.9991 & 0.0176 & 0.0142 & 0.0310 & 0.4993 & 0.0249 & 0.0204 & 0.0620 & 98.5022 \\ 
  $\texttt{fda-3}$ & 1.0003 & 0.0121 & 0.0096 & 0.0146 & 0.5002 & 0.0163 & 0.0133 & 0.0265 & 99.8293 \\ 
  $\texttt{fda-4}$ & 1.0005 & 0.0120 & 0.0096 & 0.0145 & 0.5004 & 0.0160 & 0.0131 & 0.0256 & 99.9801 \\ 
  $\texttt{fda-5}$ & 1.0002 & 0.0114 & 0.0090 & 0.0129 & 0.5000 & 0.0151 & 0.0121 & 0.0227 & 99.9977 \\ 
  $\texttt{fda-AIC}$ & 1.0003 & 0.0121 & 0.0096 & 0.0146 & 0.5002 & 0.0163 & 0.0133 & 0.0265 & 99.8293 \\ 
\end{tabular}
\label{Chapter2-qif-Table:quad1}
\end{table}

\begin{table}[t!]
\caption{Performance of the estimation procedure where the residuals are generated with rational quadratic covariance function 
where scale parameter $a_{0} = 1$ and shape parameter $b_{0} = 2$. Mean of the estimated coefficients, standard deviation, absolute bias, mean square error $(\times 100)$ and FVE in percentage are summarized.}
\centering
\begin{tabular}{rrrrrrrrrr}
  \\
 Method & \multicolumn{4}{c}{$\beta_{1}$} & \multicolumn{4}{c}{$\beta_{2}$} & FVE \%-age\\ 
 \\
 & Mean & SD & AB  & MSE & Mean & SD & AB & MSE & \\
 \\
 &\multicolumn{9}{c}{$n = 100$}\\
 \\
 \texttt{init} & 0.9967 & 0.0547 & 0.0441 & 0.2993 & 0.4960 & 0.0788 & 0.0628 & 0.6206 &  \\ 
  \texttt{ldaAR} & 0.9983 & 0.0498 & 0.0393 & 0.2473 & 0.4968 & 0.0714 & 0.0571 & 0.5091 &  \\ 
  \texttt{ldaCS} & 0.9977 & 0.0425 & 0.0337 & 0.1805 & 0.4981 & 0.0602 & 0.0481 & 0.3618 &  \\ 
  $\texttt{fda-1}$& 0.9957 & 0.0730 & 0.0583 & 0.5343 & 0.4951 & 0.1042 & 0.0839 & 1.0868 & 78.3365 \\ 
  $\texttt{fda-2}$ & 0.9972 & 0.0479 & 0.0381 & 0.2297 & 0.4975 & 0.0687 & 0.0544 & 0.4721 & 96.4030 \\ 
  $\texttt{fda-3}$ & 0.9981 & 0.0368 & 0.0293 & 0.1358 & 0.4982 & 0.0520 & 0.0418 & 0.2699 & 99.4758 \\ 
  $\texttt{fda-4}$ & 0.9982 & 0.0377 & 0.0299 & 0.1424 & 0.4979 & 0.0528 & 0.0421 & 0.2791 & 99.9264 \\ 
  $\texttt{fda-5}$ & 0.9986 & 0.0356 & 0.0280 & 0.1266 & 0.4975 & 0.0483 & 0.0384 & 0.2335 & 99.9902 \\ 
  $\texttt{fda-AIC}$ & 0.9981 & 0.0368 & 0.0293 & 0.1358 & 0.4982 & 0.0520 & 0.0418 & 0.2699 & 99.4758 \\ 
  \\
 &\multicolumn{9}{c}{$n = 300$}\\
 \\
  \texttt{init} & 0.9974 & 0.0309 & 0.0246 & 0.0959 & 0.4988 & 0.0445 & 0.0365 & 0.1977 &  \\ 
  \texttt{ldaAR} & 0.9972 & 0.0290 & 0.0233 & 0.0848 & 0.4991 & 0.0393 & 0.0317 & 0.1542 &  \\ 
  \texttt{ldaCS} & 0.9987 & 0.0246 & 0.0195 & 0.0606 & 0.4994 & 0.0327 & 0.0259 & 0.1066 &  \\ 
  $\texttt{fda-1}$& 0.9966 & 0.0403 & 0.0320 & 0.1633 & 0.4986 & 0.0607 & 0.0497 & 0.3682 & 78.7132 \\ 
  $\texttt{fda-2}$ & 0.9985 & 0.0276 & 0.0220 & 0.0764 & 0.4989 & 0.0364 & 0.0290 & 0.1327 & 96.4278 \\ 
  $\texttt{fda-3}$ & 0.9985 & 0.0218 & 0.0171 & 0.0477 & 0.4999 & 0.0290 & 0.0232 & 0.0842 & 99.4738 \\ 
  $\texttt{fda-4}$ & 0.9985 & 0.0218 & 0.0171 & 0.0478 & 0.4998 & 0.0293 & 0.0235 & 0.0858 & 99.9259 \\ 
  $\texttt{fda-5}$ & 0.9988 & 0.0198 & 0.0157 & 0.0394 & 0.5004 & 0.0271 & 0.0224 & 0.0733 & 99.9900 \\
  $\texttt{fda-AIC}$ & 0.9985 & 0.0218 & 0.0171 & 0.0477 & 0.4999 & 0.0290 & 0.0232 & 0.0842 & 99.4738 \\ 
  \\
 &\multicolumn{9}{c}{$n = 500$}\\
 \\
 \texttt{init} & 0.9995 & 0.0221 & 0.0176 & 0.0490 & 0.4989 & 0.0330 & 0.0263 & 0.1086 &  \\ 
  \texttt{ldaAR} & 0.9993 & 0.0197 & 0.0158 & 0.0389 & 0.4983 & 0.0302 & 0.0245 & 0.0910 &  \\ 
  \texttt{ldaCS} & 0.9994 & 0.0184 & 0.0146 & 0.0339 & 0.4996 & 0.0257 & 0.0211 & 0.0658 &  \\ 
  $\texttt{fda-1}$& 0.9996 & 0.0288 & 0.0229 & 0.0826 & 0.4984 & 0.0435 & 0.0344 & 0.1892 & 78.8101 \\ 
  $\texttt{fda-2}$ & 0.9991 & 0.0201 & 0.0162 & 0.0405 & 0.4993 & 0.0286 & 0.0235 & 0.0819 & 96.4486 \\ 
  $\texttt{fda-3}$ & 1.0003 & 0.0162 & 0.0128 & 0.0262 & 0.5000 & 0.0222 & 0.0181 & 0.0491 & 99.4752 \\ 
  $\texttt{fda-4}$ & 1.0003 & 0.0162 & 0.0129 & 0.0263 & 0.4999 & 0.0224 & 0.0182 & 0.0501 & 99.9261 \\ 
  $\texttt{fda-5}$ & 0.9999 & 0.0151 & 0.0120 & 0.0227 & 0.4994 & 0.0208 & 0.0170 & 0.0434 & 99.9900 \\ 
  $\texttt{fda-AIC}$ & 1.0003 & 0.0162 & 0.0128 & 0.0262 & 0.5000 & 0.0222 & 0.0181 & 0.0491 & 99.4752 \\
\end{tabular}
\label{Chapter2-qif-Table:quad2}
\end{table}

\begin{table}[t!]
\caption{Performance of the estimation procedure where the residuals are generated with rational quadratic covariance function 
where scale parameter $a_{0} = 1$ and shape parameter $b_{0} = 5$. Mean of the estimated coefficients, standard deviation, absolute bias, mean square error $(\times 100)$ and FVE in percentage are summarized.}
\centering
\begin{tabular}{rrrrrrrrrr}
  \\
 Method & \multicolumn{4}{c}{$\beta_{1}$} & \multicolumn{4}{c}{$\beta_{2}$} & FVE \%-age\\ 
 \\
 & Mean & SD & AB  & MSE & Mean & SD & AB & MSE & \\
 \\
 &\multicolumn{9}{c}{$n = 100$}\\
 \\
  \texttt{init} & 0.9967 & 0.0504 & 0.0407 & 0.2545 & 0.4965 & 0.0725 & 0.0577 & 0.5251 &  \\ 
  \texttt{ldaAR} & 0.9986 & 0.0466 & 0.0369 & 0.2173 & 0.4965 & 0.0671 & 0.0539 & 0.4508 &  \\ 
  \texttt{ldaCS} & 0.9970 & 0.0473 & 0.0377 & 0.2238 & 0.4978 & 0.0668 & 0.0531 & 0.4458 &  \\ 
  $\texttt{fda-1}$& 0.9959 & 0.0648 & 0.0517 & 0.4205 & 0.4960 & 0.0922 & 0.0742 & 0.8505 & 62.1253 \\ 
  $\texttt{fda-2}$ & 0.9964 & 0.0505 & 0.0405 & 0.2561 & 0.4976 & 0.0721 & 0.0572 & 0.5199 & 89.2976 \\ 
  $\texttt{fda-3}$ & 0.9970 & 0.0462 & 0.0368 & 0.2136 & 0.4974 & 0.0644 & 0.0513 & 0.4142 & 97.4903 \\ 
  $\texttt{fda-4}$ & 0.9981 & 0.0464 & 0.0363 & 0.2149 & 0.4957 & 0.0647 & 0.0520 & 0.4201 & 99.4795 \\ 
  $\texttt{fda-5}$ & 0.9986 & 0.0441 & 0.0345 & 0.1941 & 0.4952 & 0.0621 & 0.0497 & 0.3872 & 99.9012 \\  
  $\texttt{fda-AIC}$ & 0.9981 & 0.0464 & 0.0363 & 0.2149 & 0.4957 & 0.0647 & 0.0520 & 0.4201 & 99.4795 \\ 
  \\
 &\multicolumn{9}{c}{$n = 300$}\\
 \\
 \texttt{init} & 0.9977 & 0.0286 & 0.0229 & 0.0820 & 0.4991 & 0.0406 & 0.0332 & 0.1646 &  \\ 
  \texttt{ldaAR} & 0.9975 & 0.0271 & 0.0215 & 0.0737 & 0.4991 & 0.0367 & 0.0296 & 0.1346 &  \\ 
  \texttt{ldaCS} & 0.9983 & 0.0272 & 0.0216 & 0.0739 & 0.4994 & 0.0362 & 0.0288 & 0.1309 &  \\ 
  $\texttt{fda-1}$& 0.9972 & 0.0355 & 0.0283 & 0.1264 & 0.4991 & 0.0539 & 0.0441 & 0.2896 & 62.2661 \\ 
  $\texttt{fda-2}$ & 0.9983 & 0.0289 & 0.0229 & 0.0835 & 0.4990 & 0.0384 & 0.0308 & 0.1473 & 89.2161 \\ 
  $\texttt{fda-3}$ & 0.9981 & 0.0266 & 0.0209 & 0.0712 & 0.4992 & 0.0355 & 0.0282 & 0.1257 & 97.4664 \\ 
  $\texttt{fda-4}$ & 0.9981 & 0.0260 & 0.0205 & 0.0677 & 0.4992 & 0.0351 & 0.0278 & 0.1227 & 99.4727 \\ 
  $\texttt{fda-5}$ & 0.9984 & 0.0244 & 0.0192 & 0.0594 & 0.4996 & 0.0332 & 0.0267 & 0.1097 & 99.8990 \\ 
  $\texttt{fda-AIC}$ & 0.9981 & 0.0260 & 0.0205 & 0.0677 & 0.4992 & 0.0351 & 0.0278 & 0.1227 & 99.4727 \\ 
  \\
 &\multicolumn{9}{c}{$n = 500$}\\
 \\
 \texttt{init} & 0.9996 & 0.0207 & 0.0165 & 0.0430 & 0.4992 & 0.0304 & 0.0245 & 0.0921 &  \\ 
  \texttt{ldaAR} & 0.9993 & 0.0187 & 0.0151 & 0.0351 & 0.4985 & 0.0281 & 0.0229 & 0.0792 &  \\ 
  \texttt{ldaCS} & 0.9996 & 0.0201 & 0.0160 & 0.0404 & 0.4997 & 0.0282 & 0.0230 & 0.0792 &  \\ 
  $\texttt{fda-1}$& 0.9997 & 0.0256 & 0.0204 & 0.0654 & 0.4988 & 0.0385 & 0.0304 & 0.1484 & 62.3377 \\ 
  $\texttt{fda-2}$ & 0.9992 & 0.0210 & 0.0168 & 0.0440 & 0.4995 & 0.0299 & 0.0244 & 0.0892 & 89.2460 \\ 
  $\texttt{fda-3}$ & 1.0001 & 0.0193 & 0.0153 & 0.0372 & 0.4999 & 0.0270 & 0.0219 & 0.0728 & 97.4696 \\ 
  $\texttt{fda-4}$ & 0.9997 & 0.0188 & 0.0150 & 0.0355 & 0.4993 & 0.0269 & 0.0219 & 0.0724 & 99.4726 \\ 
  $\texttt{fda-5}$ & 0.9994 & 0.0180 & 0.0142 & 0.0322 & 0.4989 & 0.0260 & 0.0211 & 0.0674 & 99.8988 \\ 
  $\texttt{fda-AIC}$ & 0.9997 & 0.0188 & 0.0150 & 0.0355 & 0.4993 & 0.0269 & 0.0219 & 0.0724 & 99.4726 \\
\end{tabular}
\label{Chapter2-qif-Table:quad5}
\end{table}

\bibliographystyle{myjmva}
\bibliography{master-reference.bib}

\begin{thebibliography}{38}
\expandafter\ifx\csname natexlab\endcsname\relax\def\natexlab#1{#1}\fi
\providecommand{\bibinfo}[2]{#2}
\ifx\xfnm\relax \def\xfnm[#1]{\unskip,\space#1}\fi
\bibitem[{Bai et~al.(2008)Bai, Zhu and Fung}]{bai2008partial}
\bibinfo{author}{Y.~Bai}, \bibinfo{author}{Z.~Zhu}, \bibinfo{author}{W.~K. Fung}, \bibinfo{title}{Partial linear models for longitudinal data based on quadratic inference functions}, \bibinfo{journal}{Scandinavian Journal of Statistics} \bibinfo{volume}{35} (\bibinfo{year}{2008}) \bibinfo{pages}{104--118}.
\bibitem[{Balan et~al.(2005)Balan, Schiopu-Kratina et~al.}]{balan2005asymptotic}
\bibinfo{author}{R.~M. Balan}, \bibinfo{author}{I.~Schiopu-Kratina}, et~al., \bibinfo{title}{Asymptotic results with generalized estimating equations for longitudinal data}, \bibinfo{journal}{The Annals of Statistics} \bibinfo{volume}{33} (\bibinfo{year}{2005}) \bibinfo{pages}{522--541}.
\bibitem[{Chen et~al.(2019)Chen, Li, Liang and Lin}]{chen2019functional}
\bibinfo{author}{X.~Chen}, \bibinfo{author}{H.~Li}, \bibinfo{author}{H.~Liang}, \bibinfo{author}{H.~Lin}, \bibinfo{title}{Functional response regression analysis}, \bibinfo{journal}{Journal of Multivariate Analysis} \bibinfo{volume}{169} (\bibinfo{year}{2019}) \bibinfo{pages}{218--233}.
\bibitem[{Chiou et~al.(2003)Chiou, M{\"u}ller and Wang}]{chiou2003functional}
\bibinfo{author}{J.-M. Chiou}, \bibinfo{author}{H.-G. M{\"u}ller}, \bibinfo{author}{J.-L. Wang}, \bibinfo{title}{Functional quasi-likelihood regression models with smooth random effects}, \bibinfo{journal}{Journal of the Royal Statistical Society: Series B (Statistical Methodology)} \bibinfo{volume}{65} (\bibinfo{year}{2003}) \bibinfo{pages}{405--423}.
\bibitem[{Dauxois et~al.(1982)Dauxois, Pousse and Romain}]{dauxois1982asymptotic}
\bibinfo{author}{J.~Dauxois}, \bibinfo{author}{A.~Pousse}, \bibinfo{author}{Y.~Romain}, \bibinfo{title}{Asymptotic theory for the principal component analysis of a vector random function: some applications to statistical inference}, \bibinfo{journal}{Journal of multivariate analysis} \bibinfo{volume}{12} (\bibinfo{year}{1982}) \bibinfo{pages}{136--154}.
\bibitem[{Diggle et~al.(2002)Diggle, Diggle, Heagerty, Liang, Heagerty, Zeger et~al.}]{diggle2002analysis}
\bibinfo{author}{P.~Diggle}, \bibinfo{author}{P.~J. Diggle}, \bibinfo{author}{P.~Heagerty}, \bibinfo{author}{K.-Y. Liang}, \bibinfo{author}{P.~J. Heagerty}, \bibinfo{author}{S.~Zeger}, et~al., \bibinfo{title}{Analysis of longitudinal data}, \bibinfo{publisher}{Oxford University Press}, \bibinfo{year}{2002}.
\bibitem[{Fan and Gijbels(1996)}]{fan1996local}
\bibinfo{author}{J.~Fan}, \bibinfo{author}{I.~Gijbels}, \bibinfo{title}{Local polynomial modelling and its applications}, \bibinfo{publisher}{Chapman \& Hall/CRC}, \bibinfo{year}{1996}.
\bibitem[{Fan et~al.(1999)Fan, Zhang et~al.}]{fan1999statistical}
\bibinfo{author}{J.~Fan}, \bibinfo{author}{W.~Zhang}, et~al., \bibinfo{title}{Statistical estimation in varying coefficient models}, \bibinfo{journal}{The annals of Statistics} \bibinfo{volume}{27} (\bibinfo{year}{1999}) \bibinfo{pages}{1491--1518}.
\bibitem[{Friston et~al.(1994)Friston, Holmes, Worsley, Poline, Frith and Frackowiak}]{friston1994statistical}
\bibinfo{author}{K.~J. Friston}, \bibinfo{author}{A.~P. Holmes}, \bibinfo{author}{K.~J. Worsley}, \bibinfo{author}{J.-P. Poline}, \bibinfo{author}{C.~D. Frith}, \bibinfo{author}{R.~S. Frackowiak}, \bibinfo{title}{Statistical parametric maps in functional imaging: a general linear approach}, \bibinfo{journal}{Human brain mapping} \bibinfo{volume}{2} (\bibinfo{year}{1994}) \bibinfo{pages}{189--210}.
\bibitem[{Gajardo et~al.(2021)Gajardo, Carroll, Chen, Dai, Fan, Hadjipantelis, Han, Ji, Mueller and Wang}]{fdapace}
\bibinfo{author}{A.~Gajardo}, \bibinfo{author}{C.~Carroll}, \bibinfo{author}{Y.~Chen}, \bibinfo{author}{X.~Dai}, \bibinfo{author}{J.~Fan}, \bibinfo{author}{P.~Z. Hadjipantelis}, \bibinfo{author}{K.~Han}, \bibinfo{author}{H.~Ji}, \bibinfo{author}{H.-G. Mueller}, \bibinfo{author}{J.-L. Wang}, \bibinfo{title}{fdapace: Functional Data Analysis and Empirical Dynamics}, \bibinfo{year}{2021}. \bibinfo{note}{R package version 0.5.7}.
\bibitem[{Givens and Hoeting(2012)}]{givens2012computational}
\bibinfo{author}{G.~H. Givens}, \bibinfo{author}{J.~A. Hoeting}, \bibinfo{title}{Computational statistics}, volume \bibinfo{volume}{703}, \bibinfo{publisher}{John Wiley \& Sons}, \bibinfo{year}{2012}.
\bibitem[{Hall et~al.(2007)Hall, Horowitz et~al.}]{hall2007methodology}
\bibinfo{author}{P.~Hall}, \bibinfo{author}{J.~L. Horowitz}, et~al., \bibinfo{title}{Methodology and convergence rates for functional linear regression}, \bibinfo{journal}{The Annals of Statistics} \bibinfo{volume}{35} (\bibinfo{year}{2007}) \bibinfo{pages}{70--91}.
\bibitem[{Hall and Hosseini-Nasab(2006)}]{hall2006properties}
\bibinfo{author}{P.~Hall}, \bibinfo{author}{M.~Hosseini-Nasab}, \bibinfo{title}{On properties of functional principal components analysis}, \bibinfo{journal}{Journal of the Royal Statistical Society: Series B (Statistical Methodology)} \bibinfo{volume}{68} (\bibinfo{year}{2006}) \bibinfo{pages}{109--126}.
\bibitem[{Hall and Hosseini-Nasab(2009)}]{hall2009theory}
\bibinfo{author}{P.~Hall}, \bibinfo{author}{M.~Hosseini-Nasab}, \bibinfo{title}{Theory for high-order bounds in functional principal components analysis}, in: \bibinfo{booktitle}{Mathematical Proceedings of the Cambridge Philosophical Society}, volume \bibinfo{volume}{146}, \bibinfo{organization}{Cambridge University Press}, p. \bibinfo{pages}{225}.
\bibitem[{Hand and Crowder(2017)}]{hand2017practical}
\bibinfo{author}{D.~Hand}, \bibinfo{author}{M.~Crowder}, \bibinfo{title}{Practical longitudinal data analysis}, \bibinfo{publisher}{Routledge}, \bibinfo{year}{2017}.
\bibitem[{Hansen(1982)}]{hansen1982large}
\bibinfo{author}{L.~P. Hansen}, \bibinfo{title}{Large sample properties of generalized method of moments estimators}, \bibinfo{journal}{Econometrica: Journal of the Econometric Society}  (\bibinfo{year}{1982}) \bibinfo{pages}{1029--1054}.
\bibitem[{Hsing and Eubank(2015)}]{hsing2015theoretical}
\bibinfo{author}{T.~Hsing}, \bibinfo{author}{R.~Eubank}, \bibinfo{title}{Theoretical foundations of functional data analysis, with an introduction to linear operators}, volume \bibinfo{volume}{997}, \bibinfo{publisher}{John Wiley \& Sons}, \bibinfo{year}{2015}.
\bibitem[{J~Mercer(1909)}]{j1909xvi}
\bibinfo{author}{B.~J~Mercer}, \bibinfo{title}{Xvi. functions of positive and negative type, and their connection the theory of integral equations}, \bibinfo{journal}{Phil. Trans. R. Soc. Lond. A} \bibinfo{volume}{209} (\bibinfo{year}{1909}) \bibinfo{pages}{415--446}.
\bibitem[{Karhunen(1946)}]{karhunen1946spektraltheorie}
\bibinfo{author}{K.~Karhunen}, \bibinfo{title}{Zur spektraltheorie stochastischer prozesse}, \bibinfo{journal}{Ann. Acad. Sci. Fennicae, AI} \bibinfo{volume}{34} (\bibinfo{year}{1946}).
\bibitem[{Li and Hsing(2010)}]{li2010uniform}
\bibinfo{author}{Y.~Li}, \bibinfo{author}{T.~Hsing}, \bibinfo{title}{Uniform convergence rates for nonparametric regression and principal component analysis in functional/longitudinal data}, \bibinfo{journal}{The Annals of Statistics} \bibinfo{volume}{38} (\bibinfo{year}{2010}) \bibinfo{pages}{3321--3351}.
\bibitem[{Li et~al.(2022)Li, Qiu and Xu}]{li2022multivariate}
\bibinfo{author}{Y.~Li}, \bibinfo{author}{Y.~Qiu}, \bibinfo{author}{Y.~Xu}, \bibinfo{title}{From multivariate to functional data analysis: Fundamentals, recent developments, and emerging areas}, \bibinfo{journal}{Journal of Multivariate Analysis} \bibinfo{volume}{188} (\bibinfo{year}{2022}) \bibinfo{pages}{104806}.
\bibitem[{Liang and Zeger(1986)}]{liang1986longitudinal}
\bibinfo{author}{K.-Y. Liang}, \bibinfo{author}{S.~L. Zeger}, \bibinfo{title}{Longitudinal data analysis using generalized linear models}, \bibinfo{journal}{Biometrika} \bibinfo{volume}{73} (\bibinfo{year}{1986}) \bibinfo{pages}{13--22}.
\bibitem[{Liang et~al.(2015)Liang, Zou, Guo, Li, Zhang, Zhang, Huang and Chen}]{liang2015assessing}
\bibinfo{author}{X.~Liang}, \bibinfo{author}{T.~Zou}, \bibinfo{author}{B.~Guo}, \bibinfo{author}{S.~Li}, \bibinfo{author}{H.~Zhang}, \bibinfo{author}{S.~Zhang}, \bibinfo{author}{H.~Huang}, \bibinfo{author}{S.~X. Chen}, \bibinfo{title}{Assessing beijing's pm2. 5 pollution: severity, weather impact, apec and winter heating}, \bibinfo{journal}{Proceedings of the Royal Society A: Mathematical, Physical and Engineering Sciences} \bibinfo{volume}{471} (\bibinfo{year}{2015}) \bibinfo{pages}{20150257}.
\bibitem[{Lindquist(2008)}]{lindquist2008statistical}
\bibinfo{author}{M.~A. Lindquist}, \bibinfo{title}{The statistical analysis of fmri data}, \bibinfo{journal}{Statistical science} \bibinfo{volume}{23} (\bibinfo{year}{2008}) \bibinfo{pages}{439--464}.
\bibitem[{Lo{\`e}ve(1946)}]{loeve1946functions}
\bibinfo{author}{M.~Lo{\`e}ve}, \bibinfo{title}{Functions aleatoire de second ordre}, \bibinfo{journal}{Revue science} \bibinfo{volume}{84} (\bibinfo{year}{1946}) \bibinfo{pages}{195--206}.
\bibitem[{McCullagh and Nelder(1989)}]{mccullagh1989generalized}
\bibinfo{author}{P.~McCullagh}, \bibinfo{author}{J.~Nelder}, \bibinfo{title}{Generalized linear models ii}, \bibinfo{year}{1989}.
\bibitem[{Penny et~al.(2011)Penny, Friston, Ashburner, Kiebel and Nichols}]{penny2011statistical}
\bibinfo{author}{W.~D. Penny}, \bibinfo{author}{K.~J. Friston}, \bibinfo{author}{J.~T. Ashburner}, \bibinfo{author}{S.~J. Kiebel}, \bibinfo{author}{T.~E. Nichols}, \bibinfo{title}{Statistical parametric mapping: the analysis of functional brain images}, \bibinfo{publisher}{Elsevier}, \bibinfo{year}{2011}.
\bibitem[{Qu and Li(2006)}]{qu2006quadratic}
\bibinfo{author}{A.~Qu}, \bibinfo{author}{R.~Li}, \bibinfo{title}{Quadratic inference functions for varying-coefficient models with longitudinal data}, \bibinfo{journal}{Biometrics} \bibinfo{volume}{62} (\bibinfo{year}{2006}) \bibinfo{pages}{379--391}.
\bibitem[{Qu et~al.(2000)Qu, Lindsay and Li}]{qu2000improving}
\bibinfo{author}{A.~Qu}, \bibinfo{author}{B.~G. Lindsay}, \bibinfo{author}{B.~Li}, \bibinfo{title}{Improving generalised estimating equations using quadratic inference functions}, \bibinfo{journal}{Biometrika} \bibinfo{volume}{87} (\bibinfo{year}{2000}) \bibinfo{pages}{823--836}.
\bibitem[{Tian et~al.(2014)Tian, Xue and Liu}]{tian2014penalized}
\bibinfo{author}{R.~Tian}, \bibinfo{author}{L.~Xue}, \bibinfo{author}{C.~Liu}, \bibinfo{title}{Penalized quadratic inference functions for semiparametric varying coefficient partially linear models with longitudinal data}, \bibinfo{journal}{Journal of Multivariate Analysis} \bibinfo{volume}{132} (\bibinfo{year}{2014}) \bibinfo{pages}{94--110}.
\bibitem[{Turner et~al.(2017)Turner, Li, Gallis, Prague and Murray}]{turner2017review}
\bibinfo{author}{E.~L. Turner}, \bibinfo{author}{F.~Li}, \bibinfo{author}{J.~A. Gallis}, \bibinfo{author}{M.~Prague}, \bibinfo{author}{D.~M. Murray}, \bibinfo{title}{Review of recent methodological developments in group-randomized trials: part 1—design}, \bibinfo{journal}{American journal of public health} \bibinfo{volume}{107} (\bibinfo{year}{2017}) \bibinfo{pages}{907--915}.
\bibitem[{Xiong et~al.(2017)Xiong, Zhou, Nisi, Martin, Karaman, Cai and Weaver}]{xiong2017brain}
\bibinfo{author}{Y.~Xiong}, \bibinfo{author}{X.~J. Zhou}, \bibinfo{author}{R.~A. Nisi}, \bibinfo{author}{K.~R. Martin}, \bibinfo{author}{M.~M. Karaman}, \bibinfo{author}{K.~Cai}, \bibinfo{author}{T.~E. Weaver}, \bibinfo{title}{Brain white matter changes in cpap-treated obstructive sleep apnea patients with residual sleepiness}, \bibinfo{journal}{Journal of Magnetic Resonance Imaging} \bibinfo{volume}{45} (\bibinfo{year}{2017}) \bibinfo{pages}{1371--1378}.
\bibitem[{Yao et~al.(2005)Yao, M{\"u}ller and Wang}]{yao2005functional}
\bibinfo{author}{F.~Yao}, \bibinfo{author}{H.-G. M{\"u}ller}, \bibinfo{author}{J.-L. Wang}, \bibinfo{title}{Functional data analysis for sparse longitudinal data}, \bibinfo{journal}{Journal of the American statistical association} \bibinfo{volume}{100} (\bibinfo{year}{2005}) \bibinfo{pages}{577--590}.
\bibitem[{Yu et~al.(2020)Yu, Tong and Li}]{yu2020note}
\bibinfo{author}{H.~Yu}, \bibinfo{author}{G.~Tong}, \bibinfo{author}{F.~Li}, \bibinfo{title}{A note on the estimation and inference with quadratic inference functions for correlated outcomes}, \bibinfo{journal}{Communications in Statistics-Simulation and Computation}  (\bibinfo{year}{2020}) \bibinfo{pages}{1--12}.
\bibitem[{Zhang and Li(2022)}]{zhang2022unified}
\bibinfo{author}{H.~Zhang}, \bibinfo{author}{Y.~Li}, \bibinfo{title}{Unified principal component analysis for sparse and dense functional data under spatial dependency}, \bibinfo{journal}{Journal of Business \& Economic Statistics} \bibinfo{volume}{40} (\bibinfo{year}{2022}) \bibinfo{pages}{1523--1537}.
\bibitem[{Zhang and Banerjee(2021)}]{zhang2021spatial}
\bibinfo{author}{L.~Zhang}, \bibinfo{author}{S.~Banerjee}, \bibinfo{title}{Spatial factor modeling: A bayesian matrix-normal approach for misaligned data}, \bibinfo{journal}{Biometrics}  (\bibinfo{year}{2021}).
\bibitem[{Zhang and Wang(2016)}]{zhang2016sparse}
\bibinfo{author}{X.~Zhang}, \bibinfo{author}{J.-L. Wang}, \bibinfo{title}{From sparse to dense functional data and beyond}, \bibinfo{journal}{The Annals of Statistics} \bibinfo{volume}{44} (\bibinfo{year}{2016}) \bibinfo{pages}{2281--2321}.
\bibitem[{Zhou and Qu(2012)}]{zhou2012informative}
\bibinfo{author}{J.~Zhou}, \bibinfo{author}{A.~Qu}, \bibinfo{title}{Informative estimation and selection of correlation structure for longitudinal data}, \bibinfo{journal}{Journal of the American Statistical Association} \bibinfo{volume}{107} (\bibinfo{year}{2012}) \bibinfo{pages}{701--710}.

\end{thebibliography}

\end{document}